\long\def\remove#1{}
\newtheorem{theorem}{Theorem}[section] 
\newtheorem{lemma}[theorem]{Lemma}
\newtheorem{claim}{Claim}
\newtheorem{corollary}[theorem]{Corollary}
\newtheorem{proposition}[theorem]{Proposition}
\newtheorem{definition}{Definition}
\newtheorem{remark}{Remark}[section]
\newtheorem{condition}{Condition}[section]
\newcommand\id{\mathrm{id}}
\newcommand\cc{\mathrm{cc}}
\newcommand\mC{\mathrm{C}}
\newcommand\trunc{\mathrm{Tr}}
\newcommand\N{\mathbb{N}}
\newcommand\R{\mathbb{R}}
\newcommand\Z{\mathbb{Z}}
\newcommand\powset{\mathrm{pow}}
\newcommand\diam{\mathrm{diam}}
\newcommand\MM{\mathrm{MM}}
\newcommand\M{\mathrm{M}}
\newcommand\cech	{\mathrm{{C}ech}} 
\newcommand\res{\mathrm{res}}
\newcommand\vrt{\mathrm{V}}
\newcommand\edg{\mathrm{E}}
\newcommand{\initsampling}		{{\nu}}
\newcommand{\mycechconst}	{{c(s+2)}}
\definecolor{darkblue}{rgb}{0.0, 0.0, 0.8}
\definecolor{darkred}{rgb}{0.8, 0.0, 0.0}
\definecolor{darkgreen}{rgb}{0.0, 0.8, 0.0}
\newcommand{\eps}               {{\varepsilon}}
\newcommand{\denselist}{\itemsep 0pt\parsep=1pt\partopsep 0pt}
\title{Multiscale Mapper: Topological Summarization via Codomain Covers}
\author{Tamal K. Dey\thanks{Department of Computer Science and Engineering, The Ohio State University. \texttt{tamaldey, yusu@cse.ohio-state.edu}}, Facundo M\'emoli\thanks{Department of Mathematics and Department of Computer Science and Engineering, The Ohio State University. \texttt{memoli@math.osu.edu}}, Yusu Wang$^*$}
\date{}
\begin{document}

\maketitle

\begin{abstract}  
Summarizing topological information from datasets and maps defined on them is a central theme in topological data analysis. \textsf{Mapper}, a tool for such summarization, takes as input both a possibly high dimensional dataset and a map defined on the data, and produces a summary of the data by using a cover of the codomain of the map. This cover, via a pullback operation to the domain, produces 
a simplicial complex connecting the data points. 

{The resulting view of the data through a cover of the codomain offers flexibility in analyzing the data. However, it offers only a view at a fixed scale at which the cover is constructed.}
Inspired by the concept, we explore a notion of a 
tower of covers which induces a 
tower of simplicial complexes connected by simplicial maps, which we call {\em multiscale mapper}. 
We study the resulting structure, its stability,
and design practical algorithms to compute its persistence diagrams efficiently. 
Specifically, when the domain is a simplicial complex and the map is a real-valued piecewise-linear function, the algorithm can compute the exact persistence diagram only from the 1-skeleton of the input complex. 
For general maps, we present a combinatorial version of the algorithm that acts only on \emph{vertex sets} connected by the 1-skeleton graph, and this algorithm approximates the exact persistence diagram thanks to a stability result that we show to hold. We also relate the multiscale mapper with the \v{C}ech complexes arising from a natural pullback pseudometric defined on the input domain.

\end{abstract}

\setcounter{page}{1}

\tableofcontents

\section{Introduction}
\label{sec:intro}

Recent years have witnessed significant progress in applying topological ideas to analyzing complex and diverse data.  Topological ideas can be particularly powerful in deriving a succinct and meaningful summary of input data.  For example, the theory of \emph{persistent homology} built upon
\cite{ELZ02,frosini,vanessa}  and other fundamental developments 
\cite{circlemap,multi-dim,CCGGO09,CS10,CEH07,extended,DFW-sph,ZC05},  has provided a powerful and flexible framework for summarizing information of an input space or a scalar field into a much simpler object called the persistence diagram/barcode.

Modern data can be complex both in terms of the domain where they come from and in terms of properties/observations associated with them which are
often modeled as maps. 
For example, we can have a set of patients, where each patient is associated with multiple biological markers, giving rise to a multivariate map from the space of patients to an image domain that may or may not be the Euclidean space. 
To this end, we need to develop theoretically justified methods to analyze not only real-valued scalar fields, but also more complex maps defined on a given domain, such as multivariate, circle valued, sphere valued maps, etc.  

There has been interesting work in this direction, including multidimensional persistence \cite{multi-dim,betti-landi} and persistent homology for circular valued maps \cite{circlemap}. However, summarizing multivariate maps using these techniques appears to be challenging. 
Our approach takes a different direction, and is inspired by and based on the mapper methodology, recently introduced by Singh et al.~ in \cite{mapper}. 
Taking an observation made in \cite{mapper} regarding 
the behavior of Mapper under a change in the covers as a starting point, 
we study a multiscale version of mapper, which we will henceforth refer to as 
{\em multiscale mapper}, that is capable of producing a multiscale 
summary using a cover of the codomain at different scales.

Given a map $f: X \to Z$, Singh et al. proposed a novel concept to create a topological metaphor, called the \emph{mapper}, for the structure behind $f$ by pulling back a cover of the space $Z$ to a cover on $X$ through $f$. 
This mapper methodology is general: it can work with any (reasonably tame) continuous maps between two topological spaces, and it converts complex maps and covers of the target space, into simplicial complexes, which are much easier to manipulate computationally. It is also powerful and flexible-- one can view the map $f$ and a finite cover of the space $Z$ as the lens through which the input data $X$ is examined. By choosing different maps and covers, the resulting mapper representation captures different aspects of the input data. Indeed, the mapper methodology has been successfully applied to analyzing various types of data, see e.g. \cite{survey,cancer}, and it is a main component behind the data analytics platform developed by the company Ayasdi.

\subsection{Contributions}
Given an input map $f: X \to Z$ and a finite cover $\mathcal{U}$ of $Z$, the induced mapper $\mathrm{M}(\mathcal{U},f)$ is a simplicial complex encoding the structure of $f$ through the lens of $Z$. 
However, the simplicial complex $\mathrm{M}(\mathcal{U},f)$ provides only \emph{one} snapshot of $X$ at a \emph{fixed} scale as determined by the scale of the cover $\mathcal{U}$. 
Using the idea of persistence homology, we study the evolution of the mapper $\mathrm{M}(f, \mathcal{U}_\eps)$ for a \emph{tower of covers} $\mathfrak{U} = \{ \mathcal{U}_\eps \}_\eps$ at multiple scales (indexed by $\eps$). 

As an intuitive example, consider a real-valued function $f: X \to \mathbb{R}$, and a cover $\mathcal{U}_\eps$ of $\mathbb{R}$ consisting of all possible intervals of length $\eps$. Intuitively, as $\eps$ tends to 0, the corresponding Mapper $\mathrm{M}(f, \mathcal{U}_\eps)$ approaches the Reeb graph of $f$. As $\eps$ increases, we look at the Reeb graph at coarser and coarser resolution. The multiscale mapper in this case roughly encodes this simplification process.

The \emph{multiscale mapper} $\MM(\mathfrak{U},f)$, which we formally define in \S \ref{sec:multimapper}, consists of a sequence of simplicial complexes connected with simplicial maps. Upon passing to homology with fields coefficients, the information in $\MM(\mathfrak{U},f)$ can be further summarized in the corresponding persistence diagram. 
In other words, we can now summarize an input described
by a multivariate (or circle/sphere valued) map into a single persistence diagram, much like in traditional persistence homology for real-valued functions. 

In  \S\ref{sec:stability}, we discuss the stability of the multiscale mapper, under changes in the input map and/or in the tower of covers $\mathfrak{U}$. Stability is a  highly desirable property for a summary as it implies robustness to noise in data and in measurements. Interestingly, analogous to the case of homology versus persistence homology, mapper does not satisfy a stability property, whereas multiscale mapper does enjoy stability as we show in this paper.

To facilitate the broader usage of mapper and multiscale mapper as a data analysis tool, we develop efficient algorithms for computing
and approximating mapper and multiscale mapper. 
In particular, even for piecewise-linear functions defined on a simplicial complex, 
the standard algorithm 
needs to determine for each simplex the subset (partial simplex)
on which the function value falls within a certain range.

In \S \ref{sec:exactcomp}, we show that for such an input, it
is sufficient to consider the restriction of the function to the 1-skeleton of 
the complex for computing the mapper and the multiscale mapper. Since the
$1$-skeleton (a graph) is typically much smaller in size than the full complex, this helps improving the time efficiency of  
computing the mapper and multiscale mapper outputs. 

In \S \ref{sec:approxcomp}, 
we consider the more general case of a map $f: X \to Z$ where $X$ is a simplicial complex but $Z$ is not necessarily real-valued. We show that there is an even simpler combinatorial version of the multiscale mapper, which only acts on \emph{vertex sets} of $X$ with connectivity given by the 1-skeleton graph of $X$ \footnote{We note that a variant of this combinatorial version is what is currently used in the publicly available software for mapper in practice.}. 
The cost we pay here is that the resulting persistence diagram \emph{approximates} (instead of computing exactly) that of the standard multiscale mapper, and the tower of covers of $Z$ needs to satisfy a ``goodness" condition.

In \S \ref{sec:metricview}, we show that given a tower of covers $\mathfrak{U}$ and a map $f:X\rightarrow Z$ there exists a natural pull-back pseudo-metric $d_{\mathfrak{U},f}$ defined on the input domain $X$. With such a pseudo-metric on $X$,  we can now construct the standard \v{C}ech filtration $\mathfrak{C}=\{\cech_\eps(X)\}_\eps$ (or Rips filtration) in $X$ directly, instead of computing the Nerve complex of the pull-back covers as required by mapper.
The resulting filtration $\mathfrak{C}$ is \emph{connected by inclusion maps} instead of {\em simplicial maps}.  This is easier for computational purposes even though one
has a method to compute the persistence
diagram of a filtration involving arbitrary simplicial maps \cite{DFW-sph}.
Furthermore, it turns out that the resulting sequence of \v{C}ech complexes $\mathfrak{C}$ interleaves with the sequence of complexes $\MM(\mathfrak{U},f)$, implying that their corresponding persistence diagrams approximate each other.

Some technical details and proofs are relegated to the Appendix. 

\section{Topological background and motivation}\label{sec:background}

In this section we recall several facts about topological spaces and simplicial complexes \cite{munkres}. Let $K$ and $L$ be two finite simplicial complexes over the vertex sets $V_K$ and $V_L$, respectively. A set map $\phi:V_K\rightarrow V_L$ is a \emph{simplicial map} if $\phi(\sigma)\in L$ for all $\sigma\in K$. 



By an open cover of a topological space $X$ we mean a collection $\mathcal{U}=\{U_\alpha\}_{\alpha\in A}$ of open sets such that $\bigcup_{\alpha\in A} U_\alpha= X.$ In this paper, whenever referring to an open cover, we will always assume that each $U_\alpha$ is path connected.

\begin{definition}[Nerve of a cover]
Given a finite cover ${\mathcal U} = \{
 U_{\alpha}\}_{\alpha \in A}$ of a topological space $X$, we define the {\em
 nerve} of the cover ${\mathcal U}$ to be the simplicial complex
 $N({\mathcal U})$ whose vertex set is the index set $A$, and where
 a subset $\{ \alpha _0 , \alpha _1, \ldots , \alpha _k \}\subseteq A$ spans a
 $k$-simplex in $N({\mathcal U})$ if and only if $U_{\alpha _0 } \cap
 U_{\alpha _1 } \cap \ldots \cap U_{\alpha _k} \neq \emptyset$.  
\end{definition}

 \noindent Suppose that we are given a topological space $X$ equipped with a
 continuous map $f: X \rightarrow Z$ into a parameter space $Z$, 
where $Z$ is equipped with an open cover ${\mathcal U} = \{
 U_{\alpha} \}_{\alpha \in A}$
for some finite index set
 $A$.  Since $f$ is continuous, the sets $\{f^{-1}(U_{\alpha}),\,\alpha\in A\}$ 
 form an open cover of $X$.  For each $\alpha$, we can now consider
 the decomposition of $f^{-1}(U_{\alpha })$ into its path connected
 components, so we write $f^{-1}(U_{\alpha }) = \bigcup _{i =
 1}^{j_{\alpha}} V_{\alpha , i }$, where $j_{\alpha}$ is the number of
 path connected components $V_{\alpha,i}$'s in $f^{-1}(U_{\alpha })$.  We write
 $f^\ast({\mathcal U}) $ for the cover of $X$ obtained this way
 from the cover ${\mathcal U}$ of $Z$ and refer to it as the \emph{pullback} cover of $X$ induced by $\mathcal{U}$ via $f$.

Notice that there are pathological examples of $f$
where $f^{-1}(U_\alpha)$ may shatter into infinitely many path components. This motivates us to
consider \emph{well-behaved} functions $f$: we require that for every 
path connected open set $U\subseteq Z$, the preimage $f^{-1}(U)$ has \emph{finitely} many path connected components.
An example of well-behaved functions is given by piecewise-linear real valued functions defined on finite simplicial complexes. It follows that for any well-behaved function $f:X\rightarrow Z$ and any finite open cover $\mathcal{U}$ of $Z$, the open cover $f^\ast(\mathcal{U})$ is also finite. 

\begin{quote}
\emph{If not stated otherwise, all functions and all covers are assumed to be well-behaved and finite, respectively. Consequently, all nerves of pullbacks of finite covers will also be finite.} 
\end{quote}


\begin{definition}[Mapper \cite{mapper}]\label{def:mapper}
Let $X$ and $Z$ be topological spaces and let $f:X\rightarrow Z$ be a well-behaved and continuous map. Let $\mathcal{U} = \{U_\alpha\}_{\alpha\in A}$ be a finite open cover of $Z$.
The \emph{mapper construction} arising from these data is defined to be the nerve simplicial complex of the pullback cover:
$\mathrm{M}(\mathcal{U},f) := {N}(f^\ast(\mathcal{U})).$ 
\end{definition}


\begin{remark}[For Definition \ref{def:mapper}]
This construction is quite general. It encompasses both the Reeb graph and merge trees at once: consider $X$ a topological space and $f:X\rightarrow \mathbb{R}$. Then, consider the following two options for $\mathcal{U}=\{U_\alpha\}_{\alpha\in A}$, the other ingredient of the construction:
\begin{itemize}
\item $U_\alpha = (-\infty,\alpha)$ for $\alpha\in A = \mathbb{R}$. This corresponds to sublevel sets which in turn lead to merge trees.
\item $U_\alpha = (\alpha-\varepsilon,\alpha+\varepsilon)$ for $\alpha\in A=\mathbb{R}$, for some fixed $\varepsilon>0$. This corresponds to ($\varepsilon$-thick) level sets, which induce a relaxed notion of Reeb graphs.
\end{itemize}
In these two examples, for simplicity of presentation, the set $A$  is allowed to have infinite cardinality. Also, note one can take \emph{any} open cover of $\mathbb{R}$ in this definition. This may give rise to other constructions beyond merge trees or Reeb graphs. For instance, one may choose any point $r\in \mathbb{R}$ and let $U_\alpha=(r-\alpha,r+\alpha)$ for each $\alpha\in A=\mathbb{R}$ or other constructions.
\end{remark}

\subsection*{Maps between covers.}
 If we have two covers ${\mathcal U} = \{ U_{\alpha} \} _{\alpha \in A}$ and ${\mathcal V} = \{ V_{\beta} \} _{\beta \in B}$ of a space $X$, a {\em map of covers} from ${\mathcal U} $ to ${\mathcal V}$ is a set map $\xi: A \rightarrow B$ so that $U_{\alpha} \subseteq V_{\xi(\alpha)}$ for all $\alpha \in A$. By an abuse of notation we also use $\xi$ to indicate the map $\mathcal{U}\rightarrow\mathcal{V}.$  Given such a map of covers, there is an induced simplicial map $N(\xi) : N({\mathcal U}) \rightarrow N({\mathcal V})$, given on vertices by the map $\xi$.  
Furthermore, if $\mathcal{U}\stackrel{\xi}{\rightarrow}\mathcal{V}\stackrel{\zeta}{\rightarrow}\mathcal{W}$ are three different covers of a topological space with the intervening maps of covers between them, then $N(\zeta\circ\xi) = N(\zeta)\circ N(\xi)$ as well.

%
%

The following simple lemma will be very useful later on.

\begin{lemma}[Induced maps are contiguous]\label{lemma:cover-contiguity}
Let $\zeta,\xi:\mathcal{U}\rightarrow\mathcal{V}$ be any two maps of covers. Then, the simplicial maps $N(\zeta)$ and $N(\xi)$ are contiguous.
\end{lemma}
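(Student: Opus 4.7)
The plan is to directly unpack the definition of contiguity and apply the defining property of a map of covers. Recall that two simplicial maps $\phi,\psi\colon K\to L$ are \emph{contiguous} precisely when, for every simplex $\sigma\in K$, the union $\phi(\sigma)\cup\psi(\sigma)$ spans a simplex of $L$. So the task reduces to showing that for every simplex $\{\alpha_0,\ldots,\alpha_k\}$ of $N(\mathcal{U})$, the set $\{\zeta(\alpha_0),\ldots,\zeta(\alpha_k),\xi(\alpha_0),\ldots,\xi(\alpha_k)\}$ spans a simplex of $N(\mathcal{V})$.

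First, I would fix an arbitrary simplex $\{\alpha_0,\ldots,\alpha_k\}\in N(\mathcal{U})$, which by the nerve definition means $\bigcap_{i=0}^k U_{\alpha_i}\neq\emptyset$. Next I would invoke the defining condition of a map of covers applied to both $\zeta$ and $\xi$: for each $i$, we have $U_{\alpha_i}\subseteq V_{\zeta(\alpha_i)}$ and $U_{\alpha_i}\subseteq V_{\xi(\alpha_i)}$, hence $U_{\alpha_i}\subseteq V_{\zeta(\alpha_i)}\cap V_{\xi(\alpha_i)}$. Intersecting over $i$ yields
$$
\emptyset\neq\bigcap_{i=0}^k U_{\alpha_i}\;\subseteq\;\bigcap_{i=0}^k V_{\zeta(\alpha_i)}\;\cap\;\bigcap_{i=0}^k V_{\xi(\alpha_i)},
$$
so the combined collection $\{V_{\zeta(\alpha_i)}\}_i\cup\{V_{\xi(\alpha_j)}\}_j$ has nonempty total intersection. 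By the defining property of the nerve, the index set $\{\zeta(\alpha_0),\ldots,\zeta(\alpha_k),\xi(\alpha_0),\ldots,\xi(\alpha_k)\}$ therefore spans a simplex of $N(\mathcal{V})$, which is exactly $N(\zeta)(\sigma)\cup N(\xi)(\sigma)$.

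Since $\sigma$ was arbitrary, contiguity of $N(\zeta)$ and $N(\xi)$ follows. There is no real obstacle here; the argument is essentially a one-line consequence of the monotonicity of intersections under subset containment, combined with the nerve's nonempty-intersection criterion. The only minor point worth noting explicitly in the written proof is that we must verify the union $N(\zeta)(\sigma)\cup N(\xi)(\sigma)$ (viewed as a set of vertices, possibly with repetitions collapsed) is indeed a simplex, which is immediate since nerves are closed under taking subsets of vertex collections with nonempty common intersection.
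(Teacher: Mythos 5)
Your proof is correct and follows essentially the same argument as the paper's: both reduce contiguity to the nonempty-intersection criterion for the nerve and use $U_\alpha\subseteq V_{\zeta(\alpha)}\cap V_{\xi(\alpha)}$ to show $\bigcap_{\beta\in\zeta(\sigma)\cup\xi(\sigma)}V_\beta\supseteq\bigcap_{\alpha\in\sigma}U_\alpha\neq\emptyset$. No issues.
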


Recall that two simplicial maps $h_1, h_2: K \rightarrow L$ are \emph{contiguous} if for all $\sigma\in K$ it holds that $h_1(\sigma)\cup h_2(\sigma)\in L$. In particular, contiguous maps induce identical maps at the homology level \cite{munkres}. Lemma \ref{lemma:cover-contiguity} implies that the map $H_\ast(N(\mathcal{U}))\rightarrow H_\ast(N(\mathcal{V}))$ thus induced can be deemed canonical.

\begin{proof}[Proof of lemma \ref{lemma:cover-contiguity}]
Write $\mathcal{U}=\{U_\alpha\}_{\alpha\in A}$ and $\mathcal{V}=\{V_\beta\}_{\beta\in B}.$ Then, for all $\alpha\in A$ we have both
$$U_\alpha\subseteq V_{\zeta(\alpha)}\hspace{.1in}\mbox{and}\hspace{.1in}U_\alpha\subseteq V_{\xi(\alpha)}.$$
This means that $U_\alpha\subseteq V_{\zeta(\alpha)}\cap V_{\xi(\alpha)}$ for all $\alpha\in A$. Now take any $\sigma\in N(\mathcal{U})$. We need to prove that $\zeta(\sigma)\cup\xi(\sigma)\in N(\mathcal{V}).$ For this write

\begin{align*}\bigcap_{\beta\in\zeta(\sigma)\cup\xi(\sigma)} V_\beta &= \left(\bigcap_{\alpha\in \sigma} V_{\zeta(\alpha)}\right) \cap \left(\bigcap_{\alpha\in \sigma} V_{\xi(\alpha)}\right)\\
&=\bigcap_{\alpha\in \sigma} \left(V_{\zeta(\alpha)}\cap V_{\xi(\alpha)} \right)\\
&\supseteq \bigcap_{\alpha\in \sigma} U_\alpha \neq \emptyset,\end{align*}

where the last step follows from assuming that $\sigma\in N(\mathcal{U}).$
\end{proof}

\subsection*{Pullbacks.} 
When we consider a space $X$ equipped with a continuous map $f: X \rightarrow Z$ to
 a topological space $Z$, and we are given a map of covers $\xi:{\mathcal
 U} \rightarrow {\mathcal V}$ between covers of $Z$, there is a corresponding map of
 covers between the respective pullback covers of $X$: 
$f^\ast(\xi):f^\ast({\mathcal U}) \longrightarrow f^\ast({\mathcal
 V}).$ 

Indeed, we only need to note that if $U
 \subseteq V$, then $f^{-1}(U) \subseteq f^{-1}(V)$, and
 therefore it is clear that each path connected component of $f^{-1}(U)$
 is included in exactly one path connected component of $f^{-1}(V)$.  More precisely, let $\mathcal{U}=\{U_\alpha\}_{\alpha\in A}$, $\mathcal{V}=\{V_\beta\}_{\beta\in B}$, with $U_\alpha\subseteq V_{\xi(\alpha)}$ for $\alpha\in A$. 
Let $\widehat{U}_{\alpha,i}$, $i\in\{1,\ldots,n_\alpha\}$ denote  the connected components of $f^{-1}(U_\alpha)$ and $\widehat{V}_{\beta,j}$, $j\in\{1,\ldots,m_\beta\}$ denote the connected components of $f^{-1}(V_\beta)$.
Then, the map of covers $f^\ast(\xi)$ from $f^\ast({\mathcal U})$ to
 $f^\ast({\mathcal V})$ is given by requiring that each set
 $\widehat{U}_{\alpha,i}$ is sent to the \emph{unique} set of the form
 $\widehat{V}_{\xi(\alpha),j}$ so that $\widehat{U}_{\alpha,i} \subseteq \widehat{V}_{\xi(\alpha),j}$. 

Furthermore, if $\mathcal{U}\stackrel{\xi}{\rightarrow}\mathcal{V}\stackrel{\zeta}{\rightarrow}\mathcal{W}$ are three different covers of a topological space with the intervening maps of covers between them, then $f^\ast(\zeta\circ\xi) = f^\ast(\zeta)\circ f^\ast(\xi).$

\section{Multiscale Mapper}
\label{sec:multimapper}

In the definition below, \emph{objects} can be covers, simplicial complexes, or vector spaces.
\begin{definition}[Tower]\label{def:hfc}
A \emph{tower} $\mathfrak{W}$ with resolution $r\in\R$ 
is any collection 
$\mathfrak{W}=\big\{\mathcal{W}_\eps\big\}_{\eps\geq r}$
of objects $\mathcal{W}_\varepsilon$ together with maps $w_{\varepsilon,\varepsilon'}: {\mathcal W} _\varepsilon \rightarrow {\mathcal W}_{\varepsilon'}$  
so that $w_{\varepsilon,\varepsilon}=\id$ and 
$w_{\varepsilon',\varepsilon''}\circ w_{\varepsilon,\varepsilon'} = w_{\varepsilon,\varepsilon''}$ for all $r\leq \varepsilon\leq \varepsilon'\leq \varepsilon''$.
Sometimes we write
$\mathfrak{W}=\big\{\mathcal{W}_\varepsilon\overset{\tiny{w_{\varepsilon,\varepsilon'}}}{\longrightarrow} \mathcal{W}_{\varepsilon'}\big\}_{r\leq\varepsilon\leq \varepsilon'}$ to denote the collection with the maps. 
Given such a tower $\mathfrak{W}$, $\res(\mathfrak{W})$ refers to its resolution. 

When $\mathfrak{W}$ is a collection of \emph{finite} covers equipped with maps of covers between them, we call it \emph{a tower of covers}. 
When $\mathfrak{W}$ is a collection of \emph{finite} simplicial complexes equipped with simplicial maps between them, we call it \emph{a tower of simplicial complexes}. 
\end{definition}

The notion of resolution, and the variable $\varepsilon$, intuitively specify the granularity of the covers and the simplicial complexes induced by them. 

The pullback properties described at the end of \S\ref{sec:background} make it possible to take the pullback of a given  tower of covers of a space via a given continuous function into another space, so that we obtain:

\begin{proposition}\label{coro:pullback-cover}
Let $\mathfrak{U}$ be a tower of covers of $Z$ and $f:X\rightarrow Z$ be a continuous function.
Then, $f^\ast(\mathfrak{U})$ is a tower of covers of $X$.
\end{proposition}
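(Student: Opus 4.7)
The plan is to unpack the definition of a tower of covers (Definition 3.1) and verify each ingredient in turn, borrowing the machinery already set up in the \textbf{Pullbacks} subsection of \S\ref{sec:background}.

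First I would fix notation: write the input tower as $\mathfrak{U}=\{\mathcal{U}_\eps \overset{u_{\eps,\eps'}}{\longrightarrow} \mathcal{U}_{\eps'}\}_{r\le \eps\le \eps'}$ of finite covers of $Z$ with resolution $r=\res(\mathfrak{U})$. For each $\eps\ge r$, the pullback $f^\ast(\mathcal{U}_\eps)$ is a finite open cover of $X$: openness follows from continuity of $f$, the covering property from the fact that $\mathcal{U}_\eps$ covers $Z$, and finiteness from the well-behavedness assumption on $f$ (which guarantees each $f^{-1}(U_\alpha)$ has finitely many path components) together with finiteness of $\mathcal{U}_\eps$. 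Each element of $f^\ast(\mathcal{U}_\eps)$ is by construction path connected, matching our standing assumption on covers.

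Next I would supply the connecting maps. For every $r\le \eps\le \eps'$, the map of covers $u_{\eps,\eps'}:\mathcal{U}_\eps\to\mathcal{U}_{\eps'}$ induces, by the construction recalled in the Pullbacks subsection, a map of covers
\[
f^\ast(u_{\eps,\eps'}) : f^\ast(\mathcal{U}_\eps) \longrightarrow f^\ast(\mathcal{U}_{\eps'})
\]
sending each connected component $\widehat{U}_{\alpha,i}$ of $f^{-1}(U_\alpha)$ to the unique component of $f^{-1}(U_{u_{\eps,\eps'}(\alpha)})$ that contains it. These are the candidate maps for the pullback tower.

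Finally I would check the two tower axioms. Identity: $u_{\eps,\eps}=\id$, so the uniqueness clause in the pullback construction forces each $\widehat{U}_{\alpha,i}$ to be sent to itself, i.e.\ $f^\ast(u_{\eps,\eps})=\id$. Composition: for $r\le \eps\le \eps'\le \eps''$, by hypothesis on $\mathfrak{U}$ we have $u_{\eps',\eps''}\circ u_{\eps,\eps'}=u_{\eps,\eps''}$, and the last displayed equality of the Pullbacks subsection gives $f^\ast(u_{\eps',\eps''}\circ u_{\eps,\eps'}) = f^\ast(u_{\eps',\eps''})\circ f^\ast(u_{\eps,\eps'})$. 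Combining these two equalities yields $f^\ast(u_{\eps',\eps''})\circ f^\ast(u_{\eps,\eps'}) = f^\ast(u_{\eps,\eps''})$, which is exactly the cocycle condition required by Definition \ref{def:hfc}.

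I do not anticipate a substantive obstacle: the proposition is essentially a packaging result, and the only mildly delicate point is making sure the pullback maps really are well defined (the uniqueness argument in the Pullbacks subsection), and that finiteness of $f^\ast(\mathcal{U}_\eps)$ is invoked explicitly, since Definition \ref{def:hfc} insists on finite covers.
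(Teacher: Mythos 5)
Your proof is correct and follows the same route as the paper, which simply invokes the pullback properties from the end of \S\ref{sec:background} (well-definedness and functoriality of $f^\ast$ on maps of covers) without spelling them out. You have merely made explicit the verification of finiteness, the identity axiom, and the cocycle condition that the paper leaves implicit.
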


In general, given a tower of covers $\mathfrak{W}$ of a space $X$, the nerve 
of each cover  in $\mathfrak{W}$ together with 
simplicial maps induced by each map of $\mathfrak{W}$ provides a tower of simplicial complexes which we denote by $N(\mathfrak{W})$. 
\begin{definition}[Multiscale Mapper]
Let $X$ and $Z$ be topological spaces and $f:X\rightarrow Z$ be a continuous map. Let $\mathfrak{U}$ be a tower of covers of $Z$.
Then, the \emph{multiscale mapper} is defined to be the tower of simplicial complexes defined by the nerve of the pullback:
$$\mathrm{MM}(\mathfrak{U},f):=N(f^\ast(\mathfrak{U})).$$
\end{definition}

Consider for example a sequence $\res(\mathfrak{U})\leq \varepsilon_1<\varepsilon_2<\ldots<\varepsilon_n$ of  $n$ distinct real numbers. Then, the definition of multiscale mapper $\mathrm{MM}(\mathfrak{U},f)$ gives rise to the following:
\begin{equation}\label{small-sc}
N(f^\ast(\mathcal{U}_{\varepsilon_1}))\rightarrow N(f^\ast(\mathcal{U}_{\varepsilon_2}))\rightarrow\cdots\rightarrow N(f^\ast(\mathcal{U}_{\varepsilon_n}))
\end{equation}
which is a sequence of simplicial complexes connected by simplicial maps. 

Applying to them the homology functor $\mathrm{H}_k(\cdot)$, $k=0,1,2,\ldots$, with coefficients in a field, one obtains persistence modules \cite{EH09}: tower 
 of vector spaces connected by linear maps for which efficient 
persistence algorithm is known \cite{DFW-sph}:
\begin{equation}\label{small-sc-vec}
\mathrm{H}_k\big(N(f^\ast(\mathcal{U}_{\varepsilon_1}))\big)\rightarrow \cdots\rightarrow \mathrm{H}_k\big(N(f^\ast(\mathcal{U}_{\varepsilon_n}))\big).
\end{equation}

More importantly, the information contained in a persistence module can
be summarized by  simple descriptors: its associated persistence diagrams. As pointed
out in~\cite{CCGGO09}, a finiteness condition is required, but given our assumptions that the covers are finite and that the function $f$ is well-behaved we obtain that the homology groups of all nerves have finite dimensions.  Now one can summarize the persistence module with a finite persistent diagram for the sequence $\MM(\mathfrak{U},f)$, denoted by $\mathrm{D_k}\mathrm{MM}(\mathfrak{U},f)$ for each dimension $k \in \N$ (see \cite{EH09} for background on persistence diagrams).

\section{Stability} 
\label{sec:stability}
To be useful in practice, the multiscale mapper should be stable
against the perturbations in the maps and the covers.
we show that such a stability is enjoyed by the multiscale mapper
under some natural condition on the tower of covers.  
The notion of stability in the context of
topological data analysis has been recently studied by
many researchers, see e.g.
\cite{bauer,bubenik,CCGGO09,struct,CEH07}.
In particular, Cohen-Steiner et al. \cite{CEH07} showed that 
persistence diagrams are \emph{stable} in the {\em bottleneck distance} meaning that small perturbations to a map yield small variations in the computed persistence diagrams.  In \cite{CCGGO09}, stability is expressed directly at the (algebraic) level of persistence modules (\ref{small-sc-vec}) via a quantitative structural condition called \emph{interleaving} of pairs of persistence modules. 
Before we state our stability results, we identify compatible notions of interleaving for tower of covers and tower of simplicial complexes, as a way to measure the ``closeness" between two tower of covers (or two tower of simplicial complexes). 

\subsection{Interleaving of towers of covers and simplicial complexes}
\label{sec:stab-cov}

In this section we consider 
towers of covers and simplicial complexes indexed over $\mathbb{R}$. In practice, we often have towers indexed by a discrete set in $\mathbb{R}$. 
Any such tower can be extended to a 
tower of covers (or simplicial complexes) indexed over $\mathbb{R}$ 
by taking the covers for any intermediate open interval 
$(\eps,\eps')\subset \mathbb{R}$ to be same as that at $\eps\in \mathbb{R}$.

\begin{definition}[Interleaving of towers of covers] Let $\mathfrak{U} = \{\mathcal{U}_\varepsilon\}$ and $\mathfrak{V}=\{\mathcal{V}_\varepsilon\}$ be two towers of covers of a topological space $X$ such that $\res(\mathfrak{U})=\res(\mathfrak{V})=r$. Given $\eta\geq 0$,  we say that $\mathfrak{U}$ and $\mathfrak{V}$ are $\eta$-interleaved if one can find maps of covers $\zeta_\varepsilon:\mathcal{U}_\varepsilon \rightarrow \mathcal{V}_{\varepsilon+\eta}$ and $\xi_{\varepsilon'}:\mathcal{V}_{\varepsilon'} \rightarrow \mathcal{U}_{\varepsilon'+\eta}$ for all $\varepsilon,\varepsilon'\geq r.$

\end{definition}

\begin{proposition} \label{prop:interleaving}\label{lemma:interleaving}
(i) If  $\mathfrak{U}$ and $\mathfrak{V}$ are $\eta_1$-interleaved and $\mathfrak{V}$ and $\mathfrak{W}$ are $\eta_2$-interleaved, then, $\mathfrak{U}$ and $\mathfrak{W}$ are $(\eta_1+\eta_2)$-interleaved. 
(ii) Let $f:X\rightarrow Z$ be a continuous function and $\mathfrak{U}$ and 
$\mathfrak{V}$ be two $\eta$-interleaved tower of covers 
of $Z$. Then, $f^\ast(\mathfrak{U})$ and 
$f^\ast(\mathfrak{V})$ are also $\eta$-interleaved.
\end{proposition}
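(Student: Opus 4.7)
My plan is to handle the two parts separately, both by directly exhibiting the required maps of covers, and in each case the work reduces to a very short bookkeeping argument that reuses two structural facts already established in Section \ref{sec:background}: (a) a composition of maps of covers is again a map of covers (immediate from the defining inclusion $U_\alpha\subseteq V_{\xi(\alpha)}$), and (b) the pullback operation $f^\ast$ sends maps of covers of $Z$ to maps of covers of $X$, compatibly with composition (the last paragraph of Section \ref{sec:background}).

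For part (i), I will unpack the hypotheses as follows. Write the given interleavings as maps of covers $\zeta^{UV}_\varepsilon:\mathcal{U}_\varepsilon\to\mathcal{V}_{\varepsilon+\eta_1}$ and $\xi^{VU}_\varepsilon:\mathcal{V}_\varepsilon\to\mathcal{U}_{\varepsilon+\eta_1}$, together with $\zeta^{VW}_\varepsilon:\mathcal{V}_\varepsilon\to\mathcal{W}_{\varepsilon+\eta_2}$ and $\xi^{WV}_\varepsilon:\mathcal{W}_\varepsilon\to\mathcal{V}_{\varepsilon+\eta_2}$. Then I define the candidate interleaving maps between $\mathfrak{U}$ and $\mathfrak{W}$ by composition:
\[
\zeta^{UW}_\varepsilon \;:=\; \zeta^{VW}_{\varepsilon+\eta_1}\circ \zeta^{UV}_\varepsilon \;:\; \mathcal{U}_\varepsilon\to \mathcal{W}_{\varepsilon+\eta_1+\eta_2},
\qquad
\xi^{WU}_\varepsilon \;:=\; \xi^{VU}_{\varepsilon+\eta_2}\circ \xi^{WV}_\varepsilon \;:\; \mathcal{W}_\varepsilon\to \mathcal{U}_{\varepsilon+\eta_1+\eta_2}.
\]
Using fact (a), each composition is again a map of covers, so these are valid interleaving maps with parameter $\eta_1+\eta_2$, which is exactly what the definition of $(\eta_1+\eta_2)$-interleaving requires. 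Note that the current definition of interleaving of towers of covers demands only the existence of such maps of covers at every level and no further compatibility, so no additional commutative diagrams need to be verified.

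For part (ii), the argument is an immediate application of fact (b). Given the interleaving maps $\zeta_\varepsilon:\mathcal{U}_\varepsilon\to\mathcal{V}_{\varepsilon+\eta}$ and $\xi_{\varepsilon'}:\mathcal{V}_{\varepsilon'}\to\mathcal{U}_{\varepsilon'+\eta}$ of covers of $Z$, I apply the pullback construction to each, obtaining
\[
f^\ast(\zeta_\varepsilon)\;:\; f^\ast(\mathcal{U}_\varepsilon)\to f^\ast(\mathcal{V}_{\varepsilon+\eta}),
\qquad
f^\ast(\xi_{\varepsilon'})\;:\; f^\ast(\mathcal{V}_{\varepsilon'})\to f^\ast(\mathcal{U}_{\varepsilon'+\eta}),
\]
which by fact (b) are maps of covers of $X$ at the correct shifts. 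These exhibit the desired $\eta$-interleaving between $f^\ast(\mathfrak{U})$ and $f^\ast(\mathfrak{V})$.

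I do not anticipate a serious obstacle here; both parts are essentially formal consequences of definitions, and the only point that calls for a tiny bit of care is confirming that the paper's notion of interleaving requires nothing beyond the existence of the shifted cover maps, so that no naturality squares have to be checked. If a version of the definition requiring such naturality were intended, then in (i) one would additionally check that pre- and post-composing with the tower structure maps $u_{\varepsilon,\varepsilon'}$, $v_{\varepsilon,\varepsilon'}$, $w_{\varepsilon,\varepsilon'}$ commutes up to equality of maps of covers (which would again follow from $U_\alpha\subseteq V_{\xi(\alpha)}$ type inclusions), and in (ii) one would invoke the functoriality statement $f^\ast(\zeta\circ\xi)=f^\ast(\zeta)\circ f^\ast(\xi)$ already recorded at the end of Section \ref{sec:background}.
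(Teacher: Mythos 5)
Your proof is correct, and it is exactly the argument the paper intends: the paper states this proposition without proof, treating it as an immediate consequence of the facts that compositions of maps of covers are maps of covers and that $f^\ast$ sends maps of covers to maps of covers. Your observation that the paper's definition of interleaving for towers of covers imposes no compatibility diagrams, so that mere existence of the shifted maps suffices, is the right reading of Definition~4.1 and is all that is needed.
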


\begin{definition}[Interleaving of simplicial towers] \label{def:inter-cover}
Let $\mathfrak{S}=\big\{\mathcal{S}_{\varepsilon}\overset{\tiny{s_{\varepsilon,\varepsilon'}}}{\longrightarrow}\mathcal{S}_{\varepsilon'}\big\}_{r\leq \varepsilon\leq\varepsilon'}$ and $\mathfrak{T}=\big\{\mathcal{T}_{\varepsilon}\overset{\tiny{t_{\varepsilon,\varepsilon'}}}{\longrightarrow}\mathcal{T}_{\varepsilon'}\big\}_{r\leq\varepsilon\leq\varepsilon'}$ be two towers of simplicial complexes where
$\res(\mathfrak{S})=\res(\mathfrak{T})=r$. We say that they are $\eta\geq 0$ interleaved if for each $\varepsilon\geq r$ one can find simplicial maps $\varphi_\varepsilon:\mathcal{S}_\varepsilon \rightarrow \mathcal{T}_{\varepsilon+\eta}$ and  $\psi_\varepsilon:\mathcal{T}_\varepsilon \rightarrow \mathcal{S}_{\varepsilon+\eta}$ so that:
\begin{itemize}

\item[(i)]  for all $\varepsilon\geq r$, $\psi_{\varepsilon+\eta}\circ\varphi_{\varepsilon}$ and $s_{\varepsilon,\varepsilon+2\eta}$ are contiguous,

\item[(ii)]  for all $\varepsilon\geq r$, $\varphi_{\varepsilon+\eta}\circ\psi_{\varepsilon}$ and  $t_{\varepsilon,\varepsilon+2\eta}$ are contiguous,

\item[(iii)]  for all $\varepsilon'\geq\varepsilon\geq r$, $\varphi_{\varepsilon'}\circ s_{\varepsilon,\varepsilon'}$ and  $t_{\varepsilon+\eta,\varepsilon'+\eta}\circ \varphi_{\varepsilon}$ are contiguous,

\item[(iv)]  for all $\varepsilon'\geq\varepsilon\geq r$, $ s_{\varepsilon+\eta,\varepsilon'+\eta}\circ \psi_{\varepsilon}$ and  $\psi_{\varepsilon'}\circ t_{\varepsilon,\varepsilon'}$ are contiguous.
\end{itemize}

These four conditions are  summarized by requiring that the four diagrams below commute up to contiguity:

 \begin{align}\label{eq:simp-inter}
\xymatrix{
\mathcal{S}_{\varepsilon} \ar[dr]^{\varphi_\varepsilon} \ar[rr]^{s_{\varepsilon,\varepsilon+2\eta}} && \mathcal{S}_{\varepsilon+2\eta}\\
&\mathcal{T}_{\varepsilon+\eta} \ar[ur]^{\psi_{\varepsilon+\eta}}&
 } &
 \xymatrix{
&\mathcal{S}_{\varepsilon+\eta} \ar[rd]^{\varphi_{\varepsilon+\eta}}&\\
\mathcal{T}_{\varepsilon}\ar[ur]^{\psi_\varepsilon}\ar[rr]^{t_{\varepsilon,\varepsilon+2\eta}} & &\mathcal{T}_{\varepsilon+2\eta}
 } 
\end{align}
\begin{equation*}
\xymatrix{
\mathcal{S}_{\varepsilon} \ar[dr]^{\varphi_\varepsilon} \ar[rr]^{s_{\varepsilon,\varepsilon'}} && \mathcal{S}_{\varepsilon'}\ar[dr]^{\varphi_{\varepsilon'}}&\\
&\mathcal{T}_{\varepsilon+\eta}\ar[rr]_{t_{\varepsilon+\eta,\varepsilon'+\eta}} &&\mathcal{T}_{\varepsilon'+\eta}
 }
\end{equation*}
\begin{equation*}
\xymatrix{
&\mathcal{S}_{\varepsilon+\eta} \ar[rr]^{s_{\varepsilon+\eta,\varepsilon'+\eta}} && \mathcal{S}_{\varepsilon'+\eta}\\ \mathcal{T}_{\varepsilon}\ar[ur]^{\psi_{\varepsilon}}\ar[rr]_{t_{\varepsilon,\varepsilon'}} &&\mathcal{T}_{\varepsilon'}\ar[ur]^{\psi_{\varepsilon'}}
 }
\end{equation*}

\end{definition}

\begin{proposition}\label{prop:inter-pullback}
Let $\mathfrak{U}$ and $\mathfrak{V}$ be two $\eta$-interleaved towers of covers of $X$ with $\res(\mathfrak{U})=\res(\mathfrak{V})$. Then, $N(\mathfrak{U})$ and $N(\mathfrak{V})$ are also $\eta$-interleaved.
\end{proposition}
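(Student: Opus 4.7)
The plan is to transport the interleaving of towers of covers to the nerve level by defining the candidate simplicial maps as the nerves of the cover maps witnessing the interleaving, and then verifying the four contiguity conditions in Definition \ref{def:inter-cover} by invoking Lemma \ref{lemma:cover-contiguity} together with the functoriality of the nerve construction.

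Concretely, let $\{\zeta_\varepsilon : \mathcal{U}_\varepsilon \to \mathcal{V}_{\varepsilon+\eta}\}$ and $\{\xi_\varepsilon : \mathcal{V}_\varepsilon \to \mathcal{U}_{\varepsilon+\eta}\}$ be the maps of covers witnessing the $\eta$-interleaving of $\mathfrak{U}$ and $\mathfrak{V}$, and let $u_{\varepsilon,\varepsilon'} : \mathcal{U}_\varepsilon \to \mathcal{U}_{\varepsilon'}$ and $v_{\varepsilon,\varepsilon'} : \mathcal{V}_\varepsilon \to \mathcal{V}_{\varepsilon'}$ denote the internal maps of each tower. I would set $\varphi_\varepsilon := N(\zeta_\varepsilon)$ and $\psi_\varepsilon := N(\xi_\varepsilon)$, so these automatically become simplicial maps between the nerves at the appropriate indices, and then check each of the four squares.

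Each verification reduces to the same pattern. For condition (i), the composition $\psi_{\varepsilon+\eta} \circ \varphi_\varepsilon$ equals $N(\xi_{\varepsilon+\eta}) \circ N(\zeta_\varepsilon) = N(\xi_{\varepsilon+\eta} \circ \zeta_\varepsilon)$ by the functoriality of $N(\cdot)$ recorded after the definition of maps of covers, while $s_{\varepsilon,\varepsilon+2\eta} = N(u_{\varepsilon,\varepsilon+2\eta})$ by definition of $N(\mathfrak{U})$. Both $\xi_{\varepsilon+\eta} \circ \zeta_\varepsilon$ and $u_{\varepsilon,\varepsilon+2\eta}$ are maps of covers from $\mathcal{U}_\varepsilon$ to $\mathcal{U}_{\varepsilon+2\eta}$, so Lemma \ref{lemma:cover-contiguity} gives that the two induced simplicial maps are contiguous. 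Condition (ii) is symmetric, exchanging the roles of $\mathfrak{U}$ and $\mathfrak{V}$. For (iii), both $\zeta_{\varepsilon'} \circ u_{\varepsilon,\varepsilon'}$ and $v_{\varepsilon+\eta,\varepsilon'+\eta} \circ \zeta_\varepsilon$ are maps of covers from $\mathcal{U}_\varepsilon$ to $\mathcal{V}_{\varepsilon'+\eta}$, and again Lemma \ref{lemma:cover-contiguity} applies after rewriting the compositions of simplicial maps as nerves of compositions. Condition (iv) is the symmetric statement.

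There is no real obstacle here; the bookkeeping is essentially formal once the two ingredients are in place, namely the functoriality identity $N(\beta \circ \alpha) = N(\beta) \circ N(\alpha)$ for composable maps of covers, and the uniqueness-up-to-contiguity of nerves of maps of covers with the same source and target given by Lemma \ref{lemma:cover-contiguity}. The only thing to be careful about is to write each composite as the nerve of a single map of covers so that Lemma \ref{lemma:cover-contiguity} can be applied directly; this is why the functoriality remark from the end of the \emph{Maps between covers} subsection is the pivotal ingredient.
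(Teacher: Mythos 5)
Your proposal is correct and follows essentially the same route as the paper: define $\varphi_\varepsilon=N(\zeta_\varepsilon)$, $\psi_\varepsilon=N(\xi_\varepsilon)$, and verify each contiguity condition by writing both composites as nerves of maps of covers with the same source and target, then invoking Lemma \ref{lemma:cover-contiguity} together with the functoriality identity $N(\beta\circ\alpha)=N(\beta)\circ N(\alpha)$. The paper's proof does exactly this, spelling out only condition (i) and noting the others are analogous.
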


\begin{proof}
Let $r$ denote the common resolution of $\mathfrak{U}$ and $\mathfrak{V}$. Write $\mathfrak{U} = \big\{\mathcal{U}_\varepsilon\overset{\tiny{u_{\varepsilon,\varepsilon'}}}{\longrightarrow} \mathcal{U}_{\varepsilon'}\big\}_{r\leq\varepsilon\leq \varepsilon'}$ and $\mathfrak{V} = \big\{\mathcal{V}_\varepsilon\overset{\tiny{v_{\varepsilon,\varepsilon'}}}{\longrightarrow} \mathcal{V}_{\varepsilon'}\big\}_{r\leq\varepsilon\leq \varepsilon'}$, and for each $\varepsilon\geq r$ let $\zeta_\varepsilon:\mathcal{U}_\varepsilon\rightarrow \mathcal{V}_{\varepsilon+\eta}$ and  $\xi_\varepsilon:\mathcal{V}_\varepsilon\rightarrow \mathcal{U}_{\varepsilon+\eta}$ be given as in Definition \ref{def:inter-cover}. For each diagram in (\ref{eq:simp-inter}) one can consider a similar diagram at the level of covers involving covers of the form $\mathcal{U}_\varepsilon$ and $\mathcal{V}_\varepsilon$, and apply the nerve construction. This operation will yield diagrams identical to those in  (\ref{eq:simp-inter}) where for each $\varepsilon\geq r$:
\begin{itemize}
\item $\mathcal{S}_\varepsilon:=N(\mathcal{U}_\varepsilon)$,
$\mathcal{T}_\varepsilon:=N(\mathcal{V}_\varepsilon)$,
\item $s_{\varepsilon,\varepsilon'}:=N(u_{\varepsilon,\varepsilon'})$, for $r\leq \varepsilon\leq \varepsilon'$;
 $t_{\varepsilon,\varepsilon'}:=N(v_{\varepsilon,\varepsilon'})$, for $r\leq \varepsilon\leq \varepsilon'$;
$\varphi_{\varepsilon}:=N(\zeta_{\varepsilon})$, and
$\psi_{\varepsilon}:=N(\xi_{\varepsilon})$.
\end{itemize}
To satisfy Definition \ref{def:inter-cover}, it remains to verify conditions (i) to (iv). We only verify (i), since the proof of the others follows the same arguments. For this, notice that both the composite map $\xi_{\varepsilon+\eta}\circ\zeta_\varepsilon$ and $u_{\varepsilon,\varepsilon+2\eta}$ are  maps of covers  from $\mathcal{U}_\varepsilon$ to $\mathcal{U}_{\varepsilon+2\eta}.$ By Lemma \ref{lemma:cover-contiguity} we then have that $N(\xi_{\varepsilon+\eta}\circ\zeta_\varepsilon)$ and $N(u_{\varepsilon,\varepsilon+2\eta}) = s_{\varepsilon,\varepsilon+2\eta}$ are contiguous. But, by the properties of the nerve construction $N(\xi_{\varepsilon+\eta}\circ\zeta_\varepsilon) = N(\xi_{\varepsilon+\eta})\circ N(\zeta_\varepsilon) = \psi_{\varepsilon+\eta}\circ\varphi_{\varepsilon}$, which completes the claim.
\end{proof}

From now on, for a finite tower of simplicial complexes $\mathfrak{S}$ and $k\in\N$, we denote by $\mathrm{D}_k\mathfrak{S}$ the $k$-th persistence diagram of $\mathfrak{S}$ with coefficients in a fixed field. Notice that,
since contiguous maps induce identity maps at the homology level,
applying the simplicial homology functor with coefficients in a field to diagrams such as those in (\ref{eq:simp-inter}) yields two persistence modules \emph{strongly} interleaved in the sense of \cite{CCGGO09}. Thus, we have a stability result for $\mathrm{D}_k\MM(\mathfrak{U},f)$ when $f$ is kept fixed but the 
tower of covers $\mathfrak{U}$ is perturbed. 

\begin{corollary}\label{coro:stab-cov}
For $\eta \geq 0$, let $\mathfrak{U}$ and $\mathfrak{V}$ be 
two finite towers of covers of $Z$ with $\res(\mathfrak{U})=\res(\mathfrak{V})>0$. Let $f:X\rightarrow Z$
be well-behaved and $\mathfrak{U}$ and $\mathfrak{V}$ be $\eta$-interleaved. Then, $\mathrm{MM}(\mathfrak{U},f)$ and $\mathrm{MM}(\mathfrak{V},f)$ are $\eta$-interleaved. In particular, 
the bottleneck distance between the persistence diagrams $\mathrm{D}_k\mathrm{MM}(\mathfrak{U},f)$ and  $\mathrm{D}_k\mathrm{MM}(\mathfrak{V},f)$ is 
at most $\eta$ for all $k\in\N$.
\end{corollary}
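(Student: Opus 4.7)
The plan is to simply chain together the two propositions that have just been proved, then pass to homology and invoke the algebraic stability theorem of Chazal et al.~\cite{CCGGO09}.

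First, since $\mathfrak{U}$ and $\mathfrak{V}$ are $\eta$-interleaved towers of covers of $Z$, Proposition~\ref{lemma:interleaving}(ii) tells us that the pullback towers $f^\ast(\mathfrak{U})$ and $f^\ast(\mathfrak{V})$ are $\eta$-interleaved towers of covers of $X$ (both have the same common resolution as $\mathfrak{U}$ and $\mathfrak{V}$). Applying Proposition~\ref{prop:inter-pullback} to these two pullback towers yields that their nerves, which by definition are exactly $\MM(\mathfrak{U},f) = N(f^\ast(\mathfrak{U}))$ and $\MM(\mathfrak{V},f) = N(f^\ast(\mathfrak{V}))$, are $\eta$-interleaved towers of simplicial complexes in the sense of Definition~\ref{def:inter-cover}. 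This gives the first claim.

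For the bottleneck distance bound, the idea is to apply the simplicial homology functor $\mathrm{H}_k(\cdot)$ with coefficients in a fixed field to all the simplicial complexes and simplicial maps appearing in the four diagrams of Definition~\ref{def:inter-cover} for the interleaving between $\MM(\mathfrak{U},f)$ and $\MM(\mathfrak{V},f)$. Because contiguous simplicial maps induce \emph{identical} linear maps at the homology level (see \cite{munkres} and the discussion following Lemma~\ref{lemma:cover-contiguity}), each of the four diagrams that only commute \emph{up to contiguity} at the simplicial level become \emph{strictly commutative} squares/triangles of vector spaces and linear maps. This is precisely a strong $\eta$-interleaving of the two persistence modules $\mathrm{H}_k(\MM(\mathfrak{U},f))$ and $\mathrm{H}_k(\MM(\mathfrak{V},f))$ in the sense of \cite{CCGGO09}.

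Finally, one invokes the algebraic stability theorem: a strong $\eta$-interleaving between two $q$-tame persistence modules implies that their persistence diagrams lie within bottleneck distance $\eta$. To use this, one needs the required finiteness/tameness of the persistence modules, and this is where the hypotheses that $f$ is well-behaved and that the covers $\mathcal{U}_\varepsilon$, $\mathcal{V}_\varepsilon$ are finite enter: they ensure each pullback cover has finitely many path components, so every nerve in sight is a finite simplicial complex with finite-dimensional homology, and hence the persistence modules are pointwise finite-dimensional, so the diagrams are well-defined and the stability theorem applies, yielding $d_B(\mathrm{D}_k\MM(\mathfrak{U},f),\mathrm{D}_k\MM(\mathfrak{V},f))\leq \eta$. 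The only mildly subtle step is the passage from contiguity-commutativity at the simplicial level to strict commutativity at the homological level; everything else is bookkeeping or direct invocation of prior results.
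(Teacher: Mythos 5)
Your proposal is correct and takes essentially the same route as the paper: the paper derives this corollary by combining Proposition~\ref{lemma:interleaving}(ii) with Proposition~\ref{prop:inter-pullback} to get the $\eta$-interleaving of the two multiscale mappers, and then observes that since contiguous maps induce identical maps on homology, the diagrams of Definition~\ref{def:inter-cover} become a strong $\eta$-interleaving of persistence modules in the sense of \cite{CCGGO09}, whence the bottleneck bound. Your additional remarks on finiteness of the nerves (from well-behavedness of $f$ and finiteness of the covers) match the paper's standing assumptions and are welcome bookkeeping.
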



\subsection{Stability of Multiscale Mapper}
\label{subsec:stability}

The \emph{fixed} simplicial complex produced by the standard mapper may not admit a simple notion of stability. We elaborate this point by an example in Appendix \ref{appendix:instability-mapper}. 
In this section, we show that the multiscale  mapper on the other hand exhibits a stability property against perturbations both in functions and in tower of covers (the latter of which already discussed in Corollary \ref{coro:stab-cov}). 
However, to guarantee stability against changes in functions, it is necessary to restrict the multiscale mapper to  a special class of towers of covers, called \emph{(c,s)-good tower of covers}. We justify the necessity of considering such ($c,s$)-good tower of covers in Appendix \ref{subsec:badcover}. 

In what follows, we assume that the target compact topological space $Z$ is endowed with a metric $d_Z$. For a subset $O\subset Z$, let $\diam(O)$ denote its \emph{diameter}, that is, the number $\sup_{z,z'\in O}d_Z(z,z')$. For $\delta\geq 0$, let $O^\delta$ denote the set $\{z\in Z|\, d_Z(z,O)\leq \delta\}.$

\begin{definition}[Good tower of covers]\label{def:good-cover}
Let $c\geq 1$ and $s> 0$. We say that a finite tower of covers $\mathfrak{W}=\{\mathcal{W}_{\varepsilon}\overset{\tiny{w_{\varepsilon,\varepsilon'}}}{\longrightarrow} \mathcal{W}_{\varepsilon'}\big\}_{\varepsilon\leq \varepsilon'}$ for the compact metric space $(Z,d_Z)$ is $(c,s)$-good if: 
\begin{enumerate}\denselist
\item $\res(\mathfrak{W})=s$, and $s\leq \diam(Z)$;
\item $\diam(W)\leq \varepsilon$ for all $W\in \mathcal{W}_\varepsilon$ and all $\varepsilon\geq s$; and 
\item for all $O\subset Z$ with $\diam(O)\geq s$,
there exists $W\in \mathcal{W}_{c \cdot\diam(O)}$ such that $W\supseteq O$.
\footnote{This condition is related to the concept of Lebesgue number 
for covers of metric spaces where metric balls instead of sets with
bounded diameter are used.} 
\end{enumerate}
\label{def:goodcover-II}
\end{definition}

Intuitively, in a ($c,s$)-good tower of covers, the parameter $\varepsilon$ of a cover $\mathcal{W}_\varepsilon \in \mathfrak{W}$ is a scale parameter, which specifies an upper-bound for the \emph{size / resolution} of cover elements in $\mathcal{W}_\varepsilon$. Condition-3 in the above definition requires that any set $O \subset Z$ should be covered by an element from a cover of $\mathfrak{W}$ whose resolution is comparable to the size (diameter) of $O$. 
See Appendix \ref{appendix:subsec:goodcover} for  more discussions on the ($c,s$)-good tower of covers and its properties. 
In particular, in Appendix \ref{subsec:badcover}, we how that such a tower of covers is somewhat necessary to obtain a stability result. The ($c,s$)-good tower of covers is also computationally feasible. In Appendix \ref{appendix:sec:goodcovers} we provide an example of a construction of a tower of covers for compact metric spaces with bounded doubling dimension using $\varepsilon$-nets.

A priori the conditions defining a $(c,s)$-good tower of covers do not guarantee that the sets $O$ in $Z$ with diameter smaller than the resolution $s$ can be covered by some element $W$ of some $\mathcal{W}_\varepsilon$, for some $\varepsilon\geq s.$ The following proposition deals with this situation and is used later.

\begin{proposition}\label{prop:useful}
If $Z$ is path connected and $\mathfrak{U}$ is a $(c,s)$-good tower of covers of $Z$, then whenever $O\subset Z$ is s.t. $\diam(O)<s$, there exists $W\in \mathcal{W}_{c\cdot(\diam(O)+2s)}$ s.t. $W\supseteq O$. 
\end{proposition}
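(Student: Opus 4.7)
The plan is to enlarge $O$ to a superset $O'$ whose diameter falls inside the band $[s,\,\diam(O)+2s]$, apply Condition~3 of the $(c,s)$-good tower to $O'$, and then push the resulting cover element forward through the tower to the target resolution. The tower maps are maps of covers, so each cover element is sent to a superset of itself, which is exactly what is needed to transfer the containment of $O$.

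First, pick any $z_0\in O$ and let $R_0:=\max_{z\in Z} d_Z(z_0,z)$, which is attained by compactness of $Z$. I would split into two cases. If $R_0\geq s$, I exploit path-connectedness of $Z$: take a path $\gamma:[0,1]\to Z$ from $z_0$ to a point realizing $R_0$. The continuous function $t\mapsto d_Z(z_0,\gamma(t))$ takes values in $[0,R_0]\supseteq[0,s]$, so by the intermediate value theorem there is $z^*:=\gamma(t^*)$ with $d_Z(z_0,z^*)=s$. Set $O':=O\cup\{z^*\}$; then $z_0,z^*\in O'$ forces $\diam(O')\geq s$, while the triangle inequality via $z_0$ gives $\diam(O')\leq \diam(O)+s\leq \diam(O)+2s$. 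In the complementary case $R_0<s$, every point of $Z$ lies within distance $s$ of $z_0$, so $\diam(Z)\leq 2R_0<2s$; taking $O':=Z$ works because Condition~1 supplies $\diam(Z)\geq s$, and $\diam(O')=\diam(Z)<2s\leq \diam(O)+2s$.

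In both cases $\diam(O')\geq s$, so Condition~3 of Definition~\ref{def:good-cover} yields $W'\in\mathcal{W}_{c\cdot\diam(O')}$ with $W'\supseteq O'\supseteq O$. Since $c\cdot\diam(O')\leq c\cdot(\diam(O)+2s)$, the map of covers in the tower from $\mathcal{W}_{c\cdot\diam(O')}$ into $\mathcal{W}_{c(\diam(O)+2s)}$ sends $W'$ to some $W\in\mathcal{W}_{c(\diam(O)+2s)}$ with $W\supseteq W'\supseteq O$, which is the required element.

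The main obstacle is the boundary behaviour when $\diam(Z)$ is close to $s$. The instinctive construction of placing a single extra point at distance exactly $s$ from $z_0$ can fail, since no such point need exist in $Z$ when $R_0<s$. Splitting off the small-$R_0$ case and setting $O':=Z$ in that regime sidesteps this cleanly and, crucially, still keeps $\diam(O')$ within the budget of $\diam(O)+2s$; note also that it is essential that the target resolution be $\diam(O)+2s$ rather than $\diam(O)+s$, precisely to absorb the case $O'=Z$.
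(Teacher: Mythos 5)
Your proof is correct and follows essentially the same route as the paper's: enlarge $O$ to a set of diameter at least $s$ but at most $\diam(O)+2s$ (using path-connectedness and the condition $s\leq\diam(Z)$ to secure the lower bound), apply Condition~3 of the $(c,s)$-good definition to the enlargement, and push the resulting cover element forward through the tower maps to index $c(\diam(O)+2s)$. The only difference is cosmetic: the paper realizes the enlargement as the $s$-thickening $O^s$ and handles both regimes uniformly via $\diam(O^s)\geq\min\{\diam(Z),s\}=s$, whereas you add a single point at distance $s$ found by the intermediate value theorem and treat the small-$\diam(Z)$ case separately.
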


\begin{remark}
Note that for a small set $O$, $\diam(O)<s$, one obtains only that $O\subset W$ for some $W$ in $\mathcal{W}_{c\cdot(\diam(O)+2s)}$. This is in contrast with what property 3 above guarantees for sets with diameter $\diam(O)\geq s$: the existence of $W\in \mathcal{W}_{c\cdot\diam(O)}$ with $O\subset W.$
\end{remark}

\begin{proof}
Indeed, since $Z$ is path connected then one has that 
$\diam(O^s)\geq \min\{\diam(Z),s\}=s$ it follows by the definition that $\exists W'\in \mathcal{W}_{c \cdot \diam(O^s)}$ containing $O$. We conclude since $\diam(O^s)\leq \diam(O)+2s$, $O\subset O^s$, and since there exists $W\in\mathcal{W}_{c(\diam(O)+2s)}$ containing $W'$.\end{proof}

We will henceforth assume that $(Z,d_Z)$ is a compact path connected metric space.

Next we prove the main stability result, Theorem \ref{thm:PDapprox}. Specifically, in Section \ref{appendix:sec:stab-func} we first inspect the impact of function perturbation, and we give the general result in Section \ref{appendix:sec:stab-gen}.

\subsection{Stability against function perturbation}
\label{appendix:sec:stab-func}

Our study of stability against function perturbation involves
a reindexing of the involved towers of covers.
\begin{definition}[Reindexing]
Let $\mathfrak{W}=\big\{\mathcal{W}_\varepsilon\stackrel{w_{\varepsilon,\varepsilon'}}{\longrightarrow}\mathcal{W}_{\varepsilon'}\big\}_{r\leq \varepsilon\leq \varepsilon'}$ be a tower of covers of $Z$ with
$r=\res(\mathfrak{W})>0$ and $\phi:\R^+\rightarrow \R^+$ be a monotonically increasing function. Consider the tower of covers
$\mathrm{R}_\phi (\mathfrak{W}) :=\big\{\mathcal{V}_\varepsilon\stackrel{v_{\varepsilon,\varepsilon'}}{\longrightarrow}\mathcal{V}_{\varepsilon'}\big\}_{\phi(r)\leq \varepsilon\leq \varepsilon'}$ given by
\begin{itemize}
\item  $\mathcal{V}_\varepsilon:=\mathcal{W}_{\phi^{-1}(\varepsilon)}$ for each $\varepsilon\geq \phi(r)$ and
\item $v_{\varepsilon,\varepsilon'} = w_{\phi^{-1}(\varepsilon),\phi^{-1}(\varepsilon')}$ for $\varepsilon,\varepsilon'$ such that $\phi(r)\leq \varepsilon\leq\varepsilon'$.
\end{itemize}
We refer to $\mathrm{R}_\phi(\mathfrak{W})$ as a \emph{reindexed} tower.
\end{definition}

In our case, we will use the $\log$ function to
reindex a tower of covers $\mathfrak{W}$.
We also need the following definition in order to state the stability results.
\begin{definition}
Given a tower of covers $\mathfrak{U} = \{\mathcal{U}_\varepsilon\}$ and $\eps_0\geq \res(\mathfrak{U})$, we define the
\mbox{$\eps_0$-truncation} of $\mathfrak{U}$ as the tower $\trunc_{\varepsilon_0}(\mathfrak{U}):=\big\{\mathcal{U}_\eps\big\}_{\eps_0\leq\eps}$.
Observe that, by definition $\res(\trunc_{\varepsilon_0}(\mathfrak{U}))=\eps_0$.
\end{definition}

\begin{proposition}\label{prop:stab-func} Let $X$ be a compact topological space, $(Z,d_Z)$ be a compact path connected metric space, and $f,g:X\rightarrow Z$ be two continuous functions such that for some $\delta\geq 0$ one has that $\max_{x\in X}d_Z(f(x),g(x))= \delta.$
Let $\mathfrak{W}$ be any $(c,s)$-good
tower of covers of $Z$.
Let $\eps_0=\max(1,s)$. Then,
the $\log\eps_0$-truncations of
$\mathrm{R}_{\log} \big(f^\ast(\mathfrak{W})\big)$ and $\mathrm{R}_{\log} \big(g^\ast(\mathfrak{W})\big)$ are
$\log\big(2c\max(\delta,s)+c\big)$-interleaved.
\end{proposition}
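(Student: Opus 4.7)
The plan is to construct, for every $\tau \geq \log \eps_0$, a pair of maps of covers $\zeta_\tau : f^\ast(\mathcal{W}_{e^\tau}) \to g^\ast(\mathcal{W}_{e^{\tau+\eta}})$ and $\xi_\tau : g^\ast(\mathcal{W}_{e^\tau}) \to f^\ast(\mathcal{W}_{e^{\tau+\eta}})$, with $\eta = \log\bigl(c(2\max(\delta,s)+1)\bigr) = \log(2c\max(\delta,s)+c)$. Under the reindexing $\mathrm{R}_{\log}$, this is precisely what the definition of $\eta$-interleaved towers of covers requires. By the symmetry of the hypothesis $\sup_{x\in X} d_Z(f(x),g(x)) = \delta$, it suffices to construct $\zeta_\tau$; the construction of $\xi_\tau$ is identical with $f$ and $g$ swapped.

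The key geometric observation is that if $\widehat{U}_i$ is a path connected component of $f^{-1}(U)$ for some $U \in \mathcal{W}_{e^\tau}$, then $g(\widehat{U}_i) \subseteq U^\delta$, and $\diam(U^\delta) \leq \diam(U) + 2\delta \leq e^\tau + 2\delta$. To define $\zeta_\tau(\widehat{U}_i)$, I will find an element $W \in \mathcal{W}_{e^{\tau+\eta}}$ containing $U^\delta$; then $\widehat{U}_i \subseteq g^{-1}(W)$, and since $\widehat{U}_i$ is path connected, it lies in a unique path connected component $\widehat{W}_j$ of $g^{-1}(W)$. I set $\zeta_\tau(\widehat{U}_i) := \widehat{W}_j$ and the required inclusion $\widehat{U}_i \subseteq \widehat{W}_j$ holds by construction.

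To produce such a $W$ I split on the size of $U^\delta$. If $\diam(U^\delta) \geq s$, condition~3 of $(c,s)$-goodness applied to $O = U^\delta$ yields some $W' \in \mathcal{W}_{c\cdot\diam(U^\delta)}$ with $W' \supseteq U^\delta$; then the tower map sends $W'$ into an element of $\mathcal{W}_{e^{\tau+\eta}}$ provided $c(e^\tau + 2\delta) \leq e^{\tau+\eta} = c(2\max(\delta,s)+1)\,e^\tau$, which reduces to $\delta \leq e^\tau \max(\delta,s)$ and holds because $e^\tau \geq \eps_0 \geq 1$. If instead $\diam(U^\delta) < s$, Proposition~\ref{prop:useful} gives some $W' \in \mathcal{W}_{c(\diam(U^\delta)+2s)} \subseteq \mathcal{W}_{3cs}$ with $W' \supseteq U^\delta$, and it remains to check $3cs \leq e^{\tau+\eta}$, equivalently $3s \leq (2\max(\delta,s)+1)\,e^\tau$. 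This is where the truncation matters: using $e^\tau \geq \max(1,s)$, a short case check (separating $s\leq 1$ from $s>1$, and $\delta\geq s$ from $\delta<s$) confirms the bound.

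The main obstacle is the second case $\diam(U^\delta) < s$: the $(c,s)$-goodness condition only controls subsets larger than the resolution $s$, so the additional cushion provided by Proposition~\ref{prop:useful} is needed, and this cushion contributes the additive $s$ that forces $\max(\delta,s)$ (rather than just $\delta$) to appear in $\eta$. The same phenomenon also explains why the truncation at $\eps_0 = \max(1,s)$ is necessary: only for $e^\tau \geq \max(1,s)$ do the inequalities $c(e^\tau+2\delta) \leq e^{\tau+\eta}$ and $3cs \leq e^{\tau+\eta}$ hold simultaneously. Once both cases have been verified and $\zeta_\tau$ (and by symmetry $\xi_\tau$) has been defined at every $\tau \geq \log\eps_0$, the maps directly witness the claimed $\eta$-interleaving of $\trunc_{\log\eps_0}(\mathrm{R}_{\log} f^\ast(\mathfrak{W}))$ and $\trunc_{\log\eps_0}(\mathrm{R}_{\log} g^\ast(\mathfrak{W}))$.
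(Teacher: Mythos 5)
Your proof is correct and follows essentially the same route as the paper's: both construct the required maps of covers by thickening each cover element of $\mathcal{W}_{e^\tau}$, using $(c,s)$-goodness to find a containing element at scale at most $c(e^\tau+2\max(\delta,s))\leq e^{\tau+\eta}$, composing with tower maps, and then reindexing by $\log$. The only cosmetic difference is that the paper thickens by $\delta'=\max(\delta,s)$ up front (so path-connectedness of $Z$ forces the thickened set to have diameter at least $s$ and condition~3 applies directly), whereas you thicken by $\delta$ and dispose of the small-diameter case separately via Proposition~\ref{prop:useful}; both give the same constant $\eta=\log(2c\max(\delta,s)+c)$.
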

\begin{proof}
For notational convenience write $\eta:=\log\big(2c\max(\delta,s)+c\big)$, $\{\mathcal{U}_t\}=\mathfrak{U}:=f^\ast(\mathfrak{W})$, and $\{\mathcal{V}_t\}=\mathfrak{V}:=g^\ast(\mathfrak{W})$. With regards to satisfying Definition \ref{def:inter-cover} for $\mathrm{R}_{\log}\big(\mathfrak{U}\big)$ and $\mathrm{R}_{\log}\big(\mathfrak{V}\big)$, for each $\varepsilon\geq \log\eps_0$ we need only
exhibit maps of covers
$\zeta_{\varepsilon}:\mathcal{U}_{\exp(\varepsilon)} \rightarrow
\mathcal{V}_{\exp(\varepsilon+\eta)}$ and
$\xi_{\varepsilon}:{\mathcal{V}}_{\exp(\varepsilon)} \rightarrow \mathcal{U}_{\exp(\varepsilon+\eta)}$.  We first establish the following.
\begin{claim}\label{claim-1}
For all $O\subset Z$, and all $\delta'\geq \delta$, $f^{-1}(O)\subseteq g^{-1}(O^{\delta'}).$
\end{claim}
\begin{proof}
Let $x\in f^{-1}(O)$, then $d_Z(f(x),O)=0$.   Thus, 
$$d_Z(g(x),O)\leq d_Z(f(x),O)+d_Z(g(x),f(x))\leq
\delta,$$ which implies the claim.
\end{proof}

Now, pick any $t\geq \varepsilon_0$, any $U\in \mathcal{U}_{t}$, and fix $\delta':= \max(\delta,s)$. Then, there exists $W\in\mathcal{W}_{t}$ such that $U\in \mathrm{cc}(f^{-1}(W)).$ The claim implies that $f^{-1}(W)\subseteq g^{-1}(W^{\delta'})$. Since $\mathfrak{W}$ is a $(c,s)$-good cover of the connected space $Z$ and  $s\leq \max(\delta,s)\leq \diam(W^{\delta'})\leq 2\delta' +t$, there exists at least one set $W'\in\mathcal{W}_{c(2\delta'+t)}$ such that $W^{\delta'}\subseteq W'$.
This means that $U$ is contained in some element of $\mathrm{cc}(g^{-1}(W'))$
where $W'\in \mathcal{W}_{c(2\delta'+t)}$. But, also, since
$c(2\delta'+t)\leq c(2\delta'+1)t$ for $t\geq\eps_0\geq 1$, there exists $W''\in \mathcal{W}_{c(2\delta' +1)t}$ such that $W'\subseteq W''$.
This implies that $U$ is contained in some element of $\mathrm{cc}(g^{-1}(W''))$
where $W''\in \mathcal{W}_{c(2\delta'+1)t}$.
This process, when applied to all $U\in\mathcal{U}_{t}$, all $t\geq \varepsilon_0$,
defines a map of covers $\widehat{\zeta}_t:\mathcal{U}_t\rightarrow \mathcal{V}_{(2c\delta'+c)t}$. Now, define for each $\varepsilon\geq \log(\varepsilon_0)$ the map $\zeta_{\varepsilon}:=\widehat{\zeta}_{\exp(\varepsilon)}$ and notice that by construction this map has $\mathcal{U}_{\exp(\varepsilon)}$ as domain, and $\mathcal{V}_{(2\delta' c+c)\exp(\varepsilon)}$ as codomain.
A similar observation produces for each $\varepsilon\geq \log(\varepsilon_0)$ a map of covers $\xi_{\varepsilon}$ from $\mathcal{V}_{\exp(\varepsilon)}$ to $\mathcal{V}_{(2c\delta'+c)\exp(\varepsilon)}$.

Notice that for each $\varepsilon\geq \log(\varepsilon_0)$ one may write $(2c\delta' + c)\exp(\varepsilon) = \exp(\varepsilon+\eta)$. So we have in fact proved that $\log\eps_0$-truncations of
$\mathrm{R}_{\log}\big(\mathfrak{U}\big)$ and $\mathrm{R}_{\log}\big(\mathfrak{V}\big)$ are
$\eta$-interleaved.
\end{proof}

An application of Proposition \ref{prop:stab-func} and the result
in~\cite{CCGGO09} is:
\begin{corollary}\label{coro:stab-func}
Let $\mathfrak{W}$ be any $(c,s)$-good tower of covers of the compact connected metric space $Z$ and let $f,g:X\rightarrow Z$ be any two
well-behaved continuous functions such that $\max_{x\in X}d_Z(f(x),g(x))= \delta$.
Then, the bottleneck distance between the persistence diagrams $\mathrm{D}_k\mathrm{MM}(\mathrm{R}_{\log}\big(\mathfrak{W}),f\big)$ and  $\mathrm{D}_k\mathrm{MM}\big(\mathrm{R}_{\log}(\mathfrak{W}),g\big)$ is
at most $\log(2c\max(s,\delta)+c)+\max(0,\log\frac{1}{s})$ for all $k\in\N$.
\end{corollary}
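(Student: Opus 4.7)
The plan is to translate the cover-level interleaving supplied by Proposition \ref{prop:stab-func} into a persistence module interleaving and then invoke the algebraic stability theorem of \cite{CCGGO09}. The argument proceeds in three steps: (i) extend the truncated cover interleaving to the full reindexed towers; (ii) lift to a simplicial-complex interleaving via the nerve functor using Proposition \ref{prop:inter-pullback}; (iii) pass to persistence modules via homology and apply bottleneck stability.

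For step (i), set $\eps_0 := \max(1,s)$ and $\eta := \log(2c\max(\delta,s)+c)$. Proposition \ref{prop:stab-func} supplies an $\eta$-interleaving of the $\log\eps_0$-truncations of $\mathfrak{U} := \newlog(f^\ast(\mathfrak{W}))$ and $\mathfrak{V} := \newlog(g^\ast(\mathfrak{W}))$, both of which have resolution $\log s$. If $s\geq 1$, then $\log \eps_0 = \log s$ and no extension is needed. If $s<1$, for each $\varepsilon\in [\log s,\log \eps_0)$ I would precompose the interleaving map $\zeta_{\log \eps_0}\colon \mathcal{U}_{\log \eps_0}\to \mathcal{V}_{\log\eps_0+\eta}$ with the internal transition map of $\mathfrak{U}$ going from $\mathcal{U}_{\varepsilon}$ into $\mathcal{U}_{\log \eps_0}$; this produces a map of covers landing in $\mathcal{V}_{\varepsilon+\eta'}$, where $\eta' := \eta + (\log\eps_0-\log s) = \eta + \max(0,\log(1/s))$, and a symmetric construction yields $\xi_\varepsilon$. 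The composition rule for maps of covers, combined with Lemma \ref{lemma:cover-contiguity} (any two maps of covers between the same pair induce contiguous nerve maps), verifies that the four diagrams in the statement of Definition \ref{def:inter-cover} commute up to a common containing map of covers at the cover level. This yields an $\eta'$-interleaving of the full reindexed towers of covers.

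For steps (ii) and (iii), Proposition \ref{prop:inter-pullback} then promotes this to an $\eta'$-interleaving of the corresponding nerve towers $N(\mathfrak{U}) = \MM(\newlog(\mathfrak{W}),f)$ and $N(\mathfrak{V}) = \MM(\newlog(\mathfrak{W}),g)$ in the sense of Definition \ref{def:inter-cover}. Applying the simplicial homology functor with field coefficients, every contiguity relation in the four diagrams of (\ref{eq:simp-inter}) collapses to an honest equality of induced linear maps, yielding two strongly $\eta'$-interleaved persistence modules; pointwise finite-dimensionality is guaranteed because $f,g$ are well-behaved and $\mathfrak{W}$ consists of finite covers, so all pullback covers and their nerves are finite. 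The algebraic stability theorem of \cite{CCGGO09} then bounds the bottleneck distance between the persistence diagrams $\mathrm{D}_k\MM(\newlog(\mathfrak{W}),f)$ and $\mathrm{D}_k\MM(\newlog(\mathfrak{W}),g)$ by $\eta' = \log(2c\max(s,\delta)+c) + \max(0,\log(1/s))$, as claimed.

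The main obstacle is step (i): one must carefully extend the interleaving from the $\log \eps_0$-truncation to the full tower without inflating the interleaving parameter beyond $\eta + \max(0,\log(1/s))$, and then verify that the four contiguity-up-to-a-common-map conditions are preserved under the extension. Once this is in place, steps (ii) and (iii) are essentially formal, relying only on Proposition \ref{prop:inter-pullback}, Lemma \ref{lemma:cover-contiguity}, and the now-standard machinery of \cite{CCGGO09}.
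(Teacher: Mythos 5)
Your argument is correct and reaches exactly the stated bound; it follows the paper for $s\geq 1$ but handles the case $s<1$ by a genuinely different mechanism. The paper keeps the interleaving parameter at $\eta=\log(2c\max(s,\delta)+c)$ on the $\log\eps_0$-truncations only, and then argues directly at the level of persistence diagrams that restoring the indices in $[\log s,0)$ either introduces bars of length at most $\log\frac{1}{s}$ or stretches existing bars to the left by at most $\log\frac{1}{s}$, so that the bottleneck distance degrades by at most $\log\frac{1}{s}$. You instead absorb the truncation into the interleaving itself: by pre- and post-composing the maps supplied by Proposition \ref{prop:stab-func} with the internal transition maps of the two towers, you obtain genuine maps of covers $\mathcal{U}_\varepsilon\rightarrow\mathcal{V}_{\varepsilon+\eta'}$ and $\mathcal{V}_\varepsilon\rightarrow\mathcal{U}_{\varepsilon+\eta'}$ for \emph{all} $\varepsilon\geq\log s$ with $\eta'=\eta+\max(0,\log\frac{1}{s})$, and then apply Proposition \ref{prop:inter-pullback} (whose contiguity conditions hold automatically by Lemma \ref{lemma:cover-contiguity}) and the strong stability theorem of \cite{CCGGO09} a single time. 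Your route is arguably the more robust one: a literal triangle-inequality reading of the paper's truncation step (comparing each full diagram to its truncation and then the two truncations to each other) would cost $\eta+2\log\frac{1}{s}$, whereas your single $\eta'$-interleaving of the untruncated modules delivers $\eta+\log\frac{1}{s}$ with no further matching argument. The only point to make fully explicit is that the precomposed map lands in $\mathcal{V}_{\log\eps_0+\eta}$ and must still be pushed forward along the transition map of $\mathfrak{V}$ into $\mathcal{V}_{\varepsilon+\eta'}$; this is legitimate precisely because $\varepsilon\geq\log s$ gives $\log\eps_0+\eta\leq\varepsilon+\eta'$.
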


\begin{proof}
We use the notation of Proposition \ref{prop:stab-func}. Let $\mathfrak{U}=f^*(\mathfrak{W})$ and $\mathfrak{V}=g^*(\mathfrak{W})$.
If $\max(1,s)=s$, then $\mathrm{R}_{\log}(\mathfrak{U})$ and $\mathrm{R}_{\log}(\mathfrak{V})$
are $\log(2c\max(s,\delta) + c)$-interleaved by Proposition~\ref{prop:stab-func} 
which gives a bound on the bottleneck distance of $\log(2c\max(s,\delta)+c)$ between the corresponding persistence
diagrams~\cite{CCGGO09}. In the case when $s<1$, the bottleneck distance remains the 
same only for the $0$-truncations of
$\mathrm{R}_{\log}(\mathfrak{U})$ and $\mathrm{R}_{\log}(\mathfrak{V})$. 
By shifting the starting
point of the two families to the left by at most $|\log s|$ can introduce
barcodes of lengths at most $\log{\frac{1}{s}}$ or can stretch 
the existing barcodes to the left by at most $\log{\frac{1}{s}}$ for the
respective persistence modules.
To see this, consider the persistence module below where $\eps_1=\log s$:
\begin{align*}
\mathrm{H}_k\big(N(f^\ast(\mathcal{U}_{\varepsilon_1}))\big)&\rightarrow \mathrm{H}_k\big(N(f^\ast(\mathcal{U}_{\varepsilon_2}))\big)\rightarrow\cdots
\cdots\rightarrow\mathrm{H}_k\big(N(f^\ast(\mathcal{U}_{0}))\big)&\rightarrow\cdots
\rightarrow \mathrm{H}_k\big(N(f^\ast(\mathcal{U}_{\varepsilon_n}))\big)
\end{align*}

A homology class born at any index in the range $[\log s ,0)$ 
either dies at or before the index $0$ or is mapped to a homology class
of $\mathrm{H}_k\big(N(f^\ast(\mathcal{U}_{0}))\big)$. In the first
case we have a bar code of length at most $|\log s|=\log\frac{1}{s}$.
In the second case, a bar code of the persistence module

\begin{equation*}
\mathrm{H}_k\big(N(f^\ast(\mathcal{U}_{\varepsilon_1=0}))\big)\rightarrow \cdots
\rightarrow \mathrm{H}_k\big(N(f^\ast(\mathcal{U}_{\varepsilon_n}))\big)
\end{equation*}

starting at index $\log 1=0$
gets stretched to the left by at most $|\log s|=\log\frac{1}{s}$. 
The same conclusion can be drawn for the persistence module induced
by $\mathrm{R}_{\log}(\mathfrak{V})$. 
Therefore the bottleneck distance between the respective
persistence diagrams changes by at most $\log\frac{1}{s}$.
\end{proof}

\begin{remark}
The proposition and corollary above are in appearance somehow not satisfactory: imagine that $f=g$, then $\delta=0$ but by invoking Proposition \ref{prop:stab-func} instead of a $0$-interleaving one obtains a $\big(\log(c(2s+1))+\max(0,\log\frac{1}{s})\big)$-interleaving. Nevertheless, the claim of the proposition is almost tight for $\delta > 0$: in fact, by modifying the example in \S\ref{subsec:badcover} for each $\delta>0$ we can find a space $X_\delta$, a $(c,s)$-good tower of covers $\mathfrak{U}$ of a closed interval $I\subset \R$,  and a pair of functions $f_\delta,g_\delta:X\rightarrow I$ such that $\|f_\delta-g_\delta\|_{\infty}=\delta$ but such that  the bottleneck distance between the persistence diagrams of $\MM(\mathfrak{U},f_\delta)$ and $\MM(\mathfrak{U},g_\delta)$ is at least 
 $\max \{ \log (cs), \log c, \log \frac{1}{s} \} = \Omega \big(\log(c(2s+1))+\max(0,\log\frac{1}{s})\big)$. 
\end{remark}

\subsection{Stability in general}
\label{appendix:sec:stab-gen}

We now consider the more general case which also allows changes in the tower of covers inducing the multiscale mapper.

\begin{theorem}\label{theo:stab-general}
Let $\mathfrak{U}$ and $\mathfrak{V}$ be any two $\eta$-interleaved
$(c,s)$-good towers of covers of the compact connected metric
space $Z$ and let $f,g:X\rightarrow Z$ be any two continuous well-behaved functions
such that $\max_{x\in X}d_Z(f(x),g(x))\leq \delta$. Then,
for $\eps_0=\max(1,s)$, the $\log\eps_0$-truncation of
$\mathrm{R}_{\log}\big(f^\ast(\mathfrak{U})\big)$ and $\mathrm{R}_{\log}\big(g^\ast(\mathfrak{V})\big)$ are
$\log\big(2c\max(s,\delta)+c+\eta\big)$-interleaved.
\end{theorem}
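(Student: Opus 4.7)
The plan is to prove Theorem \ref{theo:stab-general} by a triangle-inequality argument at the level of pullback covers, routing through the intermediate tower $g^\ast(\mathfrak{U})$. First I would apply the argument of Proposition \ref{prop:stab-func} with $\mathfrak{W}=\mathfrak{U}$: for every scale $\tau\geq\eps_0=\max(1,s)$ and every $U\in (f^\ast(\mathfrak{U}))_\tau$, that proof produces an element $V\in (g^\ast(\mathfrak{U}))_{(2c\delta'+c)\tau}$ with $U\subseteq V$, where $\delta'=\max(\delta,s)$. This gives a map of covers $f^\ast(\mathfrak{U})_\tau\to g^\ast(\mathfrak{U})_{(2c\delta'+c)\tau}$, which on the log-reindexed tower is a step of size $\log(2c\delta'+c)$.

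For the second leg I would invoke Proposition \ref{prop:interleaving}(ii): since $\mathfrak{U}$ and $\mathfrak{V}$ are $\eta$-interleaved towers of covers of $Z$, the pullback towers $g^\ast(\mathfrak{U})$ and $g^\ast(\mathfrak{V})$ are $\eta$-interleaved on $X$. In particular there is a map of covers $g^\ast(\mathfrak{U})_\sigma\to g^\ast(\mathfrak{V})_{\sigma+\eta}$ for every $\sigma\geq\res(\mathfrak{U})$. Composing this at $\sigma=(2c\delta'+c)\tau$ with the first-leg map yields a map of covers $f^\ast(\mathfrak{U})_\tau \to g^\ast(\mathfrak{V})_{(2c\delta'+c)\tau+\eta}$ for every $\tau\geq\eps_0$.

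The key bookkeeping step is to convert the hybrid bound $(2c\delta'+c)\tau+\eta$, which mixes a multiplicative factor with an additive shift, into a single multiplicative bound so that log-reindexing yields a uniform additive gap. Because $\tau\geq\eps_0\geq 1$, I can bound $\eta\leq \eta\tau$, hence $(2c\delta'+c)\tau+\eta\leq(2c\delta'+c+\eta)\tau$, and the monotone cover map in the tower $\mathfrak{V}$ supplies a refinement $g^\ast(\mathfrak{V})_{(2c\delta'+c)\tau+\eta}\to g^\ast(\mathfrak{V})_{(2c\delta'+c+\eta)\tau}$. After log-reindexing the full composition becomes a step of additive size $\log(2c\delta'+c+\eta)$, which is precisely the claimed bound. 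The reverse direction, producing a map $g^\ast(\mathfrak{V})_\tau\to f^\ast(\mathfrak{U})_{(2c\delta'+c+\eta)\tau}$, is obtained by the same argument with the roles of $(f,\mathfrak{U})$ and $(g,\mathfrak{V})$ swapped and using the $\eta$-interleaving in the opposite direction. No contiguity conditions need to be checked here, since $\eta$-interleaving of towers of covers requires only the existence of the maps of covers in both directions.

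The main obstacle is reconciling two fundamentally different notions of closeness: the multiplicative distortion $(2c\delta'+c)$ produced by Proposition \ref{prop:stab-func} against function perturbation, and the additive shift $\eta$ produced by cover interleaving. Absorbing $\eta$ into the multiplicative factor is precisely what forces the $\log\eps_0$ truncation with $\eps_0\geq 1$: for $\tau<1$ the inequality $\eta\leq \eta\tau$ fails, so the interleaving parameter would depend on $\tau$ and no uniform additive gap would exist in the log-reindexed towers. This is the same reason the truncation already appears in Proposition \ref{prop:stab-func}.
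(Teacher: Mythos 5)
Your proposal is correct and follows essentially the same route as the paper's proof: compose the function-perturbation map of covers $f^\ast(\mathfrak{U})_\varepsilon\to g^\ast(\mathfrak{U})_{(2c\delta'+c)\varepsilon}$ from Proposition \ref{prop:stab-func} with the interleaving map $g^\ast(\mathfrak{U})_\sigma\to g^\ast(\mathfrak{V})_{\sigma+\eta}$, then absorb the additive $\eta$ into a multiplicative factor using $\varepsilon\geq\varepsilon_0\geq 1$ before log-reindexing. Your observation about why the truncation at $\eps_0=\max(1,s)$ is forced by the inequality $\eta\leq\eta\tau$ matches the paper's step $\varepsilon(2c\delta'+c)+\eta\leq\varepsilon(2c\delta'+c+\eta)$ exactly.
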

\begin{proof}
Write $\trunc_{\varepsilon_0}(\mathfrak{V}) = \{\mathcal{V}_\varepsilon\}_{\varepsilon\geq \varepsilon_0}$ and $\trunc_{\varepsilon_0}(\mathfrak{U}) = \{\mathcal{U}_\varepsilon\}_{\varepsilon\geq \varepsilon_0}$. Denote $f^\ast(\mathfrak{V})=\{\mathcal{V}_{\varepsilon}^f\}_{\varepsilon\geq \varepsilon_0}$,
$g^\ast(\mathfrak{V})=\{\mathcal{V}^g_{\varepsilon}\}_{\varepsilon\geq \varepsilon_0}$, and
$f^\ast(\mathfrak{U})=\{\mathcal{U}_{\varepsilon}^f\}_{\varepsilon\geq \varepsilon_0}$,
$g^\ast(\mathfrak{U})=\{\mathcal{U}_{\varepsilon}^g\}_{\varepsilon\geq \varepsilon_0}$.
By following the argument and using the notation in Proposition~\ref{prop:stab-func},
for each $\varepsilon\geq \varepsilon_0$ one can find maps of covers
$\mathcal{U}^g_\eps\rightarrow \mathcal{U}^f_{(2c\delta'+c)\eps}$ and
$\mathcal{V}^f_\eps\rightarrow \mathcal{V}^g_{(2c\delta'+c)\eps}$.
Also, since ${\mathfrak{U}}^f$ and
${\mathfrak{V}}^f$ are
$\eta$-interleaved, for each $\eps\geq \varepsilon_0$ there are maps of covers
${\mathcal{U}}^f_\eps\rightarrow \mathcal{V}^f_{\eps+\eta}$ and
$\mathcal{V}^g_\eps\rightarrow {\mathcal{U}}^g_{\eps+\eta}$. Then, for each $\varepsilon\geq \varepsilon_0$ one can form the following diagram
$$\mathcal{U}_\varepsilon^f\longrightarrow \mathcal{U}^g_{\varepsilon(2c\delta'+c)}\longrightarrow \mathcal{V}^g_{\varepsilon(2c\delta'+c)+\eta}\hookrightarrow \mathcal{V}^g_{\varepsilon(2c\delta'+c+\eta)},$$

where the last step follows because since $\varepsilon\geq \varepsilon_0\geq 1$, then  we have that $\varepsilon(2c\delta'+c)+\eta \leq \varepsilon(2c\delta'+c+\eta).$ Thus, by composing the maps intervening in the diagram above we have constructed for any $\varepsilon\geq \varepsilon_0$ a map of covers $\mathcal{U}_\varepsilon^f\longrightarrow \mathcal{V}^g_{\varepsilon(2c\delta'+c+\eta)}$. In a similar manner one can construct a map of covers $\mathcal{V}_\varepsilon^g\longrightarrow \mathcal{U}^f_{\varepsilon(2c\delta'+c+\eta)}$ for each $\varepsilon\geq\varepsilon_0$.

This provides the maps
$\mathcal{U}^f_{\exp(\eps)}\rightarrow {\mathcal{V}}^g_{(2c\delta'+c+\eta)\exp(\eps)}$ and
$\mathcal{V}^g_{\exp(\eps)}\rightarrow {\mathcal{U}}^f_{(2c\delta'+c+\eta)
\exp(\eps)}$ for each $\varepsilon\geq \varepsilon_0$. Since $(2c\delta'+c+\eta)\exp(\eps)=\exp(\eps+ \log (2c\delta'+c+\eta))$, by reindexing by $\log$ we obtain that
$\mathrm{R}_{\log}\big(f^\ast(\trunc_{\varepsilon_0}(\mathfrak{U}))\big)$ and
$\mathrm{R}_{\log}\big(g^\ast(\trunc_{\varepsilon_0}(\mathfrak{V}))\big)$ are $\log(2c\delta'+c+\eta)$-interleaved.
\end{proof}
Similar to the derivation of Corollary~\ref{coro:stab-func}
from Proposition~\ref{prop:stab-func}, we get the following result
from Theorem~\ref{theo:stab-general}.

\begin{theorem}
Let $\mathfrak{U}$ and $\mathfrak{V}$ be any two $\eta$-interleaved,
$(c,s)$-good tower of covers of the compact path connected metric space $Z$ and let $f,g:X\rightarrow Z$ be any two 
well-behaved continuous functions such that $\max_{x\in X}d_Z(f(x),g(x))\leq \delta$. Then, the bottleneck distance between $\mathrm{D}_k\mathrm{MM}\big(
\mathrm{R}_{\log}(\mathfrak{U}),f\big)$ and  $\mathrm{D}_k\mathrm{MM}\big(\mathrm{R}_{\log}(\mathfrak{V}),g\big)$ 
is bounded by $\log\big(2c\max(s,\delta)+c+\eta\big)+
\max(0,\log{\frac{1}{s}})$ for all $k\in\N$.
\label{thm:PDapprox}
\end{theorem}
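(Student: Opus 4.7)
The plan is to chain together three ingredients already established in the paper: (a) the cover-level interleaving provided by Theorem~\ref{theo:stab-general}, (b) the passage from cover interleavings to simplicial-complex interleavings via Proposition~\ref{prop:inter-pullback}, and (c) the algebraic stability theorem of \cite{CCGGO09}, which converts a strong interleaving of persistence modules into a bottleneck bound on persistence diagrams. Then I would handle the same reindexing/truncation mismatch that appears in the proof of Corollary~\ref{coro:stab-func}.

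First, set $\eps_0=\max(1,s)$ and $\eta^\ast:=\log(2c\max(s,\delta)+c+\eta)$. By Theorem~\ref{theo:stab-general}, the $\log\eps_0$-truncations of $\mathrm{R}_{\log}(f^\ast(\mathfrak{U}))$ and $\mathrm{R}_{\log}(g^\ast(\mathfrak{V}))$ are $\eta^\ast$-interleaved as towers of covers of $X$. Applying Proposition~\ref{prop:inter-pullback}, the corresponding nerve towers $\trunc_{\log\eps_0}\mathrm{MM}(\mathrm{R}_{\log}(\mathfrak{U}),f)$ and $\trunc_{\log\eps_0}\mathrm{MM}(\mathrm{R}_{\log}(\mathfrak{V}),g)$ are $\eta^\ast$-interleaved towers of simplicial complexes, with the four commutativity-up-to-contiguity squares of Definition~\ref{def:inter-cover}. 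Passing to homology with field coefficients, contiguous simplicial maps induce identical linear maps (cf.\ \cite{munkres}), so these squares become strictly commuting squares of vector spaces, yielding a strong $\eta^\ast$-interleaving of the two persistence modules in the sense of \cite{CCGGO09}. The algebraic stability theorem of \cite{CCGGO09} then gives a bottleneck distance of at most $\eta^\ast$ between the $k$-th persistence diagrams of the $\log\eps_0$-truncated modules for every $k\in\mathbb{N}$.

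If $s\geq 1$, then $\eps_0=s$ and $\log\eps_0=\log s\geq 0$, but the reindexed tower $\mathrm{R}_{\log}(f^\ast(\mathfrak{W}))$ is in any case only defined for indices $\geq \log s$, so the truncation is vacuous and the bound $\eta^\ast$ is already the theorem's claim (the additive term $\max(0,\log\tfrac{1}{s})$ vanishes). The remaining case $s<1$ is exactly the obstruction that arises in Corollary~\ref{coro:stab-func}: the $\log$-reindexed towers live on indices $\geq \log s < 0$, whereas our interleaving is only guaranteed for indices $\geq \log\eps_0 = 0$. To close this gap, I would reuse the argument in the proof of Corollary~\ref{coro:stab-func}: a class born in the ``extra'' index range $[\log s,0)$ of either module can either die by index $0$, contributing a bar of length at most $|\log s|=\log\tfrac{1}{s}$, or survive and be matched to the diagram of the $0$-truncation, in which case the existing matching given by the strong interleaving may have to be extended to the left by at most $\log\tfrac{1}{s}$. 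Either way the bottleneck distance over the full range increases by at most $\log\tfrac{1}{s}$ relative to the truncated bound, which produces the stated bound $\eta^\ast+\max(0,\log\tfrac{1}{s})$.

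The main obstacle I anticipate is making step three rigorous: the passage from ``four contiguity squares'' to a genuine strong interleaving of persistence modules needs a careful invocation of the standard fact that contiguous simplicial maps induce equal homology maps, so that the four squares in Definition~\ref{def:inter-cover} really do commute on the nose after applying $\mathrm{H}_k$. Everything else is essentially bookkeeping: Theorem~\ref{theo:stab-general} already absorbs the $f$-vs-$g$ perturbation and the $\mathfrak{U}$-vs-$\mathfrak{V}$ perturbation into the single constant $\eta^\ast$, and the truncation correction is structurally identical to the one carried out for Corollary~\ref{coro:stab-func}.
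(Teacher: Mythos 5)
Your proposal is correct and follows essentially the same route as the paper, which derives Theorem~\ref{thm:PDapprox} from Theorem~\ref{theo:stab-general} exactly as Corollary~\ref{coro:stab-func} is derived from Proposition~\ref{prop:stab-func}: pass the cover interleaving through the nerve construction, use contiguity to get a strong interleaving of persistence modules, apply the stability result of \cite{CCGGO09}, and absorb the truncation mismatch for $s<1$ into the additive $\max(0,\log\frac{1}{s})$ term. Your write-up merely makes explicit the steps the paper leaves as ``similar to the derivation of Corollary~\ref{coro:stab-func}.''
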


\section{Exact Computation for PL-functions on simplicial domains}
\label{sec:exactcomp}
The stability result in Theorem~\ref{thm:PDapprox} further motivates us 
to design
efficient algorithms for constructing multiscale mapper or its approximation
in practice. We justify the approximation algorithms 
using the stability results discussed above.

One of the most common types of input in practice
is a real-valued piecewise-linear (PL) function $f: |K| \to \mathbb{R}$ defined on the underlying space $|K|$ of a simplicial complex $K$. That is, $f$ is given at the vertex set $\vrt(K)$ of $K$, and linearly interpolated within any other simplex $\sigma\in K$. 

In what follows, we consider this PL setting, and show that interestingly, if the input function satisfies a mild ``minimum diameter" condition, then we can compute both mapper and multiscale mapper from simply the 1-skeleton (graph structure) of $K$. This makes the computation of the multiscale mapper from a PL function significantly faster and simpler as its time complexity depends on the size of the 1-skeleton of $K$, which is typically orders of magnitude smaller than the total number of simplices (such as triangles, tetrahedra, etc) in $K$.

Given a simplicial complex $K$, let $K_1$ denote the 1-skeleton of $K$: that is, $K_1$ contains the set of vertices and edges of $K$. Define $\tilde f: |K_1| \rightarrow \mathbb{R}$ to be the restriction of $f$ to $|K_1|$; that is, $\tilde f$ is the PL function on $|K_1|$ induced by function values at vertices.

\begin{condition}[Minimum diameter condition]\label{cond:md}
For a given tower of covers $\mathfrak{W}$ of a compact connected metric space $(Z,d_Z)$, let $\kappa(\mathfrak{W}):=\inf\{\diam(W);\,W\in\mathcal{W}\in\mathfrak{W}\}$ 
denote the minimum diameter of any element of any cover of the tower $\mathfrak{W}$. 
Given a simplicial complex $K$ with a function
$f:|K|\rightarrow Z$ and a tower of covers $\mathfrak W$ of the metric space $Z$,
we say that $(K,f,{\mathfrak W})$ satisfies the {\em minimum diameter}
condition if $\diam(f(\sigma)) \leq 
\kappa({\mathfrak W})$ for every simplex $\sigma \in K$.
\end{condition}
In our case, $f$ is a PL function, and thus satisfying the minimum diameter 
condition means that for every edge 
$e = (u,v)\in K_1$, $|f(u) - f(v)| \leq \kappa(\mathfrak{W})$. 
In what follows we assume that $K$ is connected. We do not lose
any generality by this assumption because the arguments below
can be applied to each connected component of $K$.

\begin{definition}
Two towers of  simplicial complexes 
$\mathfrak{S}=\big\{\mathcal{S}_{\varepsilon}\overset{\tiny{s_{\varepsilon,\varepsilon'}}}{\longrightarrow}\mathcal{S}_{\varepsilon'}\big\}$
and
$\mathfrak{T}=\big\{\mathcal{T}_{\varepsilon}\overset{\tiny{t_{\varepsilon,\varepsilon'}}}{\longrightarrow}\mathcal{T}_{\varepsilon'}\big\}$  
are isomorphic, denoted $\mathfrak{S}\cong \mathfrak{T}$, if $\res(\mathfrak{S})=\res(\mathfrak{T})$, and there exist simplicial isomorphisms $\eta_\varepsilon$ and $\eta_{\varepsilon'}$ such that the diagram below commutes for all $\res(\mathfrak{S}) \le \varepsilon \le \varepsilon'$. 
\end{definition}

$$ \xymatrix{
\mathcal{S}_\varepsilon \ar[r]^{s_{\varepsilon,\varepsilon'}} \ar@{=}[d]_{\eta_\varepsilon} & \mathcal{S}_{\varepsilon'} \ar@{=}[d]_{\eta_{\varepsilon'}} \\
\mathcal{T}_\varepsilon \ar[r]^{t_{\varepsilon,\varepsilon'}} & \mathcal{T}_{\varepsilon'} 
}
$$

\noindent Our main result in this section is the following theorem which enables us to compute the mapper, multiscale mapper, as well as the persistence diagram for the multiscale  mapper of a PL function $f$ from its
restriction $\tilde{f}$ to the 1-skeleton of the 
respective simplicial complex.

\begin{theorem}\label{thm:exactcomp}
Given a PL function $f:|K|\rightarrow \mathbb{R}$ 
and a tower of covers $\mathfrak W$ of
the image of $f$ with
$(K,f,{\mathfrak W})$ satisfying the minimum diameter condition, 
one has $\MM(\mathfrak{W},f)\cong \MM(\mathfrak{W},\tilde{f})$.
\end{theorem}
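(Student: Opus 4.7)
The plan is to construct, for each $\varepsilon\geq \res(\mathfrak{W})$, a simplicial isomorphism $\eta_\varepsilon\colon N(\tilde{f}^\ast(\mathcal{W}_\varepsilon))\to N(f^\ast(\mathcal{W}_\varepsilon))$ induced by the inclusion $|K_1|\hookrightarrow |K|$, and then verify that the family $\{\eta_\varepsilon\}_\varepsilon$ commutes with the tower maps. On vertices, $\eta_\varepsilon$ sends each path component $\tilde{C}$ of $\tilde{f}^{-1}(W)$ (for $W\in\mathcal{W}_\varepsilon$) to the unique path component $C$ of $f^{-1}(W)$ that contains $\tilde{C}$; this component exists by continuity of the inclusion.

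The crucial input is a per-simplex analysis. Since $f$ is PL, for every $\sigma\in K$ and every $W\in\mathcal{W}_\varepsilon$ the restriction $f|_\sigma$ is linear and $W\subset\mathbb{R}$ is an open interval, so $f^{-1}(W)\cap|\sigma|$ is convex, and therefore empty or path connected. The minimum diameter condition gives $\diam(f(\sigma))\leq\kappa(\mathfrak{W})\leq\diam(W)$, which forbids $f(\sigma)$ from properly straddling $W$ on both sides at once. A short case analysis on the positions of the vertex values of $\sigma$ relative to $W=(a,b)$ then shows that whenever $f^{-1}(W)\cap|\sigma|\neq\emptyset$, some vertex of $\sigma$ has image in $W$ (the only residual boundary case being $f(\sigma)=[a,b]$ with all vertex values on $\{a,b\}$, which I will treat separately). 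In the generic case, $\tilde{f}^{-1}(W)\cap(|\sigma|\cap|K_1|)$ is non-empty and its image in $f^{-1}(W)\cap|\sigma|$ lies in that single convex piece.

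With this per-simplex picture, bijectivity of $\eta_\varepsilon$ on vertices follows. For surjectivity, any $x\in C$ lies in some $|\sigma|$ and is path-connected inside the convex set $f^{-1}(W)\cap|\sigma|\subseteq C$ to a point of $\tilde{f}^{-1}(W)\cap|K_1|$, so $C$ is hit. For injectivity, a path in $C$ between two points of $\tilde{f}^{-1}(W)$ is homotoped simplex-by-simplex into $|K_1|$: within each simplex $|\sigma|$ the path enters, convexity of $f^{-1}(W)\cap|\sigma|$ allows routing from entry to exit through the 1-skeleton of $\sigma$ without leaving $f^{-1}(W)$. For the extension from vertices to simplices of the nerve, observe that any finite intersection $W^\ast:=\bigcap_i W_i$ of cover elements is again an open interval in $\mathbb{R}$, so the same per-simplex reasoning applied to $W^\ast$ yields $\bigcap_i C_i\neq\emptyset \iff \bigcap_i \tilde{C}_i\neq\emptyset$, establishing that $\eta_\varepsilon$ is a simplicial isomorphism. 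Commutativity with the tower maps is then immediate because both the cover-pullback maps and $\eta_\varepsilon$ are induced by set-theoretic inclusions at the level of cover elements.

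The main obstacle is the injectivity step, because $|K|$ does not in general deformation-retract onto $|K_1|$, so one cannot appeal to any global retraction. What rescues the argument is that only $\pi_0$ of intersections with $f^{-1}(W)$ needs to be preserved, and that each local piece $f^{-1}(W)\cap|\sigma|$ is convex and thus has trivial local topology; the minimum diameter hypothesis is precisely what guarantees that every such non-empty local piece is already witnessed by the 1-skeleton, so no $\pi_0$ information is hidden in the higher-dimensional interiors of simplices.
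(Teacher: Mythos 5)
Your overall architecture coincides with the paper's: a vertex bijection between the two nerves given by $\tilde{C}\mapsto C$ (the paper runs it in the other direction, showing in Proposition \ref{connected-prop} that $U\cap|K_1|$ is connected for each $U\in f^\ast(\mathcal{W})$, so that $U\mapsto U\cap|K_1|$ is the inverse of your $\eta_\varepsilon$), a same-intersection-pattern claim proved by producing a $1$-skeleton witness inside a single simplex, and the remark that compatibility with the tower maps is automatic because all maps are induced by inclusions of cover elements. Your surjectivity argument and your handling of higher nerve simplices (using that $\bigcap_i W_i$ is again an interval; the paper instead uses the level set $f^{-1}(f(x))\cap\sigma$, to the same effect) are sound.

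The gap sits exactly where you locate the main obstacle: injectivity. ``Routing from entry to exit through the $1$-skeleton of $\sigma$ without leaving $f^{-1}(W)$'' does not produce a path in $|K_1|$: the entry and exit points of a segment of your path lie on higher-dimensional faces of $\sigma$, and the arcs connecting them to the $1$-skeleton, while remaining in the convex set $f^{-1}(W)\cap|\sigma|$, leave $|K_1|$. After the rerouting you therefore still only have a path in $f^{-1}(W)$ that \emph{touches} the $1$-skeleton in each simplex, not a path in $\tilde{f}^{-1}(W)$, which is what injectivity requires. To close this you need two statements you never establish: (a) the trace $f^{-1}(W)\cap|\sigma_1|$ on the $1$-skeleton of a \emph{single} simplex is connected (this needs that every edge of $\sigma$ meeting $f^{-1}(W)$ has an endpoint in $f^{-1}(W)$, and that any two vertices of $\sigma$ lying in $f^{-1}(W)$ are joined by an edge contained in $f^{-1}(W)$); and (b) for consecutive simplices along the path, the common face through which the path crosses itself contains a vertex in $f^{-1}(W)$, so that the two single-simplex traces are glued into one component of $\tilde{f}^{-1}(W)$. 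Moreover, both (a) and (b) fail precisely in the boundary case you promise to ``treat separately'' and never return to: for a $2$-simplex with vertex values $a,a,b$ and $W=(a,b)$ with $\kappa(\mathfrak{W})=b-a$, the trace on the $1$-skeleton is two disjoint open edge interiors, so that case genuinely changes $\pi_0$ and cannot be deferred. The paper sidesteps the path-rerouting difficulty entirely with a combinatorial argument: it partitions the simplices meeting $U$ according to which component of $U\cap|K_1|$ their vertices belong to and derives a disconnection of $U$. You should either adopt that argument or supply (a) and (b) explicitly, together with an honest treatment of the boundary case.
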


We show in Proposition \ref{prop:samenerve} that the two mapper outputs
$\M({\mathcal W},f)$ and
$\M({\mathcal W},\tilde f)$ are identical up to a relabeling
of their vertices (hence simplicially isomorphic) for every
${\mathcal W}\in \mathfrak{W}$.
Also, since the simplicial maps in the filtrations
$\MM({\mathfrak W},f)$ and $\MM({\mathfrak W},{\tilde f})$
are induced by the pullback of the same tower of covers ${\mathfrak W}$,
they are identical again up to the same relabeling of the vertices.
This then establishes the theorem.

In what follows, for clarity of exposition, we use $X$ and $X_1$ to denote the underlying space $|K|$ and $|K_1|$ of $K$ and $K_1$, respectively.
Also, we do not distinguish between a simplex $\sigma\in K$ and its
image $|\sigma|\subseteq X$ and thus freely say $\sigma\subseteq X$
when it actually
means that $|\sigma|\subseteq X$ for a simplex $\sigma\in K$.

\begin{proposition} \label{prop:samenerve}
If $(K,f,\mathfrak{W})$ satisfies the minimum diameter condition,
then for every ${\mathcal W}\in\mathfrak{W}$,
$\M({\mathcal W},f)$ is identical to $\M({\mathcal W},\tilde f)$
up to relabeling of the vertices.
\end{proposition}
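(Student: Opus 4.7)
The plan is to construct the desired vertex relabeling by sending each connected component $C$ of $f^{-1}(W)$ (a vertex of $\M(\mathcal{W}, f)$) to the subset $C \cap X_1$, which I shall show is a single connected component of $\tilde{f}^{-1}(W)$ (hence a vertex of $\M(\mathcal{W}, \tilde{f})$), and then verify that this relabeling preserves the simplicial structure of the nerves.

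The central technical step is the following key lemma: under the minimum diameter condition, for every simplex $\sigma \in K$ and every $W \in \mathcal{W}$ with $f(\sigma) \cap W \neq \emptyset$, at least one vertex $v$ of $\sigma$ satisfies $f(v) \in W$. Since $f$ is real-valued PL, $f(\sigma) = [m, M]$ is a closed interval whose endpoints are attained at vertices of $\sigma$; combined with $M - m = \diam(f(\sigma)) \leq \kappa(\mathfrak{W}) \leq \diam(W)$, the short interval $f(\sigma)$ cannot sit strictly outside $W$ on both sides while still meeting $W$, so at least one of $m$ or $M$ lies in $W$. Equipped with this lemma, I would show $C \cap X_1$ is non-empty and connected in $\tilde{f}^{-1}(W)$: non-emptiness is immediate from applying the lemma to any simplex meeting $C$; for connectedness, any two points $u, v \in C \cap X_1$ are joined by a path $\gamma \subseteq f^{-1}(W)$ traversing a finite sequence of simplices $\sigma_0, \ldots, \sigma_m$ such that each pairwise intersection $\sigma_i \cap \sigma_{i+1}$ meets $f^{-1}(W)$, and the lemma supplies vertices in $\tilde{f}^{-1}(W)$ within every $\sigma_i$ and within every shared face. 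Crucially, whenever two vertices of a common simplex both have $f$-value in the convex open interval $W$, the $K_1$-edge between them lies entirely in $\tilde{f}^{-1}(W)$, so concatenating these edges yields an $X_1$-path from $u$ to $v$ inside $\tilde{f}^{-1}(W)$.

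To match the higher simplices of the two nerves, I would show $\bigcap_i C_i \neq \emptyset$ in $X$ iff $\bigcap_i (C_i \cap X_1) \neq \emptyset$ in $X_1$, for any collection of components $C_i \in \pi_0(f^{-1}(W_i))$. The reverse direction is trivial from $X_1 \subseteq X$. For the forward direction, given $p \in \bigcap_i C_i$ contained in a simplex $\sigma$, the edge $e$ of $\sigma$ joining an $f$-minimum vertex to an $f$-maximum vertex satisfies $f(e) = f(\sigma)$, so some $q \in e \subseteq X_1$ has $f(q) = f(p) \in \bigcap_i W_i$; the straight segment from $p$ to $q$ within $\sigma$ takes the constant $f$-value $f(p)$ under the affine $f|_\sigma$, hence lies in $\bigcap_i f^{-1}(W_i)$ and witnesses that $q \in C_i$ for every $i$, giving $q \in \bigcap_i (C_i \cap X_1)$. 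I expect the main obstacle to be the boundary case of the key lemma where $\diam(f(\sigma)) = \diam(W)$ exactly and vertex values sit precisely at the endpoints of the open set $W$; handling this degeneracy will require either a strict-inequality reading of the minimum diameter condition or a careful use of the openness of cover elements to exclude the problematic configurations.
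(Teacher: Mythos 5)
Your proposal is correct and follows essentially the same route as the paper: you identify each component $C$ of $f^{-1}(W)$ with $C\cap X_1$, prove that this intersection is a single component of $\tilde f^{-1}(W)$ (this is the paper's Proposition \ref{connected-prop}), and then match higher simplices by producing, from a point $p\in\bigcap_i C_i$ lying in a simplex $\sigma$, a point $q$ on an edge of $\sigma$ with $f(q)=f(p)$, joined to $p$ by a level segment inside $\sigma$ (this is exactly the content of Claims \ref{intersect-prop} and \ref{claim:PLcase}). The one genuine divergence is the connectedness step: you argue directly, lifting a path in $C$ through a chain of simplices, extracting from each simplex and each shared face a vertex with $f$-value in $W$ via your key lemma, and joining consecutive such vertices by edges that lie in $f^{-1}(W)$ because $W$ is an interval and $f$ is affine on edges; the paper instead argues by contradiction, partitioning the simplices meeting $U$ according to which alleged component of $U\cap X_1$ contains one of their vertices and deducing that $U$ itself would be disconnected. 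Both arguments hinge on the same fact---every simplex meeting $f^{-1}(W)$ has a vertex in $f^{-1}(W)$---and your direct version is arguably more transparent. Finally, the degenerate configuration you flag is real, not a defect of your write-up relative to the paper's: if an edge $(u,v)$ has $f(u),f(v)$ sitting exactly at the two endpoints of the open interval $W$ with $|f(u)-f(v)|=\diam(W)=\kappa(\mathfrak{W})$ (e.g.\ a triangle $u,v,w$ with $f(u)=0$, $f(v)=1$, $f(w)=\tfrac12$ and $W=(0,1)$), then $U\cap X_1$ genuinely disconnects while $U$ does not; the paper's proof dismisses this case only via the unjustified strict inequality $|f(u)-f(v)|>|\max_U f-\min_U f|$, so a strict reading of Condition \ref{cond:md}, or the exclusion of this boundary configuration, is needed in both proofs.
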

\begin{proof}
Let ${\mathcal U}=f^*(\mathcal W)$ and $\tilde {\mathcal U}={\tilde f}^*(\mathcal W)$.
By definition of $\tilde f$, each $\tilde U\in \tilde{\mathcal U}$
is a connected component of some $U\cap X_1$ for some $U\in \mathcal U$.
In Proposition~\ref{connected-prop}, we show that
$U\cap X_1$ is connected for every $U\in \mathcal U$.
Therefore, for every element $U\in \mathcal{U}$,
there is a unique element $\tilde U=U\cap X_1$ in $\tilde {\mathcal U}$
and vice versa.
Claim \ref{intersect-prop} finishes the proof. \end{proof}
\begin{claim}
$\bigcap_{i=1}^k U_i\not=\emptyset$ if and only if
$\bigcap_{i=1}^k {\tilde U}_i \not=\emptyset$.
\label{intersect-prop}
\end{claim}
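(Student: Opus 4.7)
The reverse implication is immediate: since $\tilde U_i = U_i\cap X_1\subseteq U_i$ for every $i$, any common point of the $\tilde U_i$'s is also a common point of the $U_i$'s.

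For the forward implication, my plan is to take an arbitrary point $x\in\bigcap_i U_i$ and manufacture a point of the 1-skeleton $X_1$ lying in the same intersection. Each $U_i$ is the connected component of $f^{-1}(W_{\alpha_i})$ containing $x$ for some $W_{\alpha_i}\in\mathcal{W}$; since we are in the setting $Z=\mathbb{R}$ and each cover element is path-connected and open, $W_{\alpha_i}$ is a convex open interval. Let $\sigma\in K$ be the carrier simplex of $x$ (the unique simplex whose relative interior contains $x$). The first key observation is that $f|_\sigma$ is affine, so $\sigma\cap f^{-1}(W_{\alpha_i})$ is the preimage of a convex set under an affine map and is therefore convex, in particular connected. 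Since this set already contains $x\in U_i$, the definition of $U_i$ as a connected component forces $\sigma\cap f^{-1}(W_{\alpha_i})\subseteq U_i$ for every $i$. Consequently any point $y\in\sigma$ with $f(y)\in\bigcap_i W_{\alpha_i}$ automatically lies in $\bigcap_i U_i$.

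The second step is to produce such a $y$ inside $X_1$. Let $v_{\min}$ and $v_{\max}$ be vertices of $\sigma$ attaining the minimum and maximum of $f$ over $\vrt(\sigma)$, respectively. If $v_{\min}=v_{\max}$ then $f$ is constant on $\sigma$ and any vertex of $\sigma$ serves as $y$. Otherwise the edge $e=[v_{\min},v_{\max}]$ is a 1-simplex of $K$, so $e\subseteq X_1$, and by affineness $f(e)=[f(v_{\min}),f(v_{\max})]=f(\sigma)$. Since $f(x)\in f(\sigma)\cap\bigcap_i W_{\alpha_i}$ witnesses that this intersection is non-empty, and $f(e)=f(\sigma)$, there exists $y\in e$ with $f(y)\in\bigcap_i W_{\alpha_i}$. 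By the first observation this $y$ lies in $\bigcap_i U_i$, and it lies in $X_1$ by construction, so $y\in\bigcap_i\tilde U_i$ as required.

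The main obstacle I anticipate is the passage from a point possibly lying deep in the interior of a high-dimensional simplex to a point of the 1-skeleton: a generic convex slice of a $d$-simplex need not meet its edges at all. The trick that resolves this is that the relevant slice $\sigma\cap f^{-1}(\bigcap_i W_{\alpha_i})$ is not generic — because $f|_\sigma$ is affine, the image $f(\sigma)$ already coincides with $f(e)$ for the edge $e$ between the extreme-value vertices, so any $f$-value constraint realized on $\sigma$ is already realized on $e$. I note that the minimum diameter condition does not enter this argument directly; it is needed upstream (via Proposition~\ref{connected-prop}) to guarantee that $U_i\cap X_1$ is connected, so that $\tilde U_i:=U_i\cap X_1$ is a bona fide single element of $\tilde f^*(\mathcal{W})$ and the bijection $U\leftrightarrow\tilde U$ makes the claim well-posed.
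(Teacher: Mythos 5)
Your proof is correct and follows essentially the same route as the paper: both arguments restrict attention to the carrier simplex $\sigma$ of $x$, use the affineness of $f$ on $\sigma$ to produce a point $y$ on an edge of $\sigma$ with $f(y)=f(x)$ (the paper does this via its Claim~\ref{claim:PLcase}, you via the edge joining the min- and max-valued vertices), and then use connectivity of the relevant slice of $\sigma$ to conclude $y\in\bigcap_i U_i\cap X_1$. The only cosmetic difference is that the paper connects $x$ to $y$ inside the level set $f^{-1}(f(x))\cap\sigma$, whereas you use the larger convex slice $\sigma\cap f^{-1}(W_{\alpha_i})$; your closing remark about where the minimum diameter condition actually enters is also accurate.
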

\begin{proof}
Clearly, if $\bigcap_{i=1}^k {\tilde U}_i\not = \emptyset$,
then $\bigcap_{i=1}^k U_i\not=\emptyset$ because
$\tilde U_i \subseteq U_i$.
For the converse, assume $\bigcap_{i=1}^k U_i\neq \emptyset$ and
pick any $x\in \bigcap_{i=1}^k U_i$. Let $\sigma\subseteq X$ be a simplex
which contains $x$. Consider the level set $L=f^{-1}(f(x))$.
Since $f$ is PL, $L_\sigma=L\cap \sigma$
is connected and contains a point $y\in\sigma\cap X_1$.
Then, $y\in U_i$ for each $i\in\{1,\ldots,k\}$
because $x$ and $y$ are connected by a path in $U_i\cap\sigma$
and $f(x)=f(y)$. Therefore, $y\in \bigcap_i U_i$.
Since $y\in X_1$ it follows that $y\in \bigcap_i {\tilde U}_i$.
\end{proof}

\begin{proposition}
If $(X,f,\mathfrak{W})$ satisfies the minimum diameter condition,
then for every ${\mathcal W}\in \mathfrak W$
and every $U\in f^*(\mathcal W)$, the
set $U\cap X_1$ is connected.
\label{connected-prop}
\end{proposition}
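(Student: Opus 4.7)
The plan is to reduce the path-connectedness of $U\cap X_1$ to the connectedness of the vertex graph $G\subseteq K_1$ with $V(G)=\vrt(K)\cap U$ and $E(G)=\{[u,v]\in K_1\colon u,v\in U\}$, exploiting the PL structure of $f$ together with the minimum-diameter condition. The crux is the auxiliary claim that \emph{every simplex $\sigma\in K$ meeting $U$ contains a vertex that lies in $U$}. Writing $W=(c,d')$ and $f(\sigma)=[m,M]$, the hypothesis $\sigma\cap f^{-1}(W)\neq\emptyset$ forces $m<d'$ and $M>c$. If no vertex of $\sigma$ lay in $U$, every vertex value would sit in $(-\infty,c]\cup[d',\infty)$, and since $m<d'$ and $M>c$ both ``low'' and ``high'' vertices would be present, giving $M-m\geq d'-c=\diam(W)\geq\kappa(\mathfrak W)\geq\diam(f(\sigma))=M-m$. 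The resulting degenerate alignment $f(\sigma)=\overline{W}$ is excluded by the minimum-diameter condition (interpreted as a strict upper bound in the generic case where the infimum defining $\kappa(\mathfrak W)$ is not attained by the specific $W$ at hand), so some vertex $v$ of $\sigma$ must have $f(v)\in W$.

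Given the auxiliary claim, for any $p\in\sigma\cap U$ and vertex $v\in\sigma\cap U$ the straight segment $[p,v]\subseteq\sigma$ lies entirely in $U$, because $f$ is linear on $\sigma$ and its values along $[p,v]$ sit in the convex hull of $\{f(p),f(v)\}\subseteq W$. Two consequences follow: every edge of $K_1$ with both endpoints in $U$ lies entirely in $U\cap X_1$ (so $|G|\subseteq U\cap X_1$), and every $p\in\sigma\cap U$ is joined in $U$ to some vertex of $\sigma$ in $V(G)$.

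Next I would show that $G$ is connected. Given $u,w\in V(G)$, pick a PL path $\gamma\subseteq U$ from $u$ to $w$, which exists because $U$ is open and path-connected in the finite complex $|K|$. Then $\gamma$ traverses finitely many closed simplices $\sigma_0,\dots,\sigma_k$ with consecutive pairs meeting in a face $\tau_i$ that $\gamma$ actually enters, so $\tau_i\cap U\neq\emptyset$, and applying the auxiliary claim to $\tau_i$ and to each $\sigma_i$ yields vertices $w_i\in\tau_i\cap U$ and $v_i\in\sigma_i\cap U$. Every consecutive pair in the sequence $u,v_0,w_0,v_1,w_1,\dots,v_k,w$ shares a common simplex of $K$ and is therefore joined by an edge of $K_1$ with both endpoints in $U$, i.e.\ an edge of $G$; concatenating yields a walk from $u$ to $w$ in $G$.

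Finally, any $p\in U\cap X_1$ is either already a vertex of $V(G)$, or lies on the open portion of an edge $e=[a,b]\in K_1$; applying the auxiliary claim to the $1$-simplex $e$ gives, say, $a\in U$, and the straight-line argument shows $[a,p]\subseteq e\cap U\subseteq U\cap X_1$, connecting $p$ to $a\in V(G)\subseteq|G|$ inside $U\cap X_1$. Combined with the connectedness of $G$ this proves $U\cap X_1$ is path-connected. The main obstacle is the auxiliary claim of the first step: the minimum-diameter condition has to be used carefully at the boundary equality $f(\sigma)=\overline{W}$, but once that is handled the rest of the argument is a mechanical walk-building construction in the $1$-skeleton.
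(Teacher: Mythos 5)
Your argument rests on the same two pillars as the paper's proof: (a) every simplex that meets $U$ must contain a vertex lying in $U$, forced by the minimum diameter condition, and (b) any edge of $K_1$ with both endpoints in $U$ lies entirely inside $U$, by linearity of $f$ on the edge and convexity of the interval $W$. What differs is the architecture. The paper argues by contradiction: it assumes $U\cap X_1$ splits into components $C_1,\ldots,C_k$ with $k\geq 2$, shows each $C_i$ contains a vertex, partitions the simplices meeting $U$ according to which $C_i$ holds their vertices, and concludes that $U$ itself would be disconnected. You instead build explicit edge-walks in the vertex graph $G$ by following a PL path inside $U$ through the finite sequence of simplices it crosses and harvesting a vertex of $U$ from each simplex and each transition face. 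Your version is constructive and makes the reduction to the $1$-skeleton very concrete; the price is that you must invoke (and should at least cite) the facts that an open path-connected subset of a finite polyhedron is PL-path-connected and that a PL path crosses only finitely many simplices, whereas the paper's contradiction argument avoids manipulating paths altogether.

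The one genuine soft spot is exactly the boundary case you flag in your auxiliary claim. The minimum diameter condition is stated as the non-strict inequality $\diam(f(\sigma))\leq \kappa(\mathfrak{W})$, so the degenerate alignment $f(\sigma)=\overline{W}$ with $\diam(f(\sigma))=\diam(W)=\kappa(\mathfrak{W})$ is not excluded by the hypotheses, and ``reinterpreting'' the condition as strict is a repair of the statement rather than a deduction from it. This is not a cosmetic issue: for a triangle $\sigma=[u,v,w]$ with $f(u)=\inf W$ and $f(v)=f(w)=\sup W$, the set $\sigma\cap f^{-1}(W)$ is connected but meets $X_1$ in two disjoint open segments and contains no vertex of $K$, so both the auxiliary claim and the proposition itself can fail at equality. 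You should know, however, that the paper's own proof has the identical weakness (it asserts $|f(u)-f(v)|>|\max_U f-\min_U f|\geq\kappa(\mathfrak{W})$, and neither inequality survives the equality case), so this is best read as a needed strengthening of Condition 5.1 to a strict inequality rather than as a defect peculiar to your argument. Apart from that shared caveat, your proof is correct.
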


\begin{proof}
Fix $U\in f^\ast(\mathcal{W})$. If $U\cap X_1$ is not connected, let $C_1,\ldots,C_k$ denote its $k\geq 2$
connected components. First, we show that each $C_i$
contains at least one vertex of $X_1$. Let $e=(u,v)$ be any
edge of $X_1$ that intersects $U$. If both ends $u$ and
$v$ lie outside $U$, then $|f(u)-f(v)|> |\max_U f -\min_U f|\geq \kappa(\mathfrak{W})$.
But, this violates the minimum diameter condition. Thus, at least
one vertex of $e$ is contained in $U$. It immediately follows that
$C_i$ contains at least one vertex of $X_1$.

Let $\Delta$ be the set of all simplices $\sigma\subseteq X$ so that
$\sigma\cap U\not=\emptyset$. Fix $\sigma\in \Delta$ and let  $x$ be any point in
$\sigma\cap U$.

\begin{claim}\label{claim:PLcase}
There exists a point $y$ in an edge of $\sigma$ so that
$f(x)=f(y)$.
\end{claim}
\begin{proof}
We first observe that
$\vrt^+(\sigma):=\{v\in\vrt(\sigma)|f(v)\geq f(x)\}$
is non-empty. Otherwise $f(v) < f(x)$ for
for all vertices $v\in\sigma$.
In that case $f(x')<f(x)$ for all points $x'\in\sigma$ contradicting that $x$ belongs to $\sigma$.
Now, if $\vrt^+(\sigma)=\vrt(\sigma)$, there is an edge
$e\subseteq \sigma$ so that $x\in e$ and taking
$y=x$ serves the purpose. If $\vrt^+(\sigma)\not=\vrt(\sigma)$,
there is a vertex $u\in \vrt(\sigma)$ so that $f(u)< f(x)$.
Taking a vertex $v$ from the non-empty set $\vrt^+(\sigma)$, we get
an edge $e=(u,v)$ where $f(u)<f(x)$ and $f(v)\geq f(x)$.
Since $f$ is PL, it follows that $e$ has a point $y$ so that
$f(y)=f(x)$.
\end{proof}

Since $\sigma$ contains an edge $e$ that is intersected by $U$,
it contains a vertex of $e$ that is contained in $U$. This means
every simplex $\sigma\in \Delta$ has a vertex contained
in $U$. For each $i=1,\ldots,k$ let $\Delta_i:=\{\sigma\subseteq X\,|\, \vrt(\sigma)\cap C_i\neq \emptyset.\}$ Since every simplex $\sigma\in\Delta$ has a vertex contained in $U$,
we have $\Delta=\bigcup_i \Delta_i$. We argue that the sets $\Delta_1,\ldots,\Delta_k$
are disjoint from each other.  Otherwise, there exist  $i\neq j$ and
a simplex $\sigma$ with a vertex $u$ in $\Delta_i$ and
another vertex $v$ in $\Delta_j$. Then, the edge $(u,v)$ must be
in $U$ because $f$ is PL. But, this contradicts that $C_i$ and $C_j$
are disjoint. This establishes that each $\Delta_i$ is disjoint from
each other and hence $\Delta$ is not connected
contradicting that $U$ is connected. Therefore, our initial assumption that
$U\cap X_1$ is disconnected is wrong.
\end{proof}

\subsection{Real-valued functions on triangulable topological spaces}
\label{appendix:subsec:PLapprox}
We note that, by using PL functions to approximate the continuous functions, the above result also leads to an approximation of the multiscale mapper for real-valued functions defined on triangulable topological spaces.

A PL function $f: |K| \to \mathbb{R}$ \emph{$\delta$-approximates} a continuous function $g: X \to \mathbb{R}$ defined on a topological space $X$ if there exists a homeomorphism
$h:X\rightarrow |K|$ such that for any point $y\in |K|$, we have that $|f(y) - g\circ h^{-1}(y)| \le \delta$.
The following result states that if a PL function $\delta$-approximates a continuous real-valued function on $X$, then the persistence diagrams induced by the respective multiscale mappers are also close. 
\begin{corollary}\label{cor:apprX}
Let  $f: |K| \to \mathbb{R}$ be a piecewise-linear function on $K$ that $\delta$-approximates a continuous function $g: X \to \mathbb{R}$. Let $\mathfrak W$ be a $(c,s)$-good tower of covers of $\mathbb{R}$. Then the bottleneck distance between the persistence diagrams $\mathrm{D}_k\mathrm{MM}\big(\mathrm{R}_{\log}(\mathfrak{W}),f\big)$ and  $\mathrm{D}_k\mathrm{MM}\big(\mathrm{R}_{\log}(\mathfrak{W}),g\big)$ is
at most $\log\big(2c\max(s,\delta)+c\big)+\max(0,\log\frac{1}{s})$ for all $k \in \mathbb{N}$.
\end{corollary}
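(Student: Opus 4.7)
The plan is to reduce Corollary \ref{cor:apprX} to Theorem \ref{thm:PDapprox} via the homeomorphism $h:X\to|K|$ that witnesses the $\delta$-approximation. Since Theorem \ref{thm:PDapprox} compares two functions defined on the \emph{same} domain, my first step is to transport $f$ to $X$ by composition, setting $\tilde f := f\circ h : X\to\R$. By the definition of $\delta$-approximation, for every $x\in X$ we have $|\tilde f(x)-g(x)|=|f(h(x))-g(h^{-1}(h(x)))|\leq \delta$, so $\max_{x\in X}|\tilde f(x)-g(x)|\leq \delta$.

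Next, I would verify that the homeomorphism does not change the multiscale mapper, i.e.\ $\MM(\mathrm{R}_{\log}(\mathfrak{W}),f)\cong \MM(\mathrm{R}_{\log}(\mathfrak{W}),\tilde f)$. For any cover $\mathcal{W}\in \mathfrak{W}$ and any $U\in\mathcal{W}$, one has $\tilde f^{-1}(U)=(f\circ h)^{-1}(U)=h^{-1}(f^{-1}(U))$. Because $h$ is a homeomorphism, it sends path-connected components to path-connected components bijectively; hence the pullback cover $\tilde f^\ast(\mathcal{W})$ is in canonical one-to-one correspondence with $f^\ast(\mathcal{W})$, with the same intersection pattern. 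Consequently the nerves $N(\tilde f^\ast(\mathcal{W}))$ and $N(f^\ast(\mathcal{W}))$ are isomorphic, and these isomorphisms commute with the simplicial maps induced by the maps of $\mathfrak{W}$ (and by the reindexing $\mathrm{R}_{\log}$, which only relabels parameters). This gives an isomorphism of towers and hence equality of persistence diagrams $\mathrm{D}_k\MM(\mathrm{R}_{\log}(\mathfrak{W}),f)=\mathrm{D}_k\MM(\mathrm{R}_{\log}(\mathfrak{W}),\tilde f)$ for every $k$.

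Finally, I apply Theorem \ref{thm:PDapprox} to the pair of well-behaved continuous maps $\tilde f,g:X\to\R$ with the \emph{same} $(c,s)$-good tower $\mathfrak{U}=\mathfrak{V}=\mathfrak{W}$, which is trivially $\eta$-interleaved with itself for $\eta=0$. The theorem then yields
\[
d_B\bigl(\mathrm{D}_k\MM(\mathrm{R}_{\log}(\mathfrak{W}),\tilde f),\ \mathrm{D}_k\MM(\mathrm{R}_{\log}(\mathfrak{W}),g)\bigr)\ \leq\ \log\bigl(2c\max(s,\delta)+c\bigr)+\max\!\bigl(0,\log\tfrac{1}{s}\bigr),
\]
and combining with the isomorphism from the previous step gives the claimed bound.

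The step I expect to require the most care is the second one: ensuring that the tower isomorphism is genuinely an isomorphism of \emph{towers} (commuting with all maps), not merely a pointwise-in-$\varepsilon$ isomorphism of nerves. This hinges on the fact that the map-of-covers $\tilde f^\ast(\xi)$ induced by a refinement $\xi:\mathcal{W}_\varepsilon\to\mathcal{W}_{\varepsilon'}$ is determined by the set-theoretic inclusion of path-components, which is preserved under $h^{-1}$; together with the functoriality identity $\tilde f^\ast(\zeta\circ\xi)=\tilde f^\ast(\zeta)\circ \tilde f^\ast(\xi)$ already recorded at the end of \S\ref{sec:background}, this makes the diagrams commute on the nose. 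The rest is a direct invocation of Theorem \ref{thm:PDapprox}.
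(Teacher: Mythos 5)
Your proof is correct and is essentially the paper's own argument in mirror image: the paper pushes $g$ forward to $|K|$ via $h$ and applies Corollary \ref{coro:stab-func}, whereas you pull $f$ back to $X$ via $h$ and apply Theorem \ref{thm:PDapprox} with $\mathfrak{U}=\mathfrak{V}=\mathfrak{W}$ and $\eta=0$, which yields the identical bound. In both cases the remaining ingredient is the same observation that the homeomorphism induces an isomorphism of the pullback towers (hence of persistence diagrams), which you justify in somewhat more detail than the paper does.
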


\begin{proof}
Let $\tilde{g}: |K| \to \mathbb{R}$ denote the push forward of $g: X \to \mathbb{R}$ by the homeomorphism $h: X \to |K|$. By the definition of $\delta$-approximation, we know that $\|f - \tilde g\|_\infty \le \delta$. Hence by Corollary \ref{coro:stab-func}, the bottleneck distance between the persistence diagrams $\mathrm{D}_k\mathrm{MM}\big(\mathrm{R}_{\log}(\mathfrak{W}),f\big)$ and  $\mathrm{D}_k\mathrm{MM}\big(\mathrm{R}_{\log}(\mathfrak{W}),\tilde g)$ is
at most $\log(2c\max(s,\delta)+c)+\max(0,\log\frac{1}{s})$ for all $k \in \mathbb{N}$.

On the other hand, since $h$ is homeomorphism, it is easy to verify that the pullback of $\mathfrak{W}$ via $g$ and via $\tilde g$ induce isomorphic persistence modules: $\mathrm{H}_k(\mathrm{MM}(\mathrm{R}_{\log}(\mathfrak{W}),g))) \cong \mathrm{H}_k(\mathrm{MM}(\mathrm{R}_{\log}(\mathfrak{W}),\tilde g)))$. Combining this with the discussion in previous paragraph, the claim then follows.
\end{proof}

\section{Approximating multiscale mapper for general maps}
\label{sec:approxcomp}
\newcommand{\ac}			{\rho}
\newcommand{\newW}		{\mathfrak{W}} 
\newcommand{\dW}			{\mathfrak{W}} 
\newcommand{\Gnerve}		{\mathrm{M}^{K_1}}
\newcommand{\Nnerve}		{\mathrm{M}}
\newcommand{\MG}[1]		{{\mathrm{M}^{#1}}}
\newcommand{\MMG}[1]		{{\mathrm{MM}^{#1}}}

While results in the previous section concern real-valued PL-functions, we now provide a significant generalization for the case where $f$ maps the underlying space of $K$ into an arbitrary compact metric space $Z$. We present a ``combinatorial'' version of the (multiscale) mapper where each connected component of
a pullback $f^{-1}(W)$ for any cover $W$ in the cover of $Z$ 
consists of only vertices of $K$. Hence, the construction of the Nerve complex for this modified (multiscale) mapper is purely combinatorial, simpler, and more efficient to implement. But we lose the ``exactness'', that is, in contrast with the guarantees provided by Theorem \ref{thm:exactcomp}, the simpler combinatorial mapper 
only \emph{approximates} the actual multiscale mapper at the homology level.
Also, it requires a $(c,s)$-good tower of covers of $Z$. One more caveat is that the towers of simplicial complexes arising in this case do not interleave in the (strong) sense of Definition \ref{def:inter-cover} but in a weaker sense. This limitation worsens the approximation result by a factor of $3$. 

In what follows, $\mathrm{cc}(O)$ for a set $O$ denotes the \emph{set} of all path connected components of $O$.

Given a map $f: |K|\to Z$ defined on the underlying space $|K|$ of a simplicial complex $K$, to construct the mapper and multiscale mapper, one needs to compute the pullback cover $f^*(\mathcal{W})$ for a cover $\mathcal{W}$ of the compact metric space $Z$.
Specifically, for any $W\in\mathcal{W}$ one needs to compute the pre-image $f^{-1}(W)\subset |K|$ and shatter it into connected components. Even in the setting adopted in \S \ref{sec:exactcomp}, where we have a PL function $ \tilde{f}: |K_1|\to \mathbb{R}$ defined on the 1-skeleton $K_1$ of $K$, the connected components in $\cc(\tilde{f}^{-1}(W))$ may contain vertices, edges, and also \emph{partial edges}: say for an edge $e \in K_1$, its intersection $e_W = e \cap f^{-1}(W) \subseteq e$, that is, $f(e_W) = f(e) \cap W$, is a partial edge. See Figure \ref{fig:comb-cc} for an example.  In general for more complex maps, $\sigma \cap f^{-1}(W)$ for any $k$-simplex $\sigma$ may be partial triangles, tetrahedra, etc., which can be nuisance for computations. The combinatorial version of mapper and multiscale mapper sidesteps this problem by ensuring that each connected component in the pullback $f^{-1}(W)$ consists of \emph{only vertices of $K$}.

\begin{figure}

\centerline{\includegraphics[width=0.5\textwidth]{./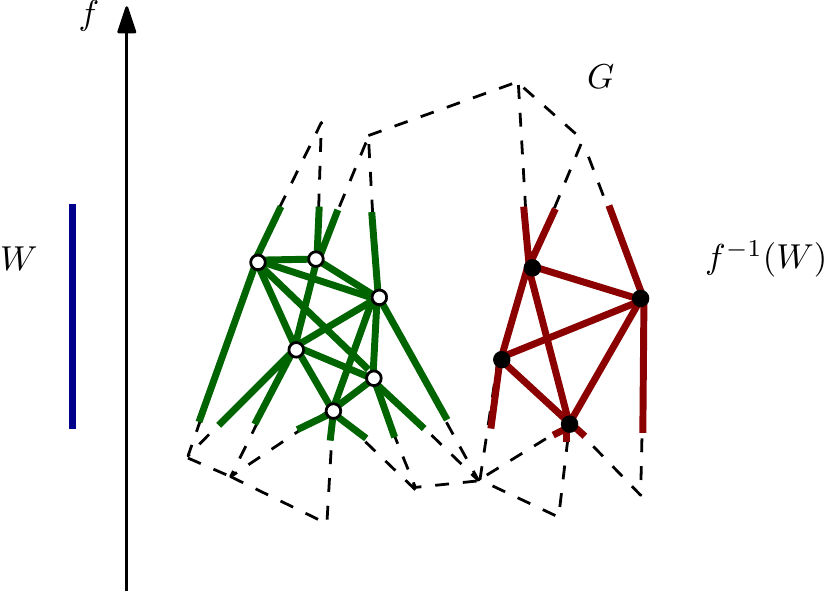}}

\caption{Partial thickened edges belong to the two connected components in $f^{-1}(W)$. Note that each set in $\cc_G(f^{-1}(W)$ contains only the set of vertices of a component in $\cc(f^{-1}(W))$.}
\label{fig:comb-cc}
\end{figure}

\subsection{Combinatorial mapper and multiscale mapper} 
Let $G$ be a graph with vertex set $\vrt(G)$ and edge set $\edg(G)$. Suppose we are given a map $f: \vrt(G) \to Z$ and a finite open cover $\mathcal{W} = \{W_\alpha \}_{\alpha\in A}$ of the metric space $(Z, d_Z)$. For any $W_\alpha \in \mathcal{W}$,  the pre-image $f^{-1}(W_\alpha)$ consists of a set of vertices 
which is shattered into
subsets by the connectivity of the graph $G$.
These subsets are taken as connected components.
We now formalize this:

\begin{definition}
Given a set of vertices $O \subseteq \vrt(G)$, \emph{the set of connected components of $O$ induced by $G$}, denoted by $\cc_G(O)$, is the partition of $O$ 
into a maximal subset of vertices connected in $G_O \subseteq G$, the subgraph spanned by vertices in $O$. We refer to each such maximal subset of vertices as a \emph{$G$-induced connected component of $O$}. We define $f^{\ast_G}(\mathcal{W})$, the \emph{$G$-induced pull-back via the function $f$}, as the collection of all $G$-induced connected components $\cc_G(f^{-1}(W_\alpha))$ for all $\alpha \in A$. 
\end{definition}

\begin{definition}($G$-induced multiscale mapper) \label{def:combMM}
Similar to the mapper construction, we define the \emph{$G$-induced mapper $\mathrm{M}^{G}(\mathcal{W},f)$} as the nerve complex $N(f^{*_{G}}(\mathcal{W}))$. 

Given a tower of covers  $\mathfrak{W}=\{\mathcal{W}_\varepsilon\}$ of $Z$, we define the \emph{$G$-induced multiscale mapper $\mathrm{MM}^G (\mathfrak{W},f)$} as the tower of $G$-induced nerve complexes $\{ N(f^{*_G}(\mathcal{W}_\eps)) \mid \mathcal{W}_\eps \in \mathfrak{W} \}$.
\end{definition}

\begin{algorithm}[t]
\caption{Combinatorial Multiscale Mapper} \label{alg:comb-mapper}
\begin{algorithmic}
\Require $f: |K| \to Z$ given by $f_V : V(K) \to Z$, a tower of covers $\mathfrak{W} = \{ \mathcal{W}_1, \ldots, \mathcal{W}_t \}$ 
\Ensure Persistence diagram $\mathrm{D}_*(\MM(\mathfrak{W}, f))$ 
\For{$i = 1, \ldots, t$}
\State{compute $V_W \subseteq V(K)$ where $f(V_W) = f(V(K)) \cap W$ and $\{V_W^j\}_j = \cc_{K_1}(V_W)$, for $\forall W \in \mathcal{W}_i$;}
\State{compute Nerve complex $N_i = N(\{V_W^j \}_{j, W} )$.}
\EndFor
\State{compute $\mathrm{D}_*( \{ N_i \!\to \!N_{i+1},\,{i\in [1, t-1]} \})$.}
\end{algorithmic}
\end{algorithm}

\subsection{Advantage of combinatorial multiscale mapper}
Given a map $f: |K| \to Z$ defined on the underlying space $|K|$ of a simplicial complex $K$, let $f_V: \vrt(K) \to \mathbb{R}$ denote the restriction of
$f$ to the vertices of $K$. Consider the graph $K_1$ as providing connectivity information for vertices in $\vrt(K)$. 
Given any tower of covers $\mathfrak{W}$ of the metric space $Z$,  
 the $K_1$-induced multiscale mapper $\mathrm{MM}^{K_1}(\mathfrak{W},f_V)$ is called the \emph{combinatorial multiscale mapper of $f$ w.r.t. $\mathfrak{W}$}.

A simple description of the computation of the combinatorial mapper is in Algorithm \ref{alg:comb-mapper}. For the simple PL example in Figure \ref{fig:comb-cc}, $f^{-1}(W)$ contains two connected components, one consists of the set of white dots, while the other consists of the set of black dots. More generally, the construction of the pullback cover needs to inspect only the 1-skeleton $K_1$ of $K$, which is typically of significantly smaller size. Furthermore, the construction of the Nerve complex $N_i$ as in Algorithm \ref{alg:comb-mapper} is also  much simpler: We simply remember, for each vertex $v\in V(K)$, the set $I_v$ of ids of connected components $\{V_W^j\}_{j, W \in \mathcal{W}_i}$ which contain it. Any subset of $I_v$ gives rise to a simplex in the Nerve complex $N_i$.

Let $\MM(\mathfrak{W},f)$ denote the standard multiscale mapper as introduced in \S\ref{sec:multimapper}. Our main result in this section is that if $\mathfrak{W}$ is a $(c,s)$-good tower of covers of $Z$, then the resulting two towers of simplicial complexes, $\MM(\mathfrak{W},f)$ and $\mathrm{MM}^{K_1}(\mathfrak{W},f_V)$, interleave in a sense which is weaker than that of Definition \ref{def:inter-cover} but still admits a bounded distance between their
respective persistence diagrams 
as a consequence of the weak interleaving result of \cite{CCGGO09}.
This weaker setting only worsens the approximation by a factor of $3$.

\begin{theorem}\label{thm:combinatorialrelation}
Assume that $(Z,d_Z)$ is a compact and connected metric space. Given a map $f: |K| \to Z$ defined on the underlying space of a simplicial complex $K$, let $f_V: \vrt(K) \to Z$ be the restriction of $f$ to the vertex set $\vrt(K)$ of $K$.  

Given a $(c,s)$-good tower of covers $\mathfrak{W}$ of $Z$ such that $(K,f,\mathfrak{W})$ satisfies the minimum diameter condition (cf. Condition \ref{cond:md}), 
the bottleneck distance between the persistence diagrams $\mathrm{D}_k\mathrm{MM}\big(\mathrm{R}_{\log}(\mathfrak{W}),f\big)$ and  
$\mathrm{D}_k\mathrm{MM}^{K_1}\big(\mathrm{R}_{\log}(\mathfrak{W}),f_V\big)$ is 
at most $3\log(3c)+3\max(0,\log\frac{1}{s})$ for all $k \in \mathbb{N}$. 
\end{theorem}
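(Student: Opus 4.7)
The plan is to construct a (possibly weak) interleaving in the sense of \cite{CCGGO09} between the two towers $\MM(\mathrm{R}_{\log}(\mathfrak{W}),f)$ and $\MM^{K_1}(\mathrm{R}_{\log}(\mathfrak{W}),f_V)$ at shift $\eta=\log(3c)+\max(0,\log\tfrac{1}{s})$, and then to invoke the weak interleaving stability theorem of \cite{CCGGO09}, which bounds the bottleneck distance between the associated persistence diagrams by $3\eta$. The factor $3$ in the statement is exactly this conversion loss, as flagged in the paragraph preceding the theorem. Working with the $\log\max(1,s)$-truncation of both log-reindexed towers, following the same shifting argument as in the proof of Corollary~\ref{coro:stab-func}, absorbs the $\max(0,\log\tfrac{1}{s})$ contribution and reduces the task to producing a shift-$\log(3c)$ interleaving on the range where $\eps\geq s$.

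The geometric core is the following correspondence. For every $W\in\mathcal{W}_\eps$ with $\eps\geq s$, one has $\diam(W^s)\leq\eps+2s\leq 3\eps$; since $Z$ is path connected $\diam(W^s)\geq s$, so condition 3 of the $(c,s)$-good property (or Proposition~\ref{prop:useful} in edge cases) yields $W'\in\mathcal{W}_{c\cdot\diam(W^s)}\subseteq\mathcal{W}_{3c\eps}$ with $W'\supseteq W^s$. For such a pair $(W,W')$ I claim two things. \textbf{Claim (a):} every $K_1$-connected component $V_W^j$ of $V_W$ sits inside the vertex set of a unique path component of $f^{-1}(W')$. \textbf{Claim (b):} for every path component $U$ of $f^{-1}(W)$, the set $U\cap\vrt(K)$ lies inside a unique $K_1$-connected component of $V_{W'}$. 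Claim (a) is immediate from the minimum diameter condition: an edge $e=(u_{i-1},u_i)$ of $K_1$ with both endpoints in $V_W$ has $\diam(f(e))\leq\kappa(\mathfrak{W})\leq s$, hence $f(e)\subseteq W^s\subseteq W'$ and $e\subseteq f^{-1}(W')$, which converts a $K_1$-vertex path into a topological path in $f^{-1}(W')$. Claim (b) is an adaptation of the argument in Proposition~\ref{connected-prop}: a continuous path inside $U\subseteq f^{-1}(W)$ crosses a finite sequence of simplices $\sigma_1,\sigma_2,\ldots$, and the minimum diameter condition forces every vertex of each $\sigma_i$ into $V_{W^s}\subseteq V_{W'}$; since consecutive $\sigma_i,\sigma_{i+1}$ share a vertex, one extracts a $K_1$-path inside $V_{W'}$ connecting the two endpoint vertices.

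Using (a) and (b), I define vertex-level assignments $\alpha_\eps\colon\MM^{K_1}(\mathfrak{W},f_V)_\eps\to\MM(\mathfrak{W},f)_{3c\eps}$ and $\beta_\eps\colon\MM(\mathfrak{W},f)_\eps\to\MM^{K_1}(\mathfrak{W},f_V)_{3c\eps}$ sending each combinatorial component $V_W^j$ to the unique path component of $f^{-1}(W')$ containing it, and each path component $U$ of $f^{-1}(W)$ to the unique $K_1$-component of $V_{W'}$ containing $U\cap\vrt(K)$. Both assignments extend to simplicial maps because (a) and (b) promote the common vertex witnessing a simplex of the source nerve into a common vertex (or point) witnessing the image simplex in the target nerve. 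After log-reindexing, each map carries exactly the shift $\log(3c)$. Finally, I would verify the four compatibility squares of Definition~\ref{def:inter-cover}: each reduces, via the analog of Lemma~\ref{lemma:cover-contiguity} and the $W'$-containment invariants in (a) and (b), to exhibiting a common vertex of the source mapping into the intersection of the two target components, which yields an interleaving in the weak sense of \cite{CCGGO09}. The weak interleaving stability result then converts this $\log(3c)$-interleaving into a bottleneck bound of $3\log(3c)$ on the truncated range; adding back the $3\max(0,\log\tfrac{1}{s})$ truncation penalty gives the claimed bound.

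The main obstacle is Claim (b): the passage from a continuous path in $|K|$ to a vertex path in $K_1$ while using the minimum diameter condition to keep every crossed simplex's vertex set inside the slightly enlarged cover element $V_{W'}$. Once this is in hand, defining $\alpha_\eps$ and $\beta_\eps$ and verifying the compatibility squares reduces to mechanical common-intersection arguments, but the care needed to identify the interleaving as being only \emph{weak} rather than strong is precisely what produces the factor of $3$ appearing in the final bound.
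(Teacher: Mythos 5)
Your proposal follows essentially the same route as the paper's proof: the same auxiliary cover map $W\mapsto W'\in\mathcal{W}_{3c\eps}$ obtained from $\diam(W^s)\le 3\eps$ and $(c,s)$-goodness, the same use of the minimum diameter condition to pass between path components of $f^{-1}(W)$ and $K_1$-components of vertex preimages (your Claims (a) and (b) are exactly the paper's constructions of the maps $\nu_\eps$ and $\phi_\eps$, the latter with Lemma \ref{lem:phieps}), and the same appeal to the Weak Stability Theorem of \cite{CCGGO09} as the source of the factor of $3$. The only cosmetic difference is that you place a $3c$ shift on both cross maps, whereas the paper sends each $K_1$-component into a path component at the \emph{same} scale and puts the full shift on the map in the other direction; both variants yield the same weak $\log(3c)$-interleaving and hence the stated bound.
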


The remainder of this section is devoted to proving Theorem \ref{thm:combinatorialrelation}.

In what follows, the input tower of covers $\mathfrak{W} = \big\{ \mathcal{W}_{\varepsilon}\overset{\tiny{w_{\varepsilon,\varepsilon'}}}{\longrightarrow} \mathcal{W}_{\varepsilon'}\big\}_{\varepsilon\leq \varepsilon'}$ is $(c,s)$-good, and we set $\rho = 3c$. 

For each $\varepsilon\geq s$, let $\Nnerve_\eps$ denote the the nerve complex $N(f^* (\mathcal{W}_{\eps}))$ and $\Gnerve_\eps$ denote the the combinatorial mapper $N(f_V^{*_{K_1}}(\mathcal{W}_{\eps}))$.  Our goal is to show that there exist maps $\phi_{\varepsilon}$ and $\nu_\varepsilon$ so that diagram-(A) below commutes at the homology level, which then leads to a weak $\log \ac$-interleaving at log-scale of the persistence modules arising from computing the homologies of $\mathrm{MM}(\newW, f)$ and $\mathrm{MM}^{K_1}(\newW, f_V)$. This then proves  Theorem \ref{thm:combinatorialrelation}.

\begin{align}\label{eqn:nervemap}
\xymatrix{ &(A)& \\ 
\Nnerve_{\eps} \ar[drr]_{\phi_\eps} \ar[rr]^{\theta_{\eps, \rho\eps}} && \Nnerve_{\rho\eps}  \\
\Gnerve_{\eps} \ar[u]_{\nu_\eps}\ar[rr]_{\theta^{K_1}_{\eps,\rho\eps}} && \Gnerve_{\rho\eps} \ar[u]_{\nu_{\rho\eps}} 
}
&
\xymatrix{&(B)& \\ 
\Nnerve_{\eps}  \ar@{.>}[drr]_{\phi_\eps}\ar[rr]^{s_{\eps}} && \Nnerve_{\rho\eps}  \\
\Gnerve_{\eps} \ar[u]_{\nu_\eps}\ar[rr]_{t_{\eps}} && \Gnerve_{\rho\eps} \ar[u]_{\nu_{\rho\eps}} 
}
\end{align}

In the diagram above, the map $\theta_{\eps, \eps'} := N(f^* (w_{\eps, \eps'})): \Nnerve_\eps \to \Nnerve_{\eps'}$ is the simplicial map induced by the pullback of the map of covers $w_{\eps, \eps'}: \mathcal{W}_\eps \to \mathcal{W}_{\eps'}$. Similarly, the map $\theta^{K_1}_{\eps, \eps'} := N(f_V^{*_{K_1}} (w_{\eps, \eps'})): \Gnerve_\eps \to \Gnerve_{\eps'}$ is the simplicial map induced by the pullback of the map of covers $w_{\eps, \eps'}: \mathcal{W}_\eps \to \mathcal{W}_{\eps'}$.

\paragraph*{The maps $\nu_\eps$.}
Since connectivity in the 1-skeleton $K_1$ implies connectivity in the underlying space of $K$, there is a natural map from elements in $\cc_{K_1}(f_V^{-1}(W))$ to $\cc(f^{-1}(W))$, which in turn leads to a simplicial map $\nu_\eps: \Gnerve_\eps \to \Nnerve_\eps$.

The details are as follows. Given a nerve complex $\Nnerve_\eps$ (resp. $\Gnerve_\eps$), we abuse the notation slightly and identify a vertex in $\Nnerve_\eps$ with its corresponding connected component in $f^{*}(\mathcal{W}_{\eps})$ (resp. in $f_V^{*_{K_1}}(\mathcal{W}_{\eps})$). 

First, note that by construction of the combinatorial mapper, there is a natural map $u_W: \cc_{K_1}(f_V^{-1}(W)) \to \cc(f^{-1}(W))$ for any $W \in \mathcal{W}_\eps$: For any $V \in \cc_{K_1}(f_V^{-1}(W))$, obviously, all vertices in $V$ are $K_1$-connected in $f^{-1}(W)$. Hence there exists a unique set $U_V \in \cc(f^{-1}(W))$ such that $V \subseteq U_V$. We set $u_W(V) = U_V$ for each $V\in \cc_{K_1}(f_V^{-1}(W))$. 
The amalgamation of such maps for all $W \in \mathcal{W}_{\eps}$ gives rise to the map $\nu_\eps: f_V^{*_{K_1}}(\mathcal{W}_{\eps}) \to f^{*}(\mathcal{W}_{\eps})$, whose restriction to each $\cc_{K_1}(f_V^{-1}(W))$ is simply $u_W$ as defined above. 
Abusing notation slightly, we use $\nu_\eps: \vrt(\Gnerve_\eps) \to \vrt(\Nnerve_\eps)$ to denote the corresponding vertex map as well. 
Since for any set of vertices $V \in f_V^{*_{K_1}}(\mathcal{W}_{\eps})$, we have that $V \subseteq \nu_\eps(V) \subseteq |K|$. It then follows that non-empty intersections of sets in $f_V^{*_{K_1}}(\mathcal{W}_{\eps})$ imply non-empty intersections of their images via $\nu_\varepsilon$ in $f^{*}(\mathcal{W}_{\eps})$. Hence this vertex map induces a simplicial map which we still denote by $\nu_\eps: \Gnerve_{\eps}\to \Nnerve_\eps$.

\paragraph*{Auxiliary maps of covers $\mu_\eps$.}
What remains is to define the maps $\phi_\eps$ in diagram-(A). We first introduce the following map of covers:
$\mu_{\eps}: \mathcal{W}_\eps \to \mathcal{W}_{\rho\eps}$ for any $\eps \ge s$. 
Specifically, given any $W \in \mathcal{W}_{\eps}$ and $s\geq 0$, let $W^s = \{ x\in Z \mid d_Z (x, W) \le s \}$. 
Since $\mathfrak{W}$ is $(c,s)$-good, there exists at least one set $W' \in \mathcal{W}_{\rho\eps}$ such that $W^s \subseteq W'$. This is because for any $\eps \ge s$:  
$$s\leq\diam(W^s) \le \diam(W) + 2s =\eps + 2s \le 3\eps,$$
so that then $ c\cdot \diam(W^s) \le 3c \eps = \rho\eps. $
We set $\mu_{\eps}(W) = W'$: There may be multiple choices of sets in $\mathcal{W}_{\rho\eps}$ that contains $W$, we pick an arbitrary but fixed one. Let $s_{\eps}: \Nnerve_\eps \to \Nnerve_{\rho\eps}$ and $t_{\eps}: \Gnerve_\eps \to \Gnerve_{\rho\eps}$ denote the simplicial map induced by the pullbacks of the cover map $\mu_{\eps}: \mathcal{W}_\eps \to \mathcal{W}_{\rho\eps}$ via $f$ and via $f_V$, respectively. In other words, $s_{\eps} = N(f^*(\mu_{\eps}))$ and $t_{\eps} =N(f_V^{*_{K_1}}(\mu_{\eps}))$.

\paragraph*{The maps $\phi_\eps$.}
We now define the map $\phi_{\eps}: \Nnerve_{\eps} \to \Gnerve_{\rho\eps}$ with the help of diagram-(B) in (\ref{eqn:nervemap}). 
Fix $W \in \mathcal{W}_{\eps}$.  Given any set $U \in \cc(f^{-1}(W))$, write $\{ V_\beta \}_{\beta \in A_U}$ for the preimage $\nu_\eps^{-1}(U)$ of $U$ under the vertex map $\nu_\eps$.
 Note that $V_\beta \subseteq U$ for any $\beta\in A_U$ and $\bigcup_{\beta \in A_U} V_\beta = \vrt(K)\cap U.$

We claim that $t_{\eps}(V_\beta)  = t_{\eps} (V_{\beta'})$ for any $\beta, \beta' \in A_U$. Indeed, since $V_\beta$ and $V_{\beta'}$ are contained
in the path connected component $U$, let $\pi(x, y) \subseteq U$ be any path connecting a vertex $x\in V_\beta$ and $y\in V_{\beta'}$ in $U$. Let $\{\sigma_1, \ldots, \sigma_a\}$ be the collection of simplices that intersect $\pi(x,y)$. 
By the minimum diameter condition (cf. Condition \ref{cond:md}) and the definition of the map $\mu_\varepsilon$, we have that  $f(\sigma_j) \subseteq W^s \subseteq \mu_{\eps}(W) \in \mathcal{W}_{\rho\eps}, ~\text{for any}~ j\in \{1,\ldots, a\}.$

Hence $V_\beta \cup V_{\beta'} \cup \{\sigma_1, \ldots, \sigma_a\} \subseteq s_{\eps}(U) \in \cc(f^{-1}(\mu_\eps(W)))$. Furthermore, since the edges of simplices $\{\sigma_1, \ldots, \sigma_a\}$ connect $x\in V_\beta$ and $y\in V_{\beta'}$, vertices in $V_\beta$ and $V_{\beta'}$ are thus connected by edges in $s_{\eps}(U)$. That is, there exists a $K_1$-induced component (a subset of vertices of $K$) $\hat V \in \cc_{K_1} (f_V^{-1}(\mu_\eps(W))))$ such that $V_\beta \cup V_{\beta'} \subseteq \hat V$ for any $\beta, \beta' \in A_U$. 
Hence, all $V_\beta$s with $\beta\in A_U$ have the same image $\hat V$ in $\vrt(\Gnerve_{\rho\eps})$ under the map $t_{\eps}$, and we set $\phi_\eps(U):=\hat V$ to be this common image.

\begin{lemma}\label{lem:phieps}
The vertex map $\phi_\eps$ introduced above induces a simplicial map which we also denote by $\phi_\eps: \Nnerve_\eps \to \Gnerve_{\rho\eps}$. 
\end{lemma}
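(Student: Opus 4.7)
My plan is to verify simpliciality of $\phi_\eps$ directly from the definition of the nerve: given a simplex $\{U_0,\ldots,U_k\}$ of $\Nnerve_\eps$, so that $\bigcap_i U_i\neq\emptyset$, I will exhibit a single vertex of $K$ that lies in $\phi_\eps(U_i)$ for every $i$. Since each $\phi_\eps(U_i)$ is a set of vertices of $K$ (an element of $\cc_{K_1}(f_V^{-1}(\mu_\eps(W_i)))$), a common vertex certifies that $\{\phi_\eps(U_0),\ldots,\phi_\eps(U_k)\}$ spans a simplex of $\Gnerve_{\rho\eps}$.

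First I would pick any $x\in\bigcap_i U_i$, let $\sigma$ be a simplex of $K$ whose carrier contains $x$, and choose any vertex $v$ of $\sigma$ as the candidate common vertex. The minimum diameter condition together with $f(x)\in W_i$ gives $f(v)\in W_i^s\subseteq\mu_\eps(W_i)$, so $v\in f_V^{-1}(\mu_\eps(W_i))$ for each $i$. What remains is to identify the $K_1$-component of $f_V^{-1}(\mu_\eps(W_i))$ that contains $v$ with $\phi_\eps(U_i)$.

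For this identification, recall that $\phi_\eps(U_i)=t_\eps(V_\beta)$ for any choice of $V_\beta\in\nu_\eps^{-1}(U_i)$, where $t_\eps$ sends a $K_1$-component of $f_V^{-1}(W_i)$ to the unique $K_1$-component of $f_V^{-1}(\mu_\eps(W_i))$ that contains it. Fix such a $V_\beta\subseteq U_i$, pick $v_\beta\in V_\beta$, and use path-connectedness of $U_i$ to join $x$ and $v_\beta$ by a continuous path $\pi\subseteq U_i$. Enumerating the simplices $\sigma=\sigma_0,\sigma_1,\ldots,\sigma_a\ni v_\beta$ traversed by $\pi$ and re-applying the minimum diameter condition gives $f(\sigma_j)\subseteq W_i^s\subseteq\mu_\eps(W_i)$, so all vertices of each $\sigma_j$ lie in $f_V^{-1}(\mu_\eps(W_i))$. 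Consecutive simplices $\sigma_{j-1}$ and $\sigma_j$ meet in a common face of $K$, hence share at least one vertex, and chaining through such shared vertices yields a $K_1$-path inside $f_V^{-1}(\mu_\eps(W_i))$ from $v$ to $v_\beta$. This places $v$ and $v_\beta$ in the same $K_1$-component, i.e.\ in $\phi_\eps(U_i)$. Running this argument for each $i$ produces $v\in\bigcap_i\phi_\eps(U_i)$, and the lemma follows.

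The step that will require care is the conversion of the continuous path $\pi$ into a combinatorial $K_1$-path whose vertex sequence stays inside $f_V^{-1}(\mu_\eps(W_i))$. This is essentially the same geometric move as the one used immediately preceding the lemma to show that $t_\eps(V_\beta)$ is independent of $\beta\in A_{U_i}$; the only new feature is that one of the endpoints is the non-vertex point $x$, but since $\sigma$ itself already satisfies $f(\sigma)\subseteq\mu_\eps(W_i)$, replacing $x$ by any vertex $v$ of $\sigma$ stays inside $f_V^{-1}(\mu_\eps(W_i))$. So I expect no genuinely new difficulty beyond importing that earlier argument.
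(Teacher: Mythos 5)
Your proposal is correct and follows essentially the same route as the paper: pick $x\in\bigcap_i U_i$, take its carrier simplex $\sigma$, and show that the vertices of $\sigma$ land in every $\phi_\eps(U_i)$, which the paper records as Claim \ref{claim:extendedvertices} and proves ``by a similar argument'' to the one showing $t_\eps(V_\beta)$ is independent of $\beta$. The only difference is that you spell out that argument explicitly (the conversion of the continuous path in $U_i$ into a $K_1$-path inside $f_V^{-1}(\mu_\eps(W_i))$ via the minimum diameter condition), which the paper leaves implicit.
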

\begin{proof}
We have already proved in the main text that for all $\beta \in A_U$, $t_\eps (V_\beta)$ has the same image $\hat{V}$. 
Using the same notation as in the main text, recall that 
$$(\ast)~~ V_\beta \subseteq U ~\text{for any}~\beta\in A_U,~~~\text{and}~~~\bigcup_{\beta \in A_U} V_\beta = \vrt(K)\cap U. $$
This implies that $\vrt(K)\cap U \subseteq \phi_\eps(U)$. 
In fact, a similar argument can also be used to prove the following: 

\begin{claim}\label{claim:extendedvertices}
The vertices of any simplex that intersects $U$ will be contained in $\phi_\eps(U)$. 
\end{claim}

Now to prove Lemma \ref{lem:phieps}, we need to show the following: given a $k$-simplex $\tau = \{p_0, p_1, \ldots, p_k\} \in \Nnerve_\eps$, where each vertex $p_j$ corresponds to set $U_j \subseteq |K|$, then we have that $\bigcap_{j=0}^k \phi_\eps(U_j) \neq \emptyset$.  

To prove this, take any point $x\in |K|$ such that $x \in \bigcap_{j=0}^k U_j$. 

Suppose $x$ is contained in a  simplex $\sigma\in K$. 
By Claim \ref{claim:extendedvertices}, the vertices of $\sigma$ are contained in $\phi_\eps(U_j)$, that is, $\vrt(\sigma) \subset \phi_\eps(U_j)$, for any $j\in \{0,\ldots, k\}$. 
Hence $\bigcap_{j=0}^k \phi_\eps(U_j) \supseteq \vrt(\sigma) \neq \emptyset$. 
The lemma then follows. 
\end{proof}

Finally, by the construction of $\phi_\eps$, the lower triangle in diagram-(B) in (\ref{eqn:nervemap}) commutes. This fact, the commutativity of the
square in diagram-(B), and the definition of $\nu_\eps$ together 
imply that the top triangle commutes. 
Furthermore, Lemma \ref{lemma:cover-contiguity} states the two maps of covers $f^*( w_{\eps, \rho\eps})$ and $f^*(\mu_\eps)$ induce contiguous simplicial maps $\theta_{\eps,\rho \eps}$ and $s_{\eps}$. Similarly, $\theta^{K_1}_{\eps,\rho\eps}$ is contiguous to $t_{\eps}$. 
Since contiguous maps induce the same map at the homology level, it then follows that diagram-(A) in (\ref{eqn:nervemap}) commutes at the homology level. At this point one could apply the strong interleaving result of \cite{CCGGO09} had the maps  $\phi_\varepsilon$ been defined for any $ \mathrm{M}_\eps \to \mathrm{M}_{\eps'}^{K_1}$.  Nonetheless, the  Weak Stability Theorem of \cite{CCGGO09} can be applied even in this case. This, together with an argument similar to the proof of Corollary \ref{coro:stab-func}, provides Theorem \ref{thm:combinatorialrelation}.

\section{The metric point of view}
\label{sec:metricview}

The multiscale mapper construction from previous sections yields a tower of simplicial complexes, and 
a subsequent application of the homology functor with field coefficients provides a persistence module and hence a persistence diagram. 

An alternative idea is to use the data $(\mathfrak{U}, f:X\rightarrow Z)$  to \emph{induce a (pseudo-)metric on} $X$, and then consider the Rips or \v{C}ech filtrations arising from that metric, and in turn obtain a persistence diagram. Both viewpoints are valid in that they both produce topological summaries
out of the given data. 
The first approach proceeds at the level of filtrations and then simplicial maps, and then persistent homology, whereas the other readily produces a metric on $X$ and then follows the standard metric point of view with geometric complexes. Notice that these two approaches yield different computational problems: the $\MM$ approach leads to persistence under simplicial maps~\cite{DFW-sph}, and the metric approach leads to persistence under inclusion maps, which appears to be computationally easier with state-of-the-art techniques.

\subsection{The pull-back pseudo-metric}

Assume that a continuous function $f:X\rightarrow Z$ and a tower of covers $\mathfrak{U}$ with $\res(\mathfrak{U})\geq 0$ of $Z$ are given. 
We wish to interpret $\varepsilon$ as the size of each 
$U\in\mathcal{U}_\varepsilon$ at least for $\eps\geq s$ 
when $\mathfrak{U}$ is
a $(c,s)$-good tower of covers.  We define the function $d_{\mathfrak{U},f}:X\times X\rightarrow \R^+$ for $x,x'\in X$ s.t. $x\neq x'$ by
\begin{equation}
\label{eq:dUf} d_{\mathfrak{U},f}(x,x'):=\inf\big\{\varepsilon> 0|\,\exists V\in f^\ast(\mathcal{U}_\varepsilon)\,\,\mbox{with $x,x'\in V$}\big\}, ~\text{and} ~d_{\mathfrak{U},f}(x,x) := 0\,\,\mbox{for all $x\in X$}. 
\end{equation}
Notice that $d_{\mathfrak{U},f}(x,x')\geq s$ whenever $x\neq x'$. We refer to $d_{\mathfrak{U},f}$ as the \emph{pull-back pseudo-metric}.

\begin{lemma}\label{lemma:metric}
If $\mathfrak{U}$ is a $(c,s)$-good tower of covers of the compact connected metric space $Z$, then $d_{\mathfrak{U},f}$ satisfies:
\begin{itemize}

\item $d_{\mathfrak{U},f}(x,x') = d_{\mathfrak{U},f}(x',x)$ for all $x,x'\in X$.
\item $d_{\mathfrak{U},f}(x,x')\leq c\cdot\big(d_{\mathfrak{U},f}(x,x'')+d_{\mathfrak{U},f}(x'',x')+2s\big)$ for all $x,x',x''\in X$.
\end{itemize}
\end{lemma}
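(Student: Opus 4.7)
Symmetry is immediate from the definition: the condition ``$\exists V \in f^\ast(\mathcal{U}_\varepsilon)$ with $x,x'\in V$'' treats $x$ and $x'$ symmetrically, so the first bullet is a one-line observation. The content of the lemma is the relaxed triangle inequality. To avoid clutter, first dispose of the degenerate cases: if any two of $x,x',x''$ coincide, both sides are nonnegative and the $+\,2s$ slack together with $c\geq 1$ makes the inequality trivial. So assume $x,x',x''$ are pairwise distinct, and set $\alpha := d_{\mathfrak{U},f}(x,x'')$, $\beta := d_{\mathfrak{U},f}(x'',x')$ (if either is infinite there is nothing to prove).

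Fix $\delta>0$. Using the infimum defining $d_{\mathfrak{U},f}$ together with the fact that the tower $f^\ast(\mathfrak{U})$ is up-closed under its cover maps, I select $V_1\in f^\ast(\mathcal{U}_{\alpha+\delta})$ containing $x,x''$ and $V_2\in f^\ast(\mathcal{U}_{\beta+\delta})$ containing $x'',x'$; let $W_i\in\mathcal{U}_{\alpha+\delta},\mathcal{U}_{\beta+\delta}$ respectively be cover elements with $V_i\in \cc(f^{-1}(W_i))$. The key step is to glue $V_1$ and $V_2$ along $x''$ inside a single cover element in $Z$. Consider $O := f(V_1)\cup f(V_2)\subseteq W_1\cup W_2$; since $f(x'')\in f(V_1)\cap f(V_2)$, the triangle inequality in $Z$ combined with property (2) of a $(c,s)$-good tower yields $\diam(O)\le \diam(W_1)+\diam(W_2)\le (\alpha+\delta)+(\beta+\delta)$.

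Now apply the ``covering'' property of $\mathfrak{U}$ to $O$: if $\diam(O)\ge s$, property (3) of Definition \ref{def:good-cover} yields $W\in\mathcal{U}_{c\cdot\diam(O)}$ with $O\subseteq W$; if $\diam(O)<s$, Proposition \ref{prop:useful} provides $W\in\mathcal{U}_{c(\diam(O)+2s)}$ with $O\subseteq W$. In either regime, letting $\gamma$ be the index of the chosen cover, one has the uniform bound $\gamma\le c(\alpha+\beta+2s+2\delta)$.

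Finally, I need $x$ and $x'$ to lie in a \emph{single} element of $f^\ast(\mathcal{U}_\gamma)$. Since $W_1,W_2\subseteq W$, one has $V_1\cup V_2\subseteq f^{-1}(W)$; and because $V_1,V_2$ are each path-connected and share the point $x''$, the union $V_1\cup V_2$ is path-connected, so it lies in a single path-component $V\in\cc(f^{-1}(W))\subseteq f^\ast(\mathcal{U}_\gamma)$ containing both $x$ and $x'$. Hence $d_{\mathfrak{U},f}(x,x')\le \gamma\le c(\alpha+\beta+2s+2\delta)$, and letting $\delta\to 0$ yields the claim. The only mildly delicate point is keeping track of the two regimes $\diam(O)\ge s$ versus $\diam(O)<s$ and verifying that the ``path-connectedness transfers'' through the inclusion $f^{-1}(W_i)\subseteq f^{-1}(W)$; the former is precisely why Proposition \ref{prop:useful} was set up, and the latter is automatic.
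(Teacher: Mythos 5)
Your proof is correct and follows essentially the same route as the paper's: pick near-optimal $V_1\ni x,x''$ and $V_2\ni x'',x'$, note their images meet at $f(x'')$ so the union of the ambient cover elements has diameter at most the sum, and then invoke property (3) of $(c,s)$-goodness together with Proposition \ref{prop:useful} to find a single cover element absorbing everything (your explicit check that $V_1\cup V_2$ is path-connected and hence lands in one component of $f^{-1}(W)$ is exactly the step the paper leaves implicit). One cosmetic slip: having applied the covering property to $O=f(V_1)\cup f(V_2)$ you obtain $O\subseteq W$ rather than $W_1,W_2\subseteq W$ as written, but the needed conclusion $V_1\cup V_2\subseteq f^{-1}(W)$ follows directly from $f(V_1\cup V_2)=O\subseteq W$, so nothing is lost.
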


\begin{proof}[Proof of Lemma \ref{lemma:metric}]
That the definition yields a symmetric function is clear. The 
second claim follows
from the $c$-goodness of $\mathfrak{U}$ and Proposition \ref{prop:useful}. Indeed, 
assume that $V_1\in f^\ast(U_{\varepsilon_1})$ is s.t. $x,x''\in V_1$ and $V_2\in f^\ast(U_{\varepsilon_2})$ is s.t. $x'',x'\in V_2$ . Then there exists $U_1\in \mathcal{U}_{\varepsilon_1}$ s.t. $V_1\in\cc(f^{-1}(U_1)),$ and there exists $U_2\in \mathcal{U}_{\varepsilon_2}$ s.t. $V_2\in\cc(f^{-1}(U_2)).$ Clearly, $V_1\cap V_2\neq \emptyset$, implying $U_{1}\cap U_{2}\neq \emptyset$. Thus $\diam(U_1\cup U_2\big)\leq \varepsilon_1+\varepsilon_2$. The proof concludes via Proposition \ref{prop:useful}.
\end{proof}

\begin{remark}
Note that despite the fact that we call $d_{\mathfrak{U},f}$ a pseudo-metric, it does not satisfy the triangle inequality in a strict sense. According to the lemma above it does, however, satisfy a relaxed version of such inequality. Furthermore, in the ideal case when $c=1$ and $s=0$, $d_{\mathfrak{U},f}$ satisfies the triangle inequality.  Thus, in what follows we will take the liberty of calling it a pseudo-metric. Furthermore, this relaxed triangle inequality does not preclude our ability to consider the induced \v{C}ech complex and to develop the interleaving results in Section \ref{sec:inter-metric-mm}.

\end{remark}

\subsection{Stability of $d_{\mathfrak{U},f}$} In the same manner that we established the stability of the multiscale mapper construction, one can answer what is the stability picture for the \v{C}ech construction over $(X,d_{\mathfrak{U},f})$. A first step is to understand how $d_{\mathfrak{U},f}$ changes when we alter both the function $f$ and the tower $\mathfrak{U}.$

\begin{proposition}[Stability of pull-back pseudo-metric under cover perturbations]\label{prop:stab-pull-back-metric-cover}
Let $f:X\rightarrow Z$ be a continuous function and $\mathfrak{U}$ and $\mathfrak{V}$ be two $\eta$-interleaved towers of covers of $Z$ with
$\res(\mathfrak{U})>0$ and $\res(\mathfrak{V})>0$. Then, 

$$\forall x,x'\in X,\,\, d_{\mathfrak{U},f}(x,x')\leq d_{\mathfrak{V},f}(x,x')+ \eta.$$
\end{proposition}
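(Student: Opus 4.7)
The plan is to reduce this to the pullback-stability of cover interleavings, which is already in hand as Proposition \ref{prop:interleaving}(ii): since $\mathfrak{U}$ and $\mathfrak{V}$ are $\eta$-interleaved towers of covers of $Z$, their pullbacks $f^\ast(\mathfrak{U})$ and $f^\ast(\mathfrak{V})$ are $\eta$-interleaved towers of covers of $X$. In particular, for every $\varepsilon\geq \res(\mathfrak{V})$ there is a map of covers
\[
\xi_\varepsilon\colon f^\ast(\mathcal{V}_\varepsilon)\longrightarrow f^\ast(\mathcal{U}_{\varepsilon+\eta}),
\]
and the defining property of such a map is that $V\subseteq \xi_\varepsilon(V)$ for every $V\in f^\ast(\mathcal{V}_\varepsilon)$.

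Given this, the proof is a direct unrolling of the definition of the pull-back pseudo-metric. Fix $x,x'\in X$; if $x=x'$ both sides are $0$, so assume $x\neq x'$, and set $\delta:=d_{\mathfrak{V},f}(x,x')$. First I would show that for every $\varepsilon>\delta$ one has $d_{\mathfrak{U},f}(x,x')\leq \varepsilon+\eta$: by definition of $d_{\mathfrak{V},f}$ as an infimum, there exists some $V\in f^\ast(\mathcal{V}_\varepsilon)$ with $x,x'\in V$; applying the cover map $\xi_\varepsilon$ yields an element $\xi_\varepsilon(V)\in f^\ast(\mathcal{U}_{\varepsilon+\eta})$ with $V\subseteq \xi_\varepsilon(V)$, so $x,x'\in \xi_\varepsilon(V)$, and hence $\varepsilon+\eta$ is one of the real numbers over which the infimum defining $d_{\mathfrak{U},f}(x,x')$ is taken. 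Taking the infimum over all $\varepsilon>\delta$ gives $d_{\mathfrak{U},f}(x,x')\leq \delta+\eta=d_{\mathfrak{V},f}(x,x')+\eta$, as required.

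There is essentially no obstacle here: the content is entirely carried by Proposition \ref{prop:interleaving}(ii) together with the definitional inclusion $V\subseteq\xi_\varepsilon(V)$. The only minor technical point to keep in mind is that, because $d_{\mathfrak{U},f}$ and $d_{\mathfrak{V},f}$ are defined via strict infima, one should argue with $\varepsilon>\delta$ rather than $\varepsilon=\delta$ and then pass to the limit; this is harmless given that $\res(\mathfrak{U}),\res(\mathfrak{V})>0$ guarantees that the relevant pullback covers are available at all scales needed, and that the interleaving maps $\xi_\varepsilon$ exist uniformly in $\varepsilon$.
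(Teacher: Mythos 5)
Your proof is correct and follows essentially the same route as the paper's: invoke Proposition \ref{prop:interleaving}(ii) to transfer the $\eta$-interleaving to the pullback covers on $X$, then unwind the infimum defining the pull-back pseudo-metric using the inclusion $V\subseteq\xi_\varepsilon(V)$. If anything, your write-up states the inequality in the direction actually claimed (bounding $d_{\mathfrak{U},f}$ by $d_{\mathfrak{V},f}+\eta$), whereas the paper's proof literally derives the symmetric inequality and relies on the symmetry of interleaving; this is a cosmetic difference only.
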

\begin{proof}
Assume the hypothesis. Then, by Lemma \ref{lemma:interleaving} $f^\ast(\mathfrak{U})$ and $f^\ast(\mathfrak{V})$ are $\eta$-interleaved as well. Pick $x,x'\in X$ and assume that $d_{\mathfrak{U},f}(x,x')< \varepsilon$. Let $U\in f^{\ast}(\mathcal{U}_\varepsilon)$ be such that $x,x'\in U$.  Then, one can find $V\in \mathcal{V}_{\varepsilon+\eta}$ such that $U\subset V$. Then, since $x,x'\in V$, it follows that 
$$d_{\mathfrak{V},f}(x,x')\leq \varepsilon +\eta.$$
Since this holds for any $\varepsilon>d_{\mathfrak{U},f}(x,x')$, 
we obtain that $d_{\mathfrak{U},f}(x,x')\leq d_{\mathfrak{V},f}(x,x')+\eta.$  
\end{proof}

\begin{proposition}[Stability of pull-back pseudo-metric against function perturbation]\label{prop:stab-metric-func}
Let $\mathfrak{U}$ be a $(c,s)$-good tower of covers of the compact connected metric  space $Z$ and let  $f,g:X\rightarrow Z$ be two continuous functions such that for some $\delta\geq 0$ one has $\max_{x\in X}d_Z(f(x),g(x))\leq \delta$. Then, 
$$\forall x,x'\in X,\,\, d_{\mathfrak{U},g}(x,x')
\leq c\big(d_{\mathfrak{U},f}(x,x')+2\max(\delta,s)\big).$$
\end{proposition}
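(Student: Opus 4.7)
The plan is to mirror the construction used in the proof of Proposition \ref{prop:stab-func}, transferring it from the interleaving setting to the pseudo-metric setting. Fix $x, x' \in X$ and any $\eps > d_{\mathfrak{U}, f}(x, x')$; by definition of the infimum there is a path component $V \in \cc(f^{-1}(U))$ with $U \in \mathcal{U}_\eps$ that contains both $x$ and $x'$. The goal is to produce some $V' \in g^\ast(\mathcal{U}_{\eps'})$ with $\eps' \leq c(\eps + 2\max(\delta, s))$ that also contains $x, x'$; passing to the infimum over $\eps$ will then yield the claimed inequality.

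Setting $\delta' := \max(\delta, s)$, the first step is to invoke the analogue of Claim \ref{claim-1} from the proof of Proposition \ref{prop:stab-func}: since $\max_{x\in X} d_Z(f(x),g(x)) \leq \delta \leq \delta'$, we have $f^{-1}(U) \subseteq g^{-1}(U^{\delta'})$, and thus $V \subseteq g^{-1}(U^{\delta'})$.

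Next I would verify that $\diam(U^{\delta'}) \geq s$, so that clause 3 of Definition \ref{def:good-cover} applies directly. Path-connectedness of $Z$, combined with $\delta' \geq s$ and $s \leq \diam(Z)$ from clause 1 of the good-cover definition, gives $\diam(U^{\delta'}) \geq \min(\delta', \diam(Z)) \geq s$: indeed, for any $z \in U$, either $Z \subseteq B(z, \delta') \subseteq U^{\delta'}$, or by path-connectedness and continuity of $d_Z(z, \cdot)$ there is a point in $U^{\delta'}$ at distance exactly $\delta'$ from $z$. Clause 3 then supplies $U' \in \mathcal{U}_{c \cdot \diam(U^{\delta'})}$ with $U^{\delta'} \subseteq U'$, and $c \cdot \diam(U^{\delta'}) \leq c(\diam(U) + 2\delta') \leq c(\eps + 2\delta')$.

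To finish, since $V$ is path connected and contained in $g^{-1}(U')$, it must lie inside a single path component $V' \in \cc(g^{-1}(U')) \subseteq g^\ast(\mathcal{U}_{c(\eps + 2\delta')})$, so $x, x' \in V'$ yields $d_{\mathfrak{U}, g}(x, x') \leq c(\eps + 2\delta')$. Letting $\eps \downarrow d_{\mathfrak{U}, f}(x, x')$ gives the desired inequality. The one subtlety worth flagging is the diameter lower bound $\diam(U^{\delta'}) \geq s$: without it one would need to invoke Proposition \ref{prop:useful} rather than clause 3 directly, which would add an extra $+2s$ inside the parentheses and weaken the bound. Establishing this lower bound via path-connectedness of $Z$ is the only ingredient beyond what already appeared in Proposition \ref{prop:stab-func}.
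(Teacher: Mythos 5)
Your proof is correct and follows essentially the same route as the paper's: pass from $V\in\cc(f^{-1}(U))$ to $V\subseteq g^{-1}(U^{\delta'})$ with $\delta'=\max(\delta,s)$, then use $(c,s)$-goodness to absorb $U^{\delta'}$ into an element of $\mathcal{U}_{c(\eps+2\delta')}$. The only difference is that you explicitly justify the lower bound $\diam(U^{\delta'})\geq s$ via path-connectedness, a step the paper asserts without proof; your justification is valid and the argument is complete.
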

\begin{proof} 
Write $\mathfrak{U}=\{\mathcal{U}_\varepsilon\}$. Let $d_{\mathfrak{U},f}(x,x')<\eps$. 

Then there exists $U \in  \mathcal{U}_\eps$ and $V\in \cc (f^{-1}(U))$ such that $x, x' \in V$. 
Write $\delta'=\max(\delta,s).$
We know $f^{-1}(U)\subseteq g^{-1}(U^{\delta'})$. Since 
$s\leq \diam(U^{\delta'})\leq \eps+2\delta'$ and $\mathfrak{U}$ is a 
$(c,s)$-good tower of covers of $Z$, there exists 
$U'\in{\mathcal U}_{c(\eps+2\delta')}$ so that $U^{\delta'}\subset U'$. Thus, $V\subseteq g^{-1}(U')$.
It follows by definition that
$
d_{\mathfrak{U},g}(x,x')
\leq c\cdot (\eps+2\delta').
$
Since the argument holds for all $\eps>0$, we have the result.
\end{proof}

As a corollary of the two preceding propositions we obtain the following statement:
\begin{proposition}\label{prop:stab-metric-coro}
Let $\eps_0,c\geq 1$ and $\eta,s>0$. 
Let $\mathfrak{U}$ and $\mathfrak{V}$ be any two $\eta$-interleaved,
$\eps_0$-truncations of $(c,s)$-good towers of covers of the compact connected metric space $Z$ and let $f,g:X\rightarrow Z$ be any two continuous functions such that $\max_{x\in X}d_Z(f(x),g(x))\leq \delta$. Then, for all $x,x'\in X$
$\gamma^{-1}\cdot d_{\mathfrak{V},f}(x,x') \leq d_{\mathfrak{U},g}(x,x')\leq \gamma\cdot d_{\mathfrak{V},f}(x,x'),$ 
\label{metric-stability}
where $\gamma := \big(2c\max(s,\delta)+c+\eta\big)$.
\end{proposition}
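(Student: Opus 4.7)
The plan is to derive the two-sided bound by chaining the two stability propositions already established (Proposition \ref{prop:stab-pull-back-metric-cover} for cover perturbation and Proposition \ref{prop:stab-metric-func} for function perturbation), and then exploiting the $\eps_0$-truncation to convert additive slack into a multiplicative factor. First, the case $x=x'$ is trivial since both sides vanish, so I assume $x\neq x'$. The crucial observation is that because $\mathfrak{V}$ is an $\eps_0$-truncation with $\eps_0\geq 1$, the defining infimum in \eqref{eq:dUf} ranges only over $\eps\geq \eps_0$, hence $d_{\mathfrak{V},f}(x,x')\geq \eps_0\geq 1$ whenever $x\neq x'$ (and symmetrically for $d_{\mathfrak{U},g}$).

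For the upper bound $d_{\mathfrak{U},g}(x,x')\leq \gamma\cdot d_{\mathfrak{V},f}(x,x')$, I will first swap the tower and then the function. Applying Proposition \ref{prop:stab-pull-back-metric-cover} to the function $g$ and the $\eta$-interleaved towers $\mathfrak{U},\mathfrak{V}$ gives $d_{\mathfrak{U},g}(x,x')\leq d_{\mathfrak{V},g}(x,x')+\eta$. Applying Proposition \ref{prop:stab-metric-func} to the tower $\mathfrak{V}$ (which is $(c,s)$-good) and the functions $f,g$ with $\|f-g\|_\infty\leq \delta$ gives $d_{\mathfrak{V},g}(x,x')\leq c\bigl(d_{\mathfrak{V},f}(x,x')+2\max(\delta,s)\bigr)$. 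Composing,
\[
d_{\mathfrak{U},g}(x,x')\;\leq\; c\,d_{\mathfrak{V},f}(x,x')+2c\max(\delta,s)+\eta.
\]
Then using $d_{\mathfrak{V},f}(x,x')\geq 1$, the additive constant $2c\max(\delta,s)+\eta$ is at most $(2c\max(\delta,s)+\eta)\cdot d_{\mathfrak{V},f}(x,x')$, so the right-hand side is bounded by $\bigl(c+2c\max(s,\delta)+\eta\bigr)\cdot d_{\mathfrak{V},f}(x,x')=\gamma\cdot d_{\mathfrak{V},f}(x,x')$.

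For the lower bound $\gamma^{-1}\cdot d_{\mathfrak{V},f}(x,x')\leq d_{\mathfrak{U},g}(x,x')$, I appeal to symmetry: the hypotheses are symmetric in the pairs $(\mathfrak{U},g)$ and $(\mathfrak{V},f)$ (the $\eta$-interleaving is symmetric, $\max_x d_Z(f(x),g(x))=\max_x d_Z(g(x),f(x))\leq \delta$, and both towers are $(c,s)$-good), so running the same chain with the roles reversed yields $d_{\mathfrak{V},f}(x,x')\leq \gamma\cdot d_{\mathfrak{U},g}(x,x')$, which is the desired inequality after dividing by $\gamma$.

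The main obstacle, and also the conceptual content, is the step that turns the additive error $2c\max(\delta,s)+\eta$ into a multiplicative $\gamma$. Without the hypothesis $\eps_0\geq 1$, there is no lower bound on $d_{\mathfrak{V},f}(x,x')$ available to absorb the additive constants, and the statement of the proposition would fail for small distances. Once this is pinned down, the rest is purely a composition of the two already-proved perturbation lemmas together with a symmetry argument.
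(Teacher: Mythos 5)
Your proof is correct and follows exactly the route the paper intends: the paper states this result without proof as "a corollary of the two preceding propositions," and your argument is precisely that chaining of Proposition \ref{prop:stab-pull-back-metric-cover} and Proposition \ref{prop:stab-metric-func}, with the key (and correctly identified) observation that the $\eps_0$-truncation with $\eps_0\geq 1$ forces $d_{\mathfrak{V},f}(x,x')\geq 1$ for $x\neq x'$, which absorbs the additive slack into the multiplicative constant $\gamma$; the symmetry argument for the reverse inequality is also sound.
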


\subsection{\v{C}ech filtrations using $d_{\mathfrak{U},f}$}

Given any point $x \in X$, we denote the $\varepsilon$-ball around $x$ by $B_\varepsilon(x) = \{ x' \in X \mid d_{\mathfrak{U},f}(x,x')  \le \eps \}$. We have the following observation. 

\begin{lemma}\label{lemma:ball-dUf}
Let $f:X\rightarrow Z$ be continuous and $\mathfrak{U}$ any 
tower of covers of $Z$ with $\res(\mathfrak{U})>0$.  Consider the pseudo-metric space $(X,d_{\mathfrak{U},f})$. Then, for any $x\in X$ and $\varepsilon\geq 0$, $$B_\varepsilon(x) = \bigcup_{V\in f^\ast(\mathcal{U}_\varepsilon),\,\mbox{s.t. $x\in V$}} V.$$ 
\end{lemma}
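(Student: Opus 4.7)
The plan is to verify both set inclusions by directly unpacking the infimum that defines $d_{\mathfrak{U},f}$. The direction $(\supseteq)$ is immediate: if $x' \in V$ for some $V \in f^\ast(\mathcal{U}_\varepsilon)$ with $x \in V$, then $\varepsilon$ itself is an element of the set over which the infimum in (\ref{eq:dUf}) is taken, so $d_{\mathfrak{U},f}(x,x') \leq \varepsilon$, and hence $x' \in B_\varepsilon(x)$.

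For the reverse inclusion $(\subseteq)$, I will first establish upward closure of the set $S(x,x') := \{\varepsilon' > 0 \mid \exists V \in f^\ast(\mathcal{U}_{\varepsilon'}),\, x,x' \in V\}$, whose infimum equals $d_{\mathfrak{U},f}(x,x')$. Given $\res(\mathfrak{U}) \leq \varepsilon_1 \leq \varepsilon_2$ with $\varepsilon_1 \in S(x,x')$ and a witness $V_1 \in f^\ast(\mathcal{U}_{\varepsilon_1})$ containing $x, x'$, I apply the pullback of the cover map $u_{\varepsilon_1,\varepsilon_2} : \mathcal{U}_{\varepsilon_1} \to \mathcal{U}_{\varepsilon_2}$ (as recorded in the Pullbacks discussion of \S\ref{sec:background}), which sends $V_1$ to the unique $V_2 \in f^\ast(\mathcal{U}_{\varepsilon_2})$ with $V_1 \subseteq V_2$, so $x,x' \in V_2$ and $\varepsilon_2 \in S(x,x')$. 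Consequently, if $x' \in B_\varepsilon(x)$ with $d_{\mathfrak{U},f}(x,x') < \varepsilon$, I pick any $\varepsilon_0 < \varepsilon$ lying in $S(x,x')$ and upward closure gives $\varepsilon \in S(x,x')$, which furnishes the required $V$.

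The hard part will be the boundary case $d_{\mathfrak{U},f}(x,x') = \varepsilon$, where the infimum need not be attained a priori and upward closure alone produces only witnesses $V_\delta \in f^\ast(\mathcal{U}_{\varepsilon+\delta})$ for every $\delta > 0$. I expect to close this gap by invoking the natural left-continuity of the tower of covers, which holds in the concrete $(c,s)$-good constructions used throughout the paper (in particular for the $\varepsilon$-net construction discussed in Appendix \ref{appendix:sec:goodcovers}). Under such left-continuity, the family $\{V_\delta\}_{\delta>0}$ stabilizes into a single element $V \in f^\ast(\mathcal{U}_\varepsilon)$ containing both $x$ and $x'$, and the inclusion $(\subseteq)$ is established at the boundary too.
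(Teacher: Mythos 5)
Your two inclusions and the upward-closure argument for $S(x,x')$ are exactly the paper's proof: the paper writes $B_\varepsilon(x)=\bigcup_{\delta\leq\varepsilon}\{V\in f^\ast(\mathcal{U}_\delta):x\in V\}$ and then uses the hierarchical structure of the tower (i.e.\ the pulled-back cover maps $f^\ast(u_{\delta,\varepsilon})$) to push every witness $V\in f^\ast(\mathcal{U}_\delta)$ with $\delta\leq\varepsilon$ into some $V'\in f^\ast(\mathcal{U}_\varepsilon)$. So the $(\supseteq)$ direction and the sub-boundary part of $(\subseteq)$ are fine and match the paper.

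The place where you depart from the paper is the boundary case $d_{\mathfrak{U},f}(x,x')=\varepsilon$, and your patch there does not hold up as written. You are right that the infimum in (\ref{eq:dUf}) need not be attained for a completely arbitrary $\mathbb{R}$-indexed tower, and the paper silently assumes attainment when it writes the displayed union ``by definition.'' But your fix appeals to ``left-continuity'' of the tower, which (a) is not a hypothesis of the lemma, (b) only covers the particular $(c,s)$-good constructions of Appendix \ref{appendix:sec:goodcovers} rather than the stated generality, and (c) is the wrong-sided continuity for the job. What actually closes the gap is the paper's convention in \S\ref{sec:stab-cov} that an $\mathbb{R}$-indexed tower is obtained from a discrete one by keeping the cover constant on half-open intervals $[\eps,\eps')$: then $\delta\mapsto\mathcal{U}_{\varepsilon+\delta}$ is locally constant for small $\delta>0$ (right-continuity at $\varepsilon$), so your witnesses $V_\delta\in f^\ast(\mathcal{U}_{\varepsilon+\delta})$ already lie in $f^\ast(\mathcal{U}_\varepsilon)$ for $\delta$ small enough, and the infimum is attained. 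A tower that is genuinely left-continuous but jumps to the right of $\varepsilon$ would give you no control over $\mathcal{U}_{\varepsilon+\delta}$ as $\delta\downarrow 0$, which is exactly the limit you need. I would either add the local-constancy/right-continuity convention as an explicit standing assumption, or note (as the paper implicitly does) that it is guaranteed by the discretization convention; the appeal to left-continuity should be dropped.
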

\begin{proof}
By definition, $B_\varepsilon(x) = \bigcup_{\delta\leq \varepsilon}\{V\in f^\ast(\mathcal{U}_\delta);\,x\in V\}.$
Now, since the tower $\mathfrak{U}$ is hierarchical (cf. Definition \ref{def:hfc}), then whenever $\delta\leq \varepsilon$ and $V\in f^\ast(\mathcal{U}_\delta)$ one will also have that there exists $V'\in f^\ast(\mathcal{U}_\varepsilon)$ with $V \subseteq V'$. The conclusion follows.
\end{proof}

The fact that as defined $d_{\mathfrak{U},f}$ is a non-negative symmetric function on $X\times X$ permits defining the \v{C}ech (and also Rips) filtration on the powerset $\powset(X)$ of $X$. 
The \emph{\v{C}ech filtration $\cech(\mathfrak{U},f)$ induced by data $(\mathfrak{U},f: X\to Z)$}, is given by the
function $F_C: \powset(X) \rightarrow \mathbb{R}$ where for each $\sigma\subset X$, $F_C(\sigma):=\inf_{x\in X}\sup_{x'\in\sigma}d_{\mathfrak{U},f}(x,x').$ 
Specifically, $\cech(\mathfrak{U},f)$ can be written as $\{\mC_\varepsilon\stackrel{\iota_{\varepsilon,\varepsilon'}}{\longrightarrow}\mC_{\varepsilon'}\}_{s\leq \varepsilon\leq\varepsilon'}$ where $\mC_\varepsilon := \{\sigma \mid F_C(\sigma) \le \varepsilon \}$, and $\iota_{\varepsilon,\varepsilon'}$ are the natural inclusion maps. Note that $\sigma = \{x_0, \ldots, x_k\} \in \mC_\varepsilon$ if and only if $\bigcap_{i} B_\varepsilon(x_i) \neq \emptyset$.

\begin{corollary}
Under the hypothesis of Theorem~\ref{metric-stability},
$\cech\big(\mathrm{R}_{\log}(\mathfrak{U}),f\big)$ and $\cech\big(\mathrm{R}_{\log}(\mathfrak{V}),g\big)$ are 
$\log\big(2c\max(s,\delta)+c+\eta\big)$-interleaved.
\end{corollary}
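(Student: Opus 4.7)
The plan is to reduce the desired Čech interleaving to a pointwise comparison of the pull-back pseudo-metrics supplied by Proposition~\ref{prop:stab-metric-coro}, turning its multiplicative bound into an additive one via the log-reindexing, and then converting an additive bound on the pseudo-metrics into a shift in the Čech filtration parameter.

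First, I would record the elementary observation that log-reindexing acts on the pull-back pseudo-metric by composition with $\log$. Substituting $\delta=\exp(\varepsilon)$ in the infimum (\ref{eq:dUf}) defining $d_{\mathrm{R}_{\log}(\mathfrak{U}),f}$ and using monotonicity of $\log$ gives $d_{\mathrm{R}_{\log}(\mathfrak{U}),f}(x,x')=\log d_{\mathfrak{U},f}(x,x')$ for $x\neq x'$, and similarly for $\mathfrak{V},g$. Since $\eps_0=\max(1,s)\geq 1$, every off-diagonal value of $d_{\mathfrak{U},f}$ and $d_{\mathfrak{V},g}$ is at least $1$, so these log-values are non-negative.

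Next, I would apply Proposition~\ref{prop:stab-metric-coro}. As stated it compares $(\mathfrak{V},f)$ with $(\mathfrak{U},g)$, but its hypotheses ($\eta$-interleaving of $\mathfrak{U},\mathfrak{V}$ and $\sup_x d_Z(f,g)\leq \delta$) are symmetric under the swap $\mathfrak{U}\leftrightarrow\mathfrak{V}$, and both towers are $(c,s)$-good; re-invoking the proposition after this swap yields $\gamma^{-1}d_{\mathfrak{U},f}(x,x')\leq d_{\mathfrak{V},g}(x,x')\leq \gamma\, d_{\mathfrak{U},f}(x,x')$ for all $x,x'\in X$, where $\gamma=2c\max(s,\delta)+c+\eta\geq 1$. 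Taking logs and combining with the identity above converts this into the additive bound $|d_{\mathrm{R}_{\log}(\mathfrak{U}),f}(x,x')-d_{\mathrm{R}_{\log}(\mathfrak{V}),g}(x,x')|\leq \log\gamma$.

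Finally, I would pass from pseudo-metric proximity to Čech interleaving. If $\sigma=\{x_0,\ldots,x_k\}$ lies in $\mC_\varepsilon$ of $\cech(\mathrm{R}_{\log}(\mathfrak{U}),f)$, then by the definition of $F_C$ there is some $x\in X$ with $d_{\mathrm{R}_{\log}(\mathfrak{U}),f}(x,x_i)\leq \varepsilon$ for each $i$; the additive bound then gives $d_{\mathrm{R}_{\log}(\mathfrak{V}),g}(x,x_i)\leq \varepsilon+\log\gamma$, so $\sigma$ lies in $\mC_{\varepsilon+\log\gamma}$ of $\cech(\mathrm{R}_{\log}(\mathfrak{V}),g)$. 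This produces a simplex-wise inclusion of one Čech filtration into the other at the shifted scale, and the symmetric argument yields the reverse inclusion. Both sets of inclusions commute trivially with the filtration inclusions $\iota_{\varepsilon,\varepsilon'}$, giving the claimed $\log\gamma$-interleaving. I do not anticipate any serious obstacle; the main point of care is to extract the right symmetry from Proposition~\ref{prop:stab-metric-coro} so that the comparison pairs $(\mathfrak{U},f)$ with $(\mathfrak{V},g)$ rather than $(\mathfrak{V},f)$ with $(\mathfrak{U},g)$, and to verify that $\eps_0\geq 1$ keeps every logarithm in sight non-negative.
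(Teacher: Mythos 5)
Your proposal is correct and follows exactly the route the paper intends (the paper states this corollary without proof, as an immediate consequence of Proposition~\ref{prop:stab-metric-coro}): the multiplicative two-sided bound on the pull-back pseudo-metrics becomes an additive bound of $\log\gamma$ after log-reindexing, and since both \v{C}ech filtrations share the vertex set $X$, the resulting simplex-wise inclusions give an interleaving that commutes on the nose with the filtration maps. Your attention to the $\mathfrak{U}\leftrightarrow\mathfrak{V}$ swap needed to pair $(\mathfrak{U},f)$ with $(\mathfrak{V},g)$, and to $\eps_0\geq 1$ keeping the logarithms non-negative, is exactly the right bookkeeping.
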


Because of the above corollary,
the persistence diagrams arising from the \v{C}ech filtration 
on $X$ are stable in the bottleneck distance.

\subsection{An interleaving between $\MM(\mathfrak{U},f)$ and $\cech(\mathfrak{U},f)$}\label{sec:inter-metric-mm}

Interestingly, it turns out that the two views: the pullback of covers and an induced pullback metric, are closely related. Specifically, by putting together the elements discussed in this section we can state a theorem specifying a comparability between $\MM(\mathfrak{U},f)$ and $\cech(\mathfrak{U},f)$ under the $\log$-reindexing.

\begin{theorem}\label{theo:inter-metric}
Let $(Z,d_Z)$ be a compact connected metric space, $\mathfrak{U}$ be a $(c,s)$-good tower of covers of $Z$ with $s\geq 1$, and $f:X\rightarrow Z$ be continuous. Then, the hierarchical families of simplicial complexes $\MM\big(\mathrm{R}_{\log}\mathfrak{U},f\big)$ and 
$\cech\big(\mathrm{R}_{\log}\mathfrak{U},f\big)$ are $\log(c(s+2))$-interleaved.
  \end{theorem}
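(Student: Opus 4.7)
The key observation underlying the proof is that the Čech complex $\mC_\varepsilon$ is itself a nerve: identifying vertex $x$ with the set $B_\varepsilon(x)$, we have $\mC_\varepsilon = N\big(\{B_\varepsilon(x)\}_{x\in X}\big)$, and the transition maps $\iota_{\varepsilon,\varepsilon'}$ are induced by the map of covers $B_\varepsilon(x)\hookrightarrow B_{\varepsilon'}(x)$. Since $\MM(\mathfrak{U},f)_\varepsilon = N(f^\ast(\mathcal{U}_\varepsilon))$ is also a nerve of a cover, the plan is to realize the two interleaving maps as nerves of explicit maps of covers between $f^\ast(\mathcal{U}_\varepsilon)$ and $\{B_\varepsilon(x)\}_{x\in X}$, so that all commutativity-up-to-contiguity conditions fall out of Lemma \ref{lemma:cover-contiguity}.

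Constructing the two maps of covers. The easy direction: for each $V\in f^\ast(\mathcal{U}_\varepsilon)$ choose any $x_V\in V$; by Lemma \ref{lemma:ball-dUf} we have $V\subseteq B_\varepsilon(x_V)$, so $V\mapsto B_\varepsilon(x_V)$ is a map of covers $f^\ast(\mathcal{U}_\varepsilon)\to \{B_\varepsilon(x)\}_{x\in X}$, and I take its nerve to define $\varphi_\varepsilon: N(f^\ast(\mathcal{U}_\varepsilon))\to \mC_\varepsilon$. The hard direction: I claim that for each $x\in X$, there exists $V_x\in f^\ast(\mathcal{U}_{c(s+2)\varepsilon})$ with $B_\varepsilon(x)\subseteq V_x$. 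By Lemma \ref{lemma:ball-dUf}, $B_\varepsilon(x)=\bigcup_j V^{(j)}$ where each $V^{(j)}\in f^\ast(\mathcal{U}_\varepsilon)$ contains $x$ and sits inside $f^{-1}(U^{(j)})$ for some $U^{(j)}\in\mathcal{U}_\varepsilon$ containing $f(x)$. Hence $f(B_\varepsilon(x))\subseteq \bigcup_j U^{(j)}$, a set containing $f(x)$ of diameter at most $2\varepsilon$. If this diameter is $\geq s$, $(c,s)$-goodness gives $W\in\mathcal{U}_{2c\varepsilon}$ covering it; if $< s$, Proposition \ref{prop:useful} gives $W\in\mathcal{U}_{c(\diam+2s)}\subseteq\mathcal{U}_{3cs}$. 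The estimate $\max(2c\varepsilon,3cs)\leq c(s+2)\varepsilon$ holds whenever $\varepsilon\geq s\geq 1$ (since $(s+2)\varepsilon\geq (s+2)s\geq 3s$), so in both cases $W\in\mathcal{U}_{c(s+2)\varepsilon}$. Because $B_\varepsilon(x)$ is connected (a union of path-connected sets all containing $x$), it lies in a single connected component $V_x\in f^\ast(\mathcal{U}_{c(s+2)\varepsilon})$ of $f^{-1}(W)$. The assignment $B_\varepsilon(x)\mapsto V_x$ is thus a map of covers, and its nerve defines $\psi_\varepsilon:\mC_\varepsilon\to N(f^\ast(\mathcal{U}_{c(s+2)\varepsilon}))$.

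Verification of the four interleaving conditions in Definition \ref{def:inter-cover}. With $\eta=\log(c(s+2))$ at the log-reindexed scale, each of the four relevant compositions reduces, at the level of vertex sets, to a map of covers between a fixed pair of covers. For instance, $\psi_{\varepsilon+\eta}\circ\varphi_\varepsilon$ and the tower map $s_{\varepsilon,\varepsilon+2\eta}$ are both maps of covers $f^\ast(\mathcal{U}_{e^\varepsilon})\to f^\ast(\mathcal{U}_{c(s+2)^2 e^\varepsilon})$: for the composite, $V\subseteq B_{e^\varepsilon}(x_V)\subseteq V_{x_V}$, and for the tower map, $V$ is sent to its canonical superset in the coarser cover. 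Lemma \ref{lemma:cover-contiguity} then yields contiguity. Conditions (ii)-(iv) are verified by the same recipe, each time realizing both competing simplicial maps as nerves of two maps of covers between the same covers, and invoking Lemma \ref{lemma:cover-contiguity}.

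The principal technical obstacle is the existence of the set $V_x\supseteq B_\varepsilon(x)$ in $f^\ast(\mathcal{U}_{c(s+2)\varepsilon})$, i.e., bounding the cost of replacing a union of small cover elements by a single cover element from the tower. This is precisely where the numerical estimate $\max(2c\varepsilon,3cs)\leq c(s+2)\varepsilon$ is used, and also where the hypothesis $s\geq 1$ enters; it is what dictates the $c(s+2)$ constant in the interleaving bound.
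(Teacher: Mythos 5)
Your proposal is correct and follows essentially the same route as the paper: the same two vertex maps (a representative point $x_V\in V$ for one direction, and a single pullback-cover element $V_x\supseteq B_\varepsilon(x)$ obtained from $\diam(f(B_\varepsilon(x)))\leq 2\varepsilon$, $(c,s)$-goodness, Proposition \ref{prop:useful}, and the estimate forcing the factor $c(s+2)$ for $\varepsilon\geq s\geq 1$). The only difference is organizational — you observe that $\mC_\varepsilon$ is itself a nerve so that all four contiguity checks follow uniformly from Lemma \ref{lemma:cover-contiguity}, whereas the paper verifies them by hand in Claims \ref{prop:commute-1}--\ref{prop:commute-4}; the underlying intersection argument is identical.
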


\begin{corollary}\label{coro:comparability}
Under the hypotheses of the previous theorem, the persistence diagrams of $\MM\big(\mathrm{R}_{\log}\mathfrak{U},f\big)$ and 
$\cech\big(\mathrm{R}_{\log}\mathfrak{U},f\big)$ are at bottleneck distance bounded by $\log(c(s+2))$.
\end{corollary}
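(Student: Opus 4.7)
The plan is to derive the corollary as an essentially immediate consequence of Theorem \ref{theo:inter-metric} together with the standard stability theorem for interleaved persistence modules due to \cite{CCGGO09}. The structure mirrors the passage from Proposition \ref{prop:inter-pullback} / Corollary \ref{coro:stab-cov}, where a $\log(c(s+2))$-interleaving of towers of simplicial complexes is first established at the combinatorial level and then pushed to the algebraic level by applying the homology functor.

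Concretely, I would first invoke Theorem \ref{theo:inter-metric} to obtain the $\log(c(s+2))$-interleaving between the towers $\MM(\mathrm{R}_{\log}\mathfrak{U},f)$ and $\cech(\mathrm{R}_{\log}\mathfrak{U},f)$. By Definition \ref{def:inter-cover} this provides simplicial maps $\varphi_\varepsilon$ and $\psi_\varepsilon$ between the two towers, shifted by $\eta = \log(c(s+2))$, such that the four diagrams of (\ref{eq:simp-inter}) commute up to contiguity. Note that on the $\cech$ side, the structure maps are inclusions $\iota_{\varepsilon,\varepsilon'}$, which are a special case of simplicial maps, so the formalism of Definition \ref{def:inter-cover} applies without modification.

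Next, I would apply the simplicial homology functor $\mathrm{H}_k(\cdot)$ with field coefficients to both towers, yielding two persistence modules indexed over $[\log s,\infty)$. Since contiguous simplicial maps induce identical maps at the homology level (as recorded after Lemma \ref{lemma:cover-contiguity}), all four contiguity-up-to relations from Definition \ref{def:inter-cover} become strict commutativity relations among linear maps. This is precisely the notion of $\eta$-interleaving of persistence modules used in \cite{CCGGO09}.

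Finally, I would invoke the stability theorem of \cite{CCGGO09, CEH07}, which asserts that $\eta$-interleaved persistence modules have persistence diagrams whose bottleneck distance is at most $\eta$. Taking $\eta = \log(c(s+2))$ yields the claimed bound. I do not anticipate a genuine obstacle here; the only minor point requiring care is that both towers should be extended/truncated over a common indexing interval so that the interleaving provided by Theorem \ref{theo:inter-metric} is well-defined at every scale, which is handled by the convention in \S\ref{sec:stab-cov} of extending discrete towers to $\R$-indexed ones.
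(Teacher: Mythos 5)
Your proposal is correct and follows exactly the route the paper intends: the corollary is treated as an immediate consequence of Theorem \ref{theo:inter-metric}, obtained by applying the homology functor (under which contiguous maps become equal, turning the contiguity-commuting diagrams into a strong interleaving of persistence modules) and then invoking the stability theorem of \cite{CCGGO09}. This is the same pattern the paper uses to derive Corollary \ref{coro:stab-cov} from Proposition \ref{prop:inter-pullback}, so no further comment is needed.
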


\begin{proof}[Proof of Theorem \ref{theo:inter-metric}]
To fix notation, write $\mathfrak{U}=\{\mathcal{U}_\varepsilon\},$ $\MM(\mathfrak{U},f)=\{\M_\varepsilon\stackrel{s_{\varepsilon,\varepsilon'}}{\longrightarrow} \M_{\varepsilon'}\}_{s\leq\varepsilon\leq\varepsilon'}$, and 
$\cech(\mathfrak{U},f)=\{\mC_\varepsilon\stackrel{\iota_{\varepsilon,\varepsilon'}}{\longrightarrow}\mC_{\varepsilon'}\}_{s\leq \varepsilon\leq\varepsilon'}$  for the multiscale mapper and \v{C}ech filtrations, respectively. Recall that  $\sigma = \{x_0, \ldots, x_k\} \in \mC_\varepsilon$ if and only if $\bigcap_{i} B_\varepsilon(x_i) \neq \emptyset$. 

For each $\varepsilon\geq s$ we will define simplicial maps $\pi_\varepsilon:\mC_\varepsilon\rightarrow \M_{\mycechconst \varepsilon}$ and $\psi_{\varepsilon}: \M_{\varepsilon}\rightarrow\mC_{\varepsilon}$ so that each of the diagrams below 
commutes up to contiguity:

\begin{equation}\label{eqn:cechMM-1}
\xymatrix{
\M_{\varepsilon} \ar[d]_{\psi_\varepsilon} \ar[rr]^{s_{\varepsilon,\mycechconst \varepsilon}} && \M_{\mycechconst \varepsilon} \\
 {\mC}_{\varepsilon} \ar[urr]^{\pi_{\varepsilon}}&& 
}
\end{equation}

\begin{equation}\label{eqn:cechMM-2}
\xymatrix{
&& \M_{\mycechconst \varepsilon} \ar[d]^{\psi_{\mycechconst \varepsilon}}  \\
 {\mC}_{\varepsilon} \ar[urr]^{\pi_{\varepsilon}}\ar[rr]^{\iota_{\varepsilon,\mycechconst \varepsilon}}&& {\mC}_{\mycechconst \varepsilon}
}
\end{equation}

\begin{equation}\label{eqn:cechMM-3}
\xymatrix{
&&\M_{\mycechconst \varepsilon} \ar[rrr]^{s_{\mycechconst \varepsilon,\mycechconst \varepsilon'}} &&& \M_{\mycechconst \varepsilon'} \\
 {\mC}_{\varepsilon} \ar[urr]^{\pi_{\varepsilon}}\ar[rrr]^{\iota_{\varepsilon,\varepsilon'}}&&& {\mC}_{\varepsilon'}\ar[urr]^{\pi_{\varepsilon'}}
}
\end{equation}

\begin{equation}\label{eqn:cechMM-4}
\xymatrix{
\M_{\varepsilon} \ar[d]_{\psi_\varepsilon} \ar[rr]^{s_{\varepsilon,\varepsilon'}} && \M_{\varepsilon'} \ar[d]^{\psi_{\varepsilon'}}  \\
 {\mC}_{\varepsilon} \ar[rr]^{\iota_{\varepsilon,\varepsilon'}}&& {\mC}_{\varepsilon'}
}
\end{equation}

Recall that the vertex set of $\mC_\varepsilon$ is $X$, whereas the vertex set of $\M_\varepsilon$ is the cover $f^\ast(\mathcal{U}_\varepsilon).$

\paragraph*{The map $\pi_\varepsilon:\mC_\varepsilon\rightarrow \M_{\mycechconst \varepsilon}$.}
Consider the  map $\widehat{\pi}_\eps: X\rightarrow f^\ast(\mathcal{U}_{\mycechconst \varepsilon})$, 
where $\widehat{x}$ equals to an arbitrary but fixed $V_x\in f^\ast(\mathcal{U}_{\mycechconst \varepsilon})$ such that $B_\varepsilon(x)\subseteq V_x.$ Such a $V_x$ always exists. Indeed, note first that $f(B_\varepsilon(x))=\{U\in\mathcal{U}_\varepsilon|\,f(x)\in U\}.$
Then, it follows that $\diam(f(B_\varepsilon(x)))\leq 2\varepsilon$. Now, invoking the fact that $\mathfrak{U}$ is $(c,s)$-good and Proposition \ref{prop:useful}, and noting that $c(2\varepsilon + s) \le \mycechconst \varepsilon$ for $\eps \ge s\ge 1$, we conclude the existence of $U\in \mathcal{U}_{\mycechconst \varepsilon}$ such that $f(B_\varepsilon(x))\subset U$. Finally, pick $V_x\in\cc(f^{-1}(U))$ s.t. $B_\varepsilon(x)\subset V_x$.

The map $\widehat{\pi}_\eps$ induces the simplicial map
$\pi_\eps:\mC_\varepsilon\rightarrow \M_{\mycechconst \varepsilon}$ for any $\varepsilon \ge s$, as used in diagrams (\ref{eqn:cechMM-1}), (\ref{eqn:cechMM-2}) and (\ref{eqn:cechMM-3}). Indeed, assume 
$\sigma=\{x_0,\ldots,x_k\}\in\mC_\varepsilon$. Then, 
$\bigcap_{i}B_\varepsilon(x_i)\neq \emptyset.$ Since by construction 
$\widehat{\pi}_\varepsilon(x_i)=V_{x_i}\supseteq B_\varepsilon(x_i)$ for all $i$, 
it follows that $\bigcap_i \widehat{\pi}_\varepsilon(x_i)\neq \emptyset$ as well.

\paragraph*{The map $\psi_\varepsilon:\M_\varepsilon\rightarrow \mC_{\varepsilon}$.}
For all $V\in f^\ast(\mathcal{U}_\varepsilon)$ pick a point $x_V\in V$: This is a choice of a \emph{representative} for each element of the pullback cover. Define $\widehat{\psi}_\varepsilon(V)=x_V$. We now check that this vertex map induces a simplicial map $\psi_\varepsilon$.
Assume that $V_0,V_1,\ldots,V_k\in f^\ast(\mathcal{U}_\varepsilon)$ are s.t. $\bigcap_i V_i\neq \emptyset$. Note that by Lemma \ref{lemma:ball-dUf}, each $V_i$ satisfies $V_i \subseteq B_\varepsilon(x_{V_i})$. It then follows that  
$\bigcap_i B_{\varepsilon}(x_{V_i})\neq \emptyset,$ implying that $x_{V_0}, \ldots, x_{V_k}$ span a simplex in $\mC_{\varepsilon}$.

\begin{claim}\label{prop:commute-1}
The maps $s_{\eps,c(s+2)\eps}$ and $\pi_\varepsilon\circ \psi_{\varepsilon}$ in (\ref{eqn:cechMM-1}) are contiguous. 
\end{claim}
\begin{proof}
Assume that $V_0, V_1, \ldots, V_k \in f^*(\mathcal{U}_\varepsilon)$ are such that $\cap_i V_i \neq \emptyset$, and let $U_i := s_{\eps,c(s+2)\eps}(V_i)$ for each $i$. Since $V_i \subseteq U_i$, we have that $\cap_i V_i \subseteq \cap_i U_i$. 
On the other hand, let $x_{V_i}: = \psi_\varepsilon(V_i)$ (i.e, $x_{V_i}$ is the representative of the set $V_i$). Note that $W_i := \pi_{\varepsilon}(x_{V_i})$ satisfies $W_i \supseteq B_\varepsilon(x_{V_i}) \supseteq V_i$.  
We then have that $\cap_i V_i \subseteq \cap_i U_i \bigcap \cap_j W_j$. Thus $\cap_i U_i \bigcap \cap_j W_j \neq \emptyset$, implying that $\{U_i\} \cup \{W_i\}$ spans a simplex in $\M_{c (s+2)\eps}$. Hence $s_{\eps,c(s+2)\eps}$ and $\pi_{\varepsilon} \circ \psi_{\varepsilon}$ are contiguous. 
\end{proof}

\begin{claim}\label{prop:commute-2}
The maps $\iota_{\eps,c(s+2)\eps}$ and $\psi_{\mycechconst \varepsilon}\circ \pi_{\varepsilon}$ in (\ref{eqn:cechMM-2}) are contiguous. 
\end{claim}
\begin{proof}

Specifically, consider $\sigma = \{ x_0, \ldots, x_k \} \in \mC_\varepsilon$, and let $V_i := \pi_{\varepsilon} (x_i)$ for each $i\in\{0,\ldots,k\}$. By definition of $\pi_\varepsilon$, we have that $B_\varepsilon(x_i) \subseteq V_i\,\,\mbox{for $i\in\{0,\ldots,k\}.$}$

Let $\tilde x_i := \psi_{\mycechconst \varepsilon}(V_i)$ for each $i\in\{0,\ldots,k\}$. Then, since $V_i$ belongs to $f^\ast(\mathcal{U}_{c(s+2)\eps})$, by definition of $\tilde x_i$ and Lemma \ref{lemma:ball-dUf}, we see that $ V_i \subset B_{\mycechconst \varepsilon}(\tilde x_i)$, for each $i \in \{0,\ldots, k\}$
, and thus 
$$B_\varepsilon(x_i)\subset B_{\mycechconst \varepsilon}(\tilde x_i)\,\,\mbox{for $i\in\{0,\ldots,k\}.$}$$

On the other hand one trivially has $B_\varepsilon(x_i)\subset B_{\mycechconst \varepsilon}(x_i)\,\,\mbox{for $i\in\{0,\ldots,k\}$},$
so  that then $B_\varepsilon(x_i)\subset B_{\mycechconst \varepsilon}(\tilde x_i)\cap  B_{\mycechconst \varepsilon}(x_i)$
for $i\in\{0,\ldots,k\}.$ Thus, 

$$\emptyset\neq \cap_i B_\varepsilon(x_i)\subseteq \bigcap_i B_{\mycechconst \varepsilon}(x_i)  \cap  B_{\mycechconst \varepsilon}(\tilde x_i). $$

Hence vertices in $\iota_{\varepsilon,\mycechconst \varepsilon}(\sigma) \cup \psi_{\mycechconst \varepsilon} \circ \pi_{\varepsilon} (\sigma)$ span a simplex in $\mC_{\mycechconst \varepsilon}$ establishing
that the two maps $\iota_{\eps,c(s+2)\eps}$ and $\psi_{\mycechconst \varepsilon} \circ \pi_{\varepsilon}$ are contiguous. 
\end{proof}

The following two claims are also true. Their proofs use arguments similar to the ones given in the proof of Claims \ref{prop:commute-1} and \ref{prop:commute-2} above and are ommitted.
\begin{claim}\label{prop:commute-3}
The maps $\pi_{\varepsilon'}\circ\iota_{\eps,\eps'}$ and $s_{\mycechconst \varepsilon,\mycechconst \varepsilon'}\circ \pi_{\varepsilon}$ in (\ref{eqn:cechMM-3}) are contiguous. 
\end{claim}
\begin{claim}\label{prop:commute-4}
The maps $\iota_{\eps,\eps'}\circ\psi_\eps$ and $\psi_{\eps'}\circ \iota_{\eps,\eps'}$ in (\ref{eqn:cechMM-4}) are contiguous. 
\end{claim}

The rest of the proof of Theorem \ref{theo:inter-metric} follows steps similar to those in the proof of Proposition \ref{prop:stab-func}. 
\end{proof}


\section{Discussion}
In this paper, we proposed, as well as studied theoretical and computational aspects of \emph{ multiscale mapper,} a construction which produces a multiscale summary of a map on a domain using a cover of the codomain at different scales. 

Given that in practice, hidden domains are often approximated by a set of discrete point samples, an important future direction will be to investigate how to approximate the multiscale mapper of a map $f: X \to Z$ on the metric 
space $X$ from a finite set of samples $P$ lying on or around $X$ as well as function values $f: P \to Z$ at points in $P$. It will be particularly interesting to be able to handle noise both in point samples $P$ and in the observed values of the function $f$.

We also note that the multiscale mapper framework can be potentially extended to a \emph{zigzag}-tower of covers, which will further increase the information encoded in the resulting summary. It will be interesting to study the theoretical properties of such a \emph{zigzag} version of mapper. Its stability however appears challenging, given that the theory of  stability of zigzag persistence modules is much less developed than that for standard persistence modules. 
Another interesting question is to understand the continuous object that the mapper converges to as the scale of the cover tends to zero. In particular, does the mapper converge to the Reeb space \cite{reeb-space}?  

Finally, it seems of interest to understand the features of a topological space which are captured by Multiscale Mapper and its variants. Some related work in this direction in the context of Reeb graphs has been reported 
recently \cite{justin,munch}.

\subsection*{Acknowledgments.} This work is partially supported by 
the National Science Foundation under grants 
CCF-1064416, CCF-1116258, CCF1526513, IIS-1422400, CCF-1319406, and CCF-1318595. 

\appendix

\section{The instability of Mapper}
\label{appendix:instability-mapper}
In this section we briefly discuss how one may perceive that the simplicial complexes produced by Mapper may not admit a simple notion of stability. 

As an example consider the situation in Figure \ref{fig:counter-mapper-1}.  Consider for each $\delta>0$ the domain $X_\delta$ shown in the figure (a topological graph with one loop), and the functions depicted in figure \ref{fig:counter-mapper-1}: these are height functions $f_\delta$ and $g_\delta$  which differ by $\delta$, that is $\|f_\delta-g_\delta\|_{\infty}=\delta$. The open cover is shown in the middle of the figure. Notice that the Mapper outputs for these two functions w.r.t. the same open cover of the co-domain $\mathbb{R}$ are different and that the situation can be replicated for each $\delta>0$.

In contrast, one of the features of the Multiscale Mapper construction is that it is amenable to a certain type of stability under changes in the function and in the tower of covers.  This situation is not surprising and is reminiscent of the pattern arising when comparing standard homology (and Betti numbers) computed on fixed simplicial complexes to persistent homology (and Betti barcodes/persistent diagrams) on tower of simplicial complexes (i.e. filtrations).

In what follows we will introduce a particular class of towers of covers that will be used to express some stability properties enjoyed by Multiscale Mapper.

\begin{figure}
\begin{center}
\includegraphics[width=\linewidth]{./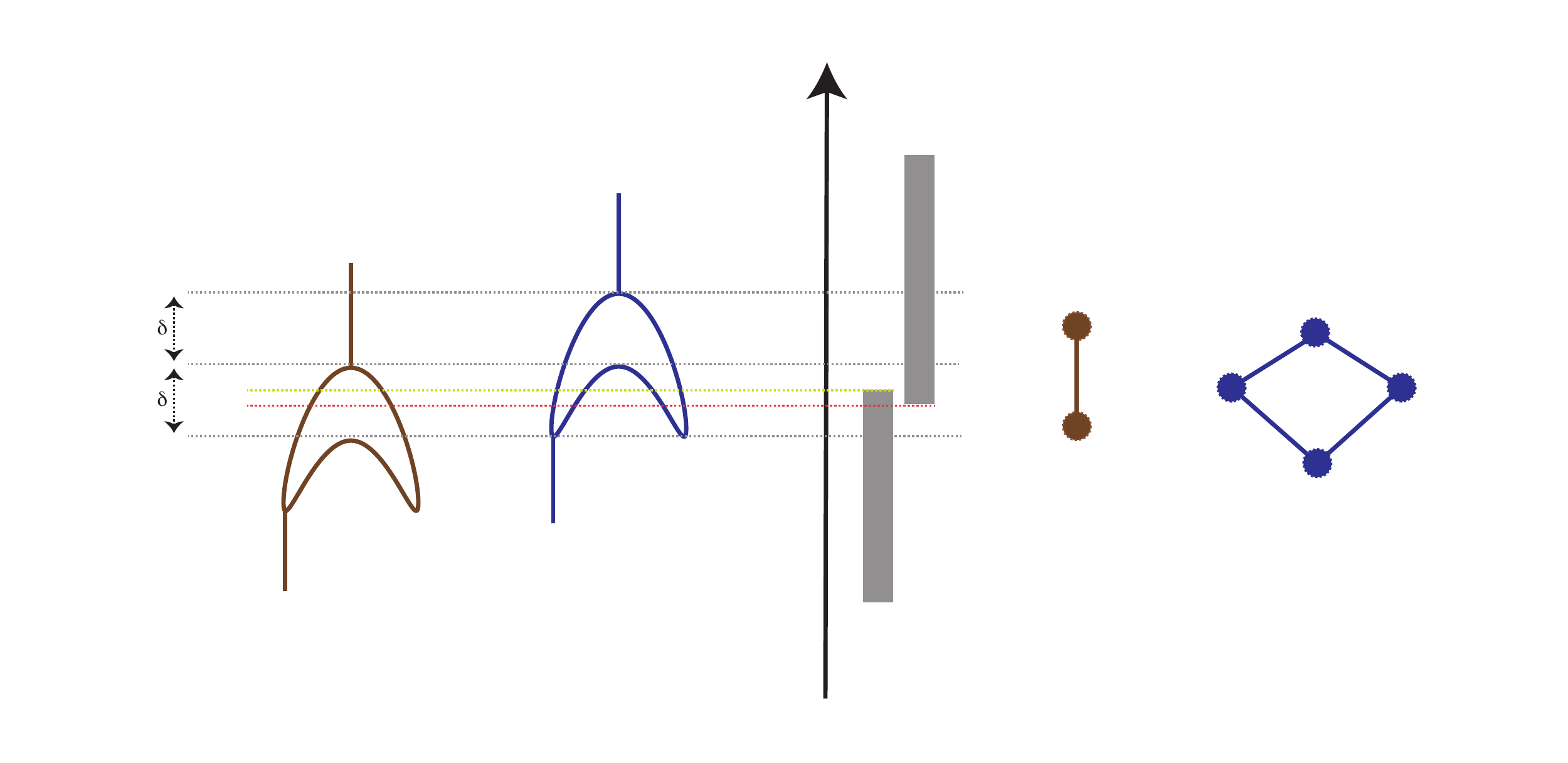}
\end{center}
\caption{A situation in which Mapper yields two different answers for similar functions on the same domain. The construction can be carried out for each $\delta>0$.}
\label{fig:counter-mapper-1}
\end{figure}

\section{Good towers of covers}
\label{appendix:subsec:goodcover}

\begin{remark}
We make the following remarks about the notion of $(c,s)$-good towers of covers:
\begin{itemize}
\item The underlying intuition is that an ``ideal'' tower of covers is one for which $c=1$ and $s=0$.
\item A first example is given by the following construction: pick any $s\in [0,\diam(Z)]$ and for $\varepsilon\geq s$ let $\mathcal{U}_\varepsilon:=\{B(z,\frac{\varepsilon}{2}),\,z\in Z\}$, i.e. the collection of \emph{all} $\varepsilon$-balls in $Z$. The maps $w_{\varepsilon,\varepsilon'}$ for $\varepsilon'\geq \varepsilon$ are defined in the obvious way: $B(z,\frac{\varepsilon}{2})\mapsto B(z,\frac{\varepsilon'}{2})$, all $z\in Z$. Then, since any $O\subset Z$ is contained in some ball $B(o,\diam(O))$ for some $o\in O$, this means that $\{\mathcal{U}_\varepsilon\}_{\varepsilon\geq s}$ is $(2,s)$-good. {Of course whenever $Z$ is path connected and not a singleton, this tower of covers is infinite.} A related finite construction is given next.
\item 
A discrete set $P\subset Z$ is called an \emph{$\initsampling$-sample of 
$(Z, d_Z)$} if for any point $z \in Z$, $d_Z(z, P) \le \initsampling$. 
Given a finite $\nu$-sampling $P$ of $Z$ one can always find a $(3,2\nu)$-good 
tower of covers of $Z$ consisting, for each scale $\varepsilon\geq 2\nu$, of all balls $B(p,\frac{\varepsilon}{2})$, $p\in P$. Details are given in Appendix \ref{appendix:sec:goodcovers}. This family may not be space efficient. An example of how to construct a space efficient $(c,s)$-good tower of covers for a compact metric space $(Z, d_Z)$ can be found in Appendix \ref{appendix:sec:goodcovers}. 
  \item Conditions (1) and (2) mean that the tower begins at resolution $s$, and the resolution parameter $\varepsilon$ controls the geometric characteristics of the elements of the cover. Parameter $s$ is clearly related to the size/finiteness of $\mathfrak{U}$: if $s$ is very small, then the number of sets in $\mathcal{U}_\varepsilon$ for $\varepsilon$ larger than but close to $s$ has to be large. 
Also, requiring $s$ to be smaller than or equal to the diameter of $Z$ ensures that condition 3 is not vacuously satisfied.

\item Condition (3) controls the degree to which one can inject a given set in $Z$ inside an element of the cover. The situation when $c > 1$ is consistent with having finite covers for all $\varepsilon$. On the other hand, $c=1$ may require, for each $\varepsilon$, open covers with infinitely many elements.

\item If $\mathfrak{U}$ is $(c,s)$-good, then it is $(c',s')$-good for all $c'\geq c$ and $s'\geq s$.
\end{itemize}
\end{remark}

\subsection{The necessity of $(c,s)$-good tower of covers.}
\label{subsec:badcover}

In Section \S\ref{subsec:stability}, we aim to obtain a stability result for the multiscale mapper $\mathrm{MM}(\mathfrak{U},f)$ with respect 
to perturbations of $f$. 
First, notice that we need to bound the resolution
of an tower of covers from below by a parameter $s>0$ to ensure its finiteness.  
We explain now why we need the second parameter $c$ for
the stability result.

Specifically, in what follows, we provide an example of a tower of covers of $\R$ which does not satisfy the condition
3 in $(c,s)$-goodness. This causes the persistence diagram $\mathrm{D}_1\mathrm{MM}(\mathfrak{U},f)$ to be \textbf{unstable} with respect to small perturbations of $f$.

\subsection*{Construction of a pathological tower of covers.} Given $M>0$ let $\mathbb{I}:=[-M,M].$ For each $\varepsilon>0$ and $k\in\Z$ let $I_k^{(\varepsilon)} := [k \cdot 2^{\lfloor \log_2^\eps \rfloor}, (k+1)\cdot 2^{\lfloor \log_2^\eps \rfloor} ]$. Pick any $s\geq 0$ and consider the following tower $\mathfrak{W}$  of covers of $\mathbb{I}$ by closed intervals:
$$
\mathfrak{W} = \big\{ \mathcal{W}_{\eps}\overset{\tiny{w_{\eps,\eps'}}}{\longrightarrow} \mathcal{W}_{\eps'}\big\}_{s \le \varepsilon\leq \varepsilon'} ;$$
where 
$$\mathcal{W}_\eps := \big\{ I_k^{(\varepsilon)}~\mid~ k\in \mathbb{Z}~\mbox{s.t. $I_k^{(\varepsilon)}\cap \mathbb{I}\neq \emptyset$} \big \}. 
$$

The maps of covers $w_{\eps, \eps'}: \mathcal{W}_\eps \to \mathcal{W}_{\eps'}$ are defined below.
\begin{claim} For any $\eps' >\eps$, and each element $I_k^{(\eps)}\in \mathcal{W}_\eps$, there exists a unique $k'\in \Z$ such that $I_{k}^{(\varepsilon)}\subseteq I^{(\varepsilon')}_{k'} $.
\end{claim}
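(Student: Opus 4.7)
My plan is to reduce the claim to an elementary statement about nested dyadic intervals by exploiting the piecewise-constant structure of $\varepsilon \mapsto \lfloor \log_2 \varepsilon \rfloor$. Set $m(\varepsilon) := \lfloor \log_2 \varepsilon \rfloor$, so that every $I_k^{(\varepsilon)}$ has length $2^{m(\varepsilon)}$ and left endpoint $k\cdot 2^{m(\varepsilon)}$. Since $\lfloor \log_2 \cdot \rfloor$ is monotone non-decreasing, for $\varepsilon' \geq \varepsilon$ the integer $j := m(\varepsilon') - m(\varepsilon)$ is non-negative, and one can rewrite
\[
I_{k'}^{(\varepsilon')} \;=\; \bigl[\,k'\cdot 2^{j}\cdot 2^{m(\varepsilon)},\,(k'+1)\cdot 2^{j}\cdot 2^{m(\varepsilon)}\bigr].
\]

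With this common scale, the containment $I_k^{(\varepsilon)} \subseteq I_{k'}^{(\varepsilon')}$ becomes the pair of integer inequalities $k'\cdot 2^{j} \leq k$ and $(k'+1)\cdot 2^{j} \geq k+1$, i.e.\ $k' \in \bigl[(k+1)/2^{j} - 1,\; k/2^{j}\bigr]$. For existence I would simply set $k' := \lfloor k/2^{j} \rfloor$: the inequality $k' \leq k/2^{j}$ is immediate, and since $k'+1 > k/2^{j}$ one has $(k'+1)\cdot 2^{j} > k$, which by integrality of both sides forces $(k'+1)\cdot 2^{j} \geq k+1$. Note that this $I_{k'}^{(\varepsilon')}$ automatically belongs to $\mathcal{W}_{\varepsilon'}$, since it contains $I_k^{(\varepsilon)}$, which by hypothesis meets $\mathbb{I}$.

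For uniqueness, I would observe that the admissible interval for $k'$ has length $(2^{j}-1)/2^{j} < 1$ (and length $0$ when $j=0$, with $k'=k$ the only option), so it contains at most one integer. This handles both cases at once and closes the claim.

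There is no real obstacle here — the argument is purely arithmetic bookkeeping of dyadic indices — so the only care needed is (i) keeping track of the slight asymmetry introduced by $\lfloor \log_2 \varepsilon \rfloor$ being constant on dyadic bands (so that different $\varepsilon < \varepsilon'$ may give $j = 0$, in which case the two covers coincide and $w_{\varepsilon,\varepsilon'}$ is the identity), and (ii) verifying that strict inequalities between integers can be promoted to $\geq k+1$, which is what makes the existence argument work cleanly.
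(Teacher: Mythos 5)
Your argument is correct and essentially identical to the paper's: your choice $k' = \lfloor k/2^{j}\rfloor$ is exactly the paper's "largest integer with $k'\cdot 2^{q'} \leq k\cdot 2^{q}$," and both proofs promote the strict integer inequality $(k'+1)2^{j} > k$ to $(k'+1)2^{j} \geq k+1$ in the same way. The only cosmetic difference is in uniqueness, where the paper appeals to the disjoint interiors of the intervals in $\mathcal{W}_{\varepsilon'}$ while you bound the length of the admissible window for $k'$ below $1$; both are fine.
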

For each $k\in\Z$ we set $w_{\eps, \eps'}(I_k^{(\varepsilon)}) = I_{k'}^{(\varepsilon')}$ where $k'\in\Z$ is given the claim above. 
\begin{proof}(Proof of the claim)
Write $q={\lfloor \log_2^{\eps} \rfloor}$ and $q'={\lfloor \log_2^{\eps'} \rfloor} = q+n$ for some $n\in\N$.
We are going to produce $k'\in\Z$ such that $k'\cdot 2^{q'}\leq k\cdot 2^q$ and $(k+1)\cdot 2^q\leq (k'+1)\cdot 2^{q'}$, which will immediately imply that $I_{k}^{(\varepsilon)}\subseteq I^{(\varepsilon')}_{k'} $.

Choose $k'$ to be the largest integer such that $k' \cdot 2^{q'}\le k \cdot 2^q$. This means that $k'\in\Z$ is maximal amongst integers for which $k'\cdot 2^n \leq k$. By maximality, we have that $(k'+1)\cdot 2^n \geq k+1$. Then, this means that $(k'+1) \cdot 2^{q'}\geq (k+1) \cdot 2^q$, which implies that indeed $I_{k}^{(\varepsilon)}\subseteq I^{(\varepsilon')}_{k'} $.

In fact, $I^{(\varepsilon')}_{k'} $ is the unique interval in $\mathcal{W}_{\eps'}$ that contains $I^{(\varepsilon)}_{k} $ due to the fact that intervals in $\mathcal{W}_{\eps'}$ have disjoint interiors.  
\end{proof}

\begin{remark}
By the argument in the proof of the above claim, it also follows that for any $s\leq \eps <\eps' < \eps''$ and any $I \in \mathcal{W}_\eps$, $w_{\eps, \eps''} (I) = w_{\eps', \eps''} (w_{\eps, \eps'} (I))$; that is, $w_{\eps, \eps''} = w_{\eps', \eps''} \circ w_{\eps, \eps'}$. Hence the maps $\{w_{\eps, \eps'}:\mathcal{W}_\varepsilon\rightarrow \mathcal{W}_{\varepsilon'}\}_{s\le \eps \le \eps'}$ are valid maps of covers, satisfying the conditions in Definition \ref{def:hfc}. Hence $\mathfrak{W}$ is a tower of (closed) covers. 
\end{remark}

\begin{remark}
This natural tower of covers $\mathfrak{W}$ is not $(c,s)$-good for any constant $c$. Specifically, consider an arbitrary small interval $(-r, r)$ for any $r > 0$. Clearly, there is no element in any $\mathcal{W}_\eps$ that contains this interval. 
It turns out that the persistence diagram arising from multiscale mapper is unstable w.r.t. perturbations of the input function, as we will show by an example shortly. Note that, in contrast, stability of the persistence diagrams of multiscale mapper outputs is guaranteed for $(c,s)$-good tower of covers, as stated in  Corollary \ref{coro:stab-func} and Theorem \ref{theo:stab-general}. 
\end{remark}

\subsection*{An example.}

\begin{figure}
\centerline{\includegraphics[height=6cm]{./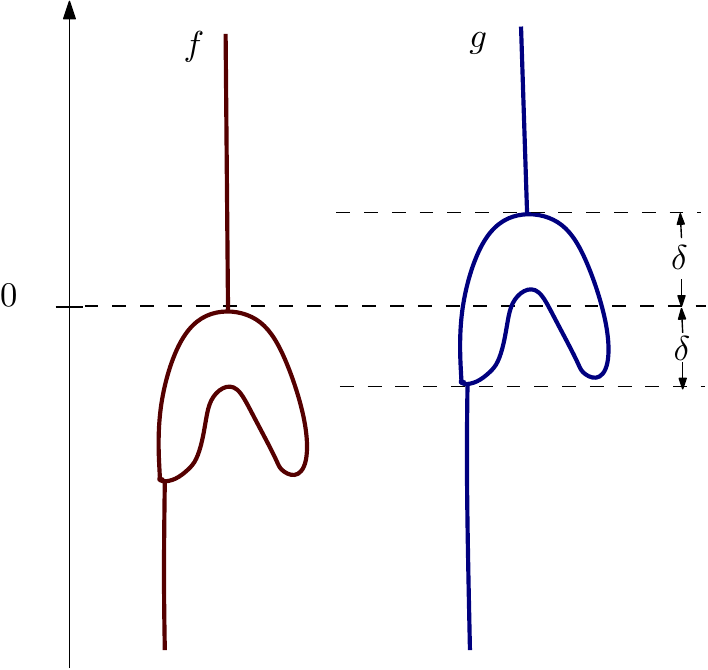}}
\caption{$f$ and $g$.}
\label{fig:twofuncs}
\end{figure}

We show how instability of multiscale mapper may arise for some choices of $\mathfrak{W}$ using the following example which is similar to the one presented in Figure \ref{fig:counter-mapper-1}.  Let $\delta$ be any value larger than $s$, the smallest scale of the tower of covers $\mathfrak{W}$.  Suppose we are given two functions defined on a graph $G$: $f, g: G \to \mathbb{R}$, as shown in Figure \ref{fig:twofuncs}. It is clear that $\| f -g \|_\infty = \delta$.

$\mathrm{D}_1\mathrm{MM}(\mathfrak{W},f)$ consists of the point $(s, 2\delta)$, indicating that a non-null homologous loop exists at scale $s$ in the pullback nerve complex $N(f^*(\mathcal{W}_s))$, but is killed at scale $2\delta$ in the nerve complex $N(f^*(\mathcal{W}_{2\delta}))$ (specifically, the image of this loop under the simplicial map $f^*(w_{s,2\delta})$ becomes null-homologous).

However, for the function $g$, there is a loop created at the lowest scale $s$, but the homology class carried by this loop is never killed under the simplicial maps $g^*(w_{s, \eps'}): N(g^*(\mathcal{W}_s)) \to N(g^*(\mathcal{W}_{\eps'})$ for any $\eps' > s$. Thus  the persistence diagram $\mathrm{D}_1\mathrm{MM}(\mathfrak{W},g)$ consists of the point $(s, \infty)$. 
Hence the two persistence diagrams $\mathrm{D}_1\mathrm{MM}(\mathfrak{W},f)$ and $\mathrm{D}_1\mathrm{MM}(\mathfrak{W},g)$ are not close under the bottleneck distance (in fact, the bottleneck distance between them is $\infty$),  despite the fact that the functions $f$ and $g$ are $\delta$-close , that is $\| f - g \|_\infty \le \delta$.

\begin{remark}
We remark that for clarity of presentation, in the above construction, each element in the covering $\mathcal{W}_\eps$ is a closed interval. However, this example can be easily extended to open covers: Specifically, let $\nu$ be a sufficiently small positive value such that $\nu < s < \delta$. Then we can change each interval $I = [k \cdot 2^{\lfloor \log_2^\eps \rfloor}, (k+1)\cdot 2^{\lfloor \log_2^\eps \rfloor} ] \in \mathcal{W}_\eps$ to $I^\nu = (k \cdot 2^{\lfloor \log_2^\eps \rfloor}-\nu, (k+1)\cdot 2^{\lfloor \log_2^\eps \rfloor} +\nu)$. Note that for an arbitrary small interval $(-r, r)$ with $r > \nu$, there is no element in any $\mathcal{W}_\eps$ that contains it, so that the resulting tower of covers is not $(c,s)$-good for any $c\geq 1$. Hence the example described in Figure \ref{fig:twofuncs} can be adapted whenever $\delta > \nu$; this leads to the following statement:
\end{remark}

\begin{proposition}
Fix $s\geq 0$ and $W>s$. Then, for each $\frac{W}{2} > \delta>0$ there exist (1) a topological graph $G_\delta$, (2) two continuous functions $f_\delta,g_\delta:G_\delta\rightarrow [-W,W]$, and (3) a tower of covers $\mathfrak{W}$ of $[-W,W]$ satisfying the following properties:
\begin{itemize}
\item $\|f_\delta-g_\delta\|_{\infty}=\delta$,
\item $\mathrm{D}_1\MM(\mathfrak{W},f_\delta)=\{(s,2\delta)\}$ and $\mathrm{D}_1\MM(\mathfrak{W},g_\delta)=\{(s,\infty)\}.$
\end{itemize}
\end{proposition}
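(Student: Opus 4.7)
The plan is to exhibit explicit data $(G_\delta, f_\delta, g_\delta, \mathfrak{W})$ realizing both stated persistence diagrams. I assume $s < 2\delta$ (otherwise the diagram point is degenerate) and, for notational simplicity, that $2\delta = 2^{r}$ for some integer $r$; the general case follows by a smooth rescaling of $\R$. Fix $\nu$ with $0 < \nu < \min(\delta, s)/4$, let $G_\delta = S^{1}$ be parametrised by $\theta \in [0, 2\pi)$, and set
\begin{equation*}
 f_\delta(\theta) := (\delta + 2\nu)\cos\theta, \qquad g_\delta(\theta) := (2\delta + 2\nu)\cos\theta.
\end{equation*}
Both maps take values in $[-W, W]$ because $2\delta + 2\nu < W$, and $\|f_\delta - g_\delta\|_{\infty} = \max_\theta|\delta\cos\theta| = \delta$. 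Their images are $I_f = [-(\delta + 2\nu),\, \delta + 2\nu]$ and $I_g = [-(2\delta + 2\nu),\, 2\delta + 2\nu]$, both symmetric about $0$.

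For the tower, I use the pathological dyadic construction of the preceding subsection but cap the dyadic scale at $r$. Define $q(\eps) := \min(\lfloor \log_2 \eps \rfloor,\, r)$ and
\begin{equation*}
 \mathcal{W}_{\eps} := \bigl\{\,(k\cdot 2^{q(\eps)} - \nu,\ (k+1)\cdot 2^{q(\eps)} + \nu) : k \in \Z,\ \text{interval meets}\ [-W, W]\,\bigr\}.
\end{equation*}
Because $q$ is capped at $r$, the cover is constant for $\eps \geq 2\delta$; the maps $w_{\eps,\eps'}$ sending each interval to the unique interval of the coarser cover that contains it are well defined by dyadic nesting and satisfy the tower axioms. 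This $\mathfrak{W}$ is essentially the pathological tower of the preceding subsection, stabilised past scale $2\delta$.

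The key computational ingredient is a case analysis for cosine preimages: for $h(\theta) = A\cos\theta$ and an open interval $J$, the preimage $h^{-1}(J)$ is a single arc around $\theta = \pi$ if $J$ contains $-A$ but not $A$, a single arc around $\theta = 0$ if $J$ contains $A$ but not $-A$, two symmetric arcs (one in each hemisphere of $S^1$) if $J \subset (-A, A)$, and trivially otherwise. For $f_\delta$, at every scale $\eps \in [s, 2\delta)$ the cover size is $2^{q(\eps)} \leq \delta$, and at least one element of $\mathcal{W}_\eps$ lies strictly inside $(-(\delta + 2\nu),\ \delta + 2\nu)$ --- for instance $(-\nu, \delta + \nu)$ at $\eps \in [\delta, 2\delta)$, with a margin of $\nu$ on both sides --- yielding a two-component preimage; the extremum-covering elements contribute one-component preimages, and the resulting vertices assemble into a cycle, giving $H_1 = \Z$. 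At $\eps = 2\delta$ the cover widens to size $2\delta$: a direct computation shows that no $\mathcal{W}_{2\delta}$-element lies strictly inside $I_f$ (the inequality $(k+1)\cdot 2\delta + \nu < \delta + 2\nu$ forces $k < 0$, while $k \cdot 2\delta - \nu > -(\delta + 2\nu)$ forces $k \geq 0$), so only $(-2\delta - \nu, \nu)$ and $(-\nu, 2\delta + \nu)$ meet $I_f$ and they contain $\min f_\delta$ and $\max f_\delta$ respectively. The nerve degenerates to a single edge, $H_1 = 0$, and the cycle dies exactly at $2\delta$; hence $\mathrm{D}_1\MM(\mathfrak{W}, f_\delta) = \{(s, 2\delta)\}$.

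For $\mathrm{D}_1\MM(\mathfrak{W}, g_\delta)$ the same analysis applies, but since $I_g$ is twice as wide as $I_f$, the intervals $(-2^{q(\eps)} - \nu,\, \nu)$ and $(-\nu,\, 2^{q(\eps)} + \nu)$ both lie \emph{strictly} inside $(-(2\delta + 2\nu),\, 2\delta + 2\nu)$ for every $\eps \in [s, 2\delta]$, including the critical scale $\eps = 2\delta$ (the margin on each side is $\nu$). They therefore contribute two-component preimages throughout the tower, and the nerve carries a cycle at every $\eps \leq 2\delta$; because the tower is constant past $2\delta$, this cycle survives for all $\eps \geq s$, giving $\mathrm{D}_1\MM(\mathfrak{W}, g_\delta) = \{(s, \infty)\}$. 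The main technical point is the tight inequality at the transition $\eps = 2\delta$: one must verify that no element of the coarser cover lies strictly inside $I_f$ (it fails by a margin of $\nu$) while the corresponding inclusions for $I_g$ \emph{do} hold (with margin $\nu$ on each side). This boundary case, controlled by the amplitudes $\delta + 2\nu$ and $2\delta + 2\nu$, is precisely what separates the behaviours of $f_\delta$ and $g_\delta$ under $\MM$ and exhibits the instability.
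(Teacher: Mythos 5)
Your construction is essentially correct --- the inequalities at the transition scale $\eps=2\delta$ do check out --- but it proves the proposition by a genuinely different mechanism than the paper's. The paper keeps the pathological dyadic tower uncapped and arranges for the loop of $g_\delta$ to straddle the value $0$: since no cover element at \emph{any} scale contains a neighbourhood $(-r,r)$ of $0$, the pullback of every cover, however coarse, separates the loop into several arcs whose nerve carries a $1$-cycle, and the class survives even though the covers coarsen forever. You instead obtain immortality by capping the tower at scale $2\delta$; this is essential to your argument, because without the cap your $g_\delta$-class dies at scale $4\delta$ (there the two elements $(-4\delta-\nu,\nu)$ and $(-\nu,4\delta+\nu)$ each contain one extremum of $g_\delta$ and the nerve collapses to an edge). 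A capped tower is still a legitimate tower of covers and still violates condition (3) of $(c,s)$-goodness, so your example satisfies the letter of the proposition; what it buys is a fully explicit, formula-level verification on $S^1$, at the cost of exhibiting a weaker pathology (a tower that simply stops refining) than the one the surrounding discussion is actually warning about.

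A few points need patching. First, the reduction ``the general case follows by a smooth rescaling of $\R$'' does not work as stated: rescaling the codomain by $\lambda$ turns $\|f_\delta-g_\delta\|_\infty$ into $\lambda\delta$ and does not move the index at which the dyadic length jumps, so the death time would land at the nearest power of $2$ rather than at $2\delta$. The correct fix is to rescale the \emph{tower}, e.g.\ use interval lengths $2\delta\cdot 2^{\min(0,\lfloor\log_2(\eps/2\delta)\rfloor)}$ so that the jump to length $2\delta$ occurs exactly at index $\eps=2\delta$. Second, $\nu<\min(\delta,s)/4$ does not by itself give $2\delta+2\nu\le W$ (take $\delta$ near $W/2$), so you must also impose $\nu<(W-2\delta)/2$; and the choice is vacuous when $s=0$, a case the statement formally allows. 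Finally, to conclude that each diagram is \emph{exactly} the single stated point you still need the connecting maps $H_1(N_\eps)\to H_1(N_{\eps'})$ to be isomorphisms for $s\le\eps\le\eps'<2\delta$; this holds because each simplicial map carries the fundamental cycle of one nerve-circle onto that of the next with degree one, but it deserves a sentence (the paper is equally informal on this point).
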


\subsection{Constructing a good towers of covers.}
\label{appendix:sec:goodcovers}

\newcommand{\basedelta}		{\rho}
\newcommand{\Net}				{{\mathcal{N}}}
\newcommand{\aball}				{{\mathrm{B}}}

Suppose we are given a compact metric space $(Z,d_Z)$ with bounded doubling dimension. We assume that we can obtain an 
\emph{$\initsampling$-sample $P$ of $(Z, d_Z)$}, which is a discrete set of points $P \subset Z$ such that for any point $z \in Z$, $d_Z(z, P) \le \initsampling$. 
For example, if the input metric space $Z$ is a $d$-dimensional cube in the Euclidean space $\mathbb{R}^d$, we can simply choose $P$ to be the set of vertices from a $d$-dimensional lattice with edge length $\initsampling$.  For simplicity, assume that $\initsampling \le 1$ (otherwise, we can rescale the metric to make this hold). 

\subsubsection{A simple ($c,s$)-good tower of covers.} \label{simple-covs}

Consider the following tower of covers $\mathfrak{W} = \{ \mathcal{W}_\eps \mid \eps \ge 2 \initsampling \}$ where 
$\mathcal{W}_\eps := \{ \aball_{\frac{\eps}{2}}(u) \mid u \in P \}$. The associated maps of covers $w_{\eps, \eps'}: \mathcal{W}_\eps \to \mathcal{W}_{\eps'}$ simply sends each element $\aball_{\frac{\eps}{2}}(u)\in \mathcal{W}_\eps$ to the corresponding set $\aball_{\frac{\eps'}{2}}(u) \in \mathcal{W}_{\eps'}$. 
It is easy to see that $\mathfrak{W}$ is $(3, 2\initsampling)$-good. 
Indeed, given any $O \subseteq Z$ with diameter $R = \diam(O) \ge s = 2\initsampling$, pick an arbitrary point $o \in O$ and let $u \in P$ be a nearest neighbor of $o$ in $P$. 
We then have that for any point $x\in O$, 
$$
d_Z(x, u) \le d_Z(x,o) + d_Z(o, u) \le R + \initsampling \le \frac{3R}{2}. $$
That is, $O \subset \aball_{\frac{3R}{2}}(u) \in \mathcal{W}_{3R}$. 

This tower of covers however has large size. In particular, as the scale $\eps$ becomes large, the number of elements in $\mathcal{W}_\eps$ remains the same, while intuitively, a much smaller subset will be sufficient to cover $Z$. 
In what follows, we describe a different construction of a good tower of covers based on using the so-called \emph{nets}. 

\subsubsection{A space-efficient ($c,s$)-good tower of covers.} \label{space-eff-covers}
Following the notations used by Har-Peled and Mendel in \cite{HM06}, we have the following: 

\begin{proposition}[\cite{HM06}]\label{def:nets}
For any constant $\basedelta \ge 11$, and for any scale $\ell \in \mathbb{R}^+$, one can compute a $\rho^\ell$-net $\Net(\ell) \subseteq P$ in the sense that 
\begin{itemize}\denselist
\item [(i)] for any $p \in P$, $d_Z(p, \Net(\ell)) \le \basedelta^\ell$; 
\item [(ii)] any two points $u, v \in \Net(\ell)$, $d_Z(u, v) \ge \basedelta^{\ell-1}/16$, and 
\item [(iii)] $\Net(\ell') \subseteq \Net(\ell)$ for any $\ell < \ell'$; that is, the net at a bigger scale is a subset of net at a smaller scale. 
\end{itemize}
Each $\Net(\ell)$ is referred to as a net at scale $\ell$. 
\end{proposition}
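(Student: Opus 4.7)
The plan is to construct the hierarchy $\{\Net(\ell)\}$ by a top-down greedy (farthest-point) refinement, working from the coarsest scale down to the finest. I would fix some integer $\ell_{\max}$ with $\diam(P)\le \basedelta^{\ell_{\max}}$ and initialize $\Net(\ell_{\max}) := \{p_0\}$ for an arbitrary $p_0\in P$. Then, for each successive integer scale $\ell=\ell_{\max}-1,\ell_{\max}-2,\ldots$, I would initialize $\Net(\ell) := \Net(\ell+1)$ and repeatedly append to $\Net(\ell)$ any $p\in P$ with $d_Z(p,\Net(\ell))>\basedelta^\ell$, halting when no such $p$ remains. For non-integer $\ell$, I would round down to the nearest integer scale; the condition $\basedelta\ge 11$ is there precisely to absorb this rounding and any other constants into the covering/packing bounds.

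Property (i) is immediate from the stopping criterion: when the inner loop at scale $\ell$ halts, every $p\in P$ satisfies $d_Z(p,\Net(\ell))\le \basedelta^\ell$. Property (iii) is also automatic, since each $\Net(\ell)$ is initialized as $\Net(\ell+1)$ and only grows, so $\Net(\ell_{\max})\subseteq\cdots\subseteq \Net(\ell+1)\subseteq \Net(\ell)\subseteq\cdots$.

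Property (ii) is the only one requiring care, but it yields to a clean induction on $\ell$. Suppose inductively that points of $\Net(\ell+1)$ are pairwise at distance at least $\basedelta^\ell/16$. For any newly added $q$ at scale $\ell$, the greedy step guarantees $d_Z(q,x)>\basedelta^\ell$ for every $x$ already in $\Net(\ell)$ at the time of $q$'s addition; this covers both inherited parent points and earlier new points at scale $\ell$. Combined with the inherited bound for pairs of parent points, every pair of points in $\Net(\ell)$ is at distance at least $\min(\basedelta^\ell/16,\basedelta^\ell)=\basedelta^\ell/16\ge \basedelta^{\ell-1}/16$, completing the inductive step and yielding (ii).

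The main obstacle, as I see it, is not verifying (i)--(iii) — these fall out almost for free from the greedy template — but achieving the efficient computability emphasized in \cite{HM06}. A naive implementation scans all of $P$ at every scale, which is wasteful. The near-linear-time construction instead exploits the bounded doubling dimension of $Z$: for each already-placed net point one maintains only a short list of geometrically nearby candidates, and these lists are propagated and pruned across scales using doubling-based packing bounds. That amortization argument, together with the delicate constants that force $\basedelta\ge 11$, is the technically subtle heart of \cite{HM06}; the combinatorial content (i)--(iii) is, in comparison, routine.
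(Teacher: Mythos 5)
This proposition is imported verbatim from \cite{HM06}; the paper offers no proof of its own, so there is no internal argument to compare yours against line by line. Your top-down greedy refinement is the standard existence proof for a nested net hierarchy, and it is essentially correct: (i) and (iii) are immediate as you say, and the greedy insertion rule actually yields the \emph{stronger} separation $d_Z(u,v)\ge \rho^{\lfloor\ell\rfloor}$ at every scale, since the minimum separation satisfies $S(\ell)\ge\min\bigl(S(\ell+1),\rho^{\ell}\bigr)$ with $S(\ell_{\max})=\infty$ at the singleton base case. One point to tighten: as written, your induction carries only the weak hypothesis $\rho^{\ell}/16$, and that bound does \emph{not} survive your rounding step for non-integer scales — for $\ell=2.9$ the inherited bound $\rho^{\lfloor\ell\rfloor-1}/16=\rho/16$ falls short of the required $\rho^{1.9}/16$ when $\rho\ge 11$. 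Carrying the natural invariant $\rho^{\lfloor\ell\rfloor}$ instead repairs this instantly, because $\lfloor\ell\rfloor>\ell-1$ gives $\rho^{\lfloor\ell\rfloor}>\rho^{\ell-1}\ge\rho^{\ell-1}/16$. Your closing diagnosis is accurate and worth keeping: the constants $\rho\ge 11$ and the factor $1/16$ are artifacts of the near-linear-time net-tree construction of \cite{HM06} under bounded doubling dimension; the bare existence of a hierarchy satisfying (i)--(iii) holds, with better constants, for the naive greedy construction you describe.
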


From now on, we set $\eps_i := 4(\rho+1)^i$ for any positive integer $i \in \mathbb{Z}^+$. 
Consider the collection of nets $\{ \Net(i) \mid i \in \mathbb{Z}^+\}$ where $\Net(i)$ is as described in Proposition \ref{def:nets}.

\begin{definition}\label{def:HC-nets}
We define a tower of covers $\mathfrak{U} = \{U_{\eps_i} \mid i \in \mathbb{Z}^+ \}$ where: 
\begin{equation}\label{eqn:Ui}
U_{\eps_i} := \big\{ \aball_{\frac{\eps_i}{2}}(u) \mid u \in \Net(i) \}, ~\text{where}~\eps_i = 4(\rho+1)^i \big\}.
\end{equation}

The associated maps of covers $u_{\eps_i, \eps_j}: \mathcal{U}_{\eps_i} \to \mathcal{U}_{\eps_j}$, for any $i ,j \in \mathbb{Z}^+$ with $i\neq j$, are defined as follows: 
\begin{itemize}\denselist
\item[(1)] $u_{\eps_i, \eps_{i+1}}: \mathcal{U}_{\eps_i} \to \mathcal{U}_{\eps_{i+1}}$: For any element $U = \aball_{\frac{\eps_i}{2}}(u) \in \mathcal{U}_{\eps_i}$, we find the nearest neighbor $v \in \Net(i+1)$ of $u$ in $\Net(i)$, and map $U$ to $V = \aball_{\frac{\eps_{i+1}}{2}}(v) \in \mathcal{U}_{\eps_{i+1}}$. 
\item[(2)] For $i < j-1$, we set $u_{\eps_i, \eps_j}$ as the concatenation $u_{\eps_{j-1}, \eps_j} \circ u_{\eps_{j-2},\eps_{j-1}} \circ \cdots \circ u_{\eps_{i}, \eps_{i+1}}$. 
\end{itemize}
\end{definition}

\begin{remark}
We note that the above tower of covers $\mathfrak{U}$ is discrete in the sense that we only consider covers $\mathcal{U}_\eps$ for a discrete set of $\eps$s from $\{ \eps_i \mid i\in \mathbb{Z}^+\}$. However, one can easily extend it to a tower of covers $\mathfrak{U}^\mathrm{ext}=\{\mathcal{U}_\delta^\mathrm{ext}\}$ which is defined for all $\delta\in\R_+$ by declaring that $\mathcal{U}_\delta^\mathrm{ext} = \mathcal{U}_{\eps_{\zeta(\delta)}}$ where $\zeta(\delta) = \max\{i\in\Z|\,\varepsilon_i\leq \delta\}$. One may define the cover maps in $\mathfrak{U}^\mathrm{ext}$ in a similar manner. For simplicity of exposition, in what follows we use a discrete tower of covers. 
\end{remark}

\begin{claim}\label{claim:coverZ}
For any $i\in \mathbb{Z}^+$, $U_{\eps_i}$ forms a covering of the metric space $Z$ (where $P$ are sampled from). 
\end{claim}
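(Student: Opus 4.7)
The plan is to show that every point $z\in Z$ lies in at least one ball of the form $\aball_{\eps_i/2}(u)$ with $u\in\Net(i)$, for each $i\in\mathbb{Z}^+$. The strategy is a two-step approximation: first approximate $z$ by a point of the initial sample $P$, then approximate that point of $P$ by a point of the net $\Net(i)\subseteq P$. A triangle inequality then shows that $z$ is close enough to some $u\in\Net(i)$ to lie inside the prescribed ball.

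Concretely, I would fix $z\in Z$ and $i\in\mathbb{Z}^+$. Using the fact that $P$ is an $\initsampling$-sample of $Z$, pick $p\in P$ with $d_Z(z,p)\leq \initsampling$. Using property (i) of Proposition~\ref{def:nets} applied to $p\in P$, pick $u\in\Net(i)$ with $d_Z(p,u)\leq \basedelta^{\,i}$. By the triangle inequality,
\[
d_Z(z,u)\ \leq\ d_Z(z,p)+d_Z(p,u)\ \leq\ \initsampling+\basedelta^{\,i}.
\]
It then remains to verify the inequality $\initsampling+\basedelta^{\,i}\leq \tfrac{\eps_i}{2}=2(\basedelta+1)^i$, which will yield $z\in \aball_{\eps_i/2}(u)$ and hence the claim.

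The hardest step is really just this last numerical check, and it is easy: we are assuming $\initsampling\leq 1$ and $\basedelta\geq 11$, so for every $i\geq 1$ we have $2(\basedelta+1)^i\geq 2(\basedelta+1)\geq \basedelta+2\geq \basedelta^{\,i}+\initsampling$ when $i=1$, while for $i\geq 2$ one has $2(\basedelta+1)^i\geq 2\basedelta^{\,i}\geq \basedelta^{\,i}+1\geq \basedelta^{\,i}+\initsampling$. Either way the required inequality holds, and since $z\in Z$ was arbitrary, the collection $U_{\eps_i}$ in (\ref{eqn:Ui}) covers $Z$. I do not foresee any real obstacle: the only subtlety is keeping track of the constants, and the factor $4(\basedelta+1)^i$ in the definition of $\eps_i$ was chosen precisely to make this argument go through with room to spare (indeed, the same computation will later be used to verify the $(c,s)$-goodness of $\mathfrak{U}$).
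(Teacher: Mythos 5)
Your proposal is correct and follows essentially the same argument as the paper: approximate $z$ by a point $p\in P$ within $\initsampling$, approximate $p$ by $u\in\Net(i)$ within $\basedelta^i$ via Proposition~\ref{def:nets}(i), and conclude by the triangle inequality that $d_Z(z,u)\leq \initsampling+\basedelta^i\leq 2(\basedelta+1)^i=\eps_i/2$. The paper closes the numerical check with the single bound $1+\basedelta^i\leq(\basedelta+1)^i$ rather than your case split, but this is an immaterial difference.
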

\begin{proof}
$P$ is an $\initsampling$-sampling of $(Z, d_Z)$ with $\initsampling \le 1$. Hence for any point $x\in Z$, there is a point $p_x \in P$ such that $d(x, p_x) \le \eps$. 
By Property (i) of Definition \ref{def:nets}, $d_Z(p_x, \Net(i)) \le \basedelta^i$. 
Hence $d_Z(x, \Net(i)) \le d_Z(x,p_x) + d_Z(p_x, \Net(i)) \le 1 + \basedelta^i \le (\basedelta+1)^i$. As such, there exists $u \in \Net(i)$ such that $d_Z(x, u) \le (\basedelta+1)^i$; that is, $x\in U =\aball_{\frac{\eps_i}{2}}(u) \in \mathcal{U}_{\eps_i}$. 
\end{proof}

\begin{claim}\label{claim:inclusion}
For any $i < j$ and any $U \in \mathcal{U}_{\eps_i}$, we have $U \subseteq u_{\eps_i, \eps_j}(U)$. 
\end{claim}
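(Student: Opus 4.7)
The plan is to prove the claim by first establishing the single-step inclusion $U \subseteq u_{\varepsilon_i, \varepsilon_{i+1}}(U)$ and then iterating, since by definition $u_{\varepsilon_i, \varepsilon_j}$ is the composition $u_{\varepsilon_{j-1}, \varepsilon_j} \circ \cdots \circ u_{\varepsilon_i, \varepsilon_{i+1}}$. A chain of inclusions composes trivially, so the main content is in the one-step statement.

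First, I would unpack what $u_{\varepsilon_i, \varepsilon_{i+1}}$ does. Given $U = \aball_{\varepsilon_i/2}(u) \in \mathcal{U}_{\varepsilon_i}$ with $u \in \Net(i)$, its image is $V = \aball_{\varepsilon_{i+1}/2}(v)$, where $v \in \Net(i+1)$ is a nearest neighbor of $u$ in $\Net(i+1)$. Since $\Net(i) \subseteq P$, property (i) of Proposition~\ref{def:nets} gives $d_Z(u, \Net(i+1)) \le \basedelta^{i+1}$, so we may assume $d_Z(u,v) \le \basedelta^{i+1}$. For any $x \in U$, the triangle inequality yields
\[
d_Z(x,v) \;\le\; d_Z(x,u) + d_Z(u,v) \;\le\; \tfrac{\varepsilon_i}{2} + \basedelta^{i+1} \;=\; 2(\basedelta+1)^i + \basedelta^{i+1}.
\]

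The key step will be checking the elementary inequality $2(\basedelta+1)^i + \basedelta^{i+1} \le 2(\basedelta+1)^{i+1} = \varepsilon_{i+1}/2$. Rewriting, $2(\basedelta+1)^{i+1} - 2(\basedelta+1)^i = 2\basedelta(\basedelta+1)^i$, so the inequality reduces to $\basedelta^{i+1} \le 2\basedelta(\basedelta+1)^i$, i.e.\ $\basedelta^i \le 2(\basedelta+1)^i$, which is obvious. Hence $x \in V$, proving $U \subseteq u_{\varepsilon_i,\varepsilon_{i+1}}(U)$.

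To finish, I would argue by induction on $j-i$. The base case $j = i+1$ is the above. For $j > i+1$, write $u_{\varepsilon_i,\varepsilon_j}(U) = u_{\varepsilon_{i+1}, \varepsilon_j}(U')$ with $U' = u_{\varepsilon_i,\varepsilon_{i+1}}(U)$; by the one-step case $U \subseteq U'$, and by the inductive hypothesis $U' \subseteq u_{\varepsilon_{i+1}, \varepsilon_j}(U')$, so $U \subseteq u_{\varepsilon_i, \varepsilon_j}(U)$ as claimed. There is no real obstacle here; the only thing to watch is that the choice $\varepsilon_i = 4(\basedelta+1)^i$ was engineered precisely so that the radius growth from $\varepsilon_i/2$ to $\varepsilon_{i+1}/2$ absorbs the net-jump cost $\basedelta^{i+1}$ with room to spare, which is exactly what the arithmetic check above confirms.
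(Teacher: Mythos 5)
Your proof is correct and follows essentially the same route as the paper's: establish the one-step inclusion via property (i) of Proposition~\ref{def:nets} and the triangle inequality, verify the arithmetic $2(\basedelta+1)^i + \basedelta^{i+1} \le 2(\basedelta+1)^{i+1}$, and then compose. The only difference is cosmetic — you spell out the elementary inequality and the induction more explicitly than the paper does.
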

\begin{proof}
We show that the claim holds true for $j = i+1$, and the claim then follows from this and the construction of $u_{\eps_i, \eps_j}$ for $i < j-1$. 

Suppose $U = \aball_{\frac{\eps_i}{2}}(u)$ for $u \in \Net(i)$, and let $V = u_{\eps_i, \eps_{i+1}}(U)$. 
By construction, $V = \aball_{\frac{\eps_{i+1}}{2}}(v)$ is such that $v$ is the nearest neighbor of $u$ in $\Net(i+1)$. 
If $u = v$, then clearly $U \subseteq V$. 

If $u \neq v$, by Property (i) of Definition \ref{def:nets}, $d_Z(u, v) \le \basedelta^{i+1}$. 
At the same time, any point $x\in U$ is within $\frac{\eps_i}{2} = 2 (\basedelta+1)^i$ distance to $u$. 
Hence 
$$d_Z(x, v) \le d_Z(x, u) + d_Z(u, v) \le 2(\basedelta+1)^i + \basedelta^{i+1} < 2(\basedelta+1)^{i+1}. $$
It then follows that $U \subseteq V$. 
\end{proof}

\begin{theorem}\label{thm:netsgivegoodcover}
$\mathfrak{U}$ as constructed above is a $(c, s)$-good tower of covers with $c = s = 4(\basedelta+1)$.  
\end{theorem}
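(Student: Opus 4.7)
The plan is to verify the three conditions of Definition~\ref{def:good-cover} for $\mathfrak{U}$ directly. Conditions (1) and (2) are immediate: $\res(\mathfrak{U})=\eps_1=4(\rho+1)=s$ (and we may assume $s\le\diam(Z)$), and each $W=B_{\eps_i/2}(u)\in\mathcal{U}_{\eps_i}$ has $\diam(W)\le\eps_i$. Passing to the extended tower $\mathfrak{U}^{\mathrm{ext}}$ described in the remark preceding the theorem, both bounds extend to all $\delta\ge s$.

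The real content is Condition (3). Given $O\subseteq Z$ with $R:=\diam(O)\ge s=4(\rho+1)$, I would choose the scale
\[
i^* \;:=\; \min\{\,i\in\mathbb{Z}^+ : (\rho+1)^i \ge R\,\},
\]
so that $(\rho+1)^{i^*-1} < R \le (\rho+1)^{i^*}$; since $R\ge 4(\rho+1)>\rho+1$, we have $i^*\ge 2$. Next, pick any $o\in O$: the argument from the proof of Claim~\ref{claim:coverZ}, combining the $\nu$-sampling property ($\nu\le 1$) with property (i) of Proposition~\ref{def:nets}, produces $u\in\Net(i^*)$ with $d_Z(o,u)\le 1+\rho^{i^*}$. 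For every $x\in O$, the triangle inequality together with $1+\rho^{i^*}\le(\rho+1)^{i^*}$ (valid for $\rho,i^*\ge 1$ by binomial expansion) yields
\[
d_Z(x,u) \;\le\; R + 1 + \rho^{i^*} \;\le\; (\rho+1)^{i^*} + 1 + \rho^{i^*} \;\le\; 2(\rho+1)^{i^*} \;=\; \tfrac{\eps_{i^*}}{2},
\]
so $O\subseteq W:=B_{\eps_{i^*}/2}(u)\in\mathcal{U}_{\eps_{i^*}}$. Minimality of $i^*$ also gives $\eps_{i^*}=4(\rho+1)\cdot(\rho+1)^{i^*-1}<4(\rho+1)\cdot R = cR$, so $W$ lives at a resolution no larger than $cR$; applying the cover map $u_{\eps_{i^*},\eps_{\zeta(cR)}}$ together with Claim~\ref{claim:inclusion} then transports $W$ to an element of $\mathcal{U}_{cR}^{\mathrm{ext}}$ still containing $O$.

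The only subtle point is the simultaneous balancing at scale $i^*$: on one hand, $B_{\eps_{i^*}/2}(u)$ must be large enough to accommodate all of $O$ even after absorbing the off-center slack $1+\rho^{i^*}$ coming from net spacing and sampling; on the other hand, $\eps_{i^*}$ must not exceed $cR$. Bracketing $R$ between two consecutive powers of $(\rho+1)$ is precisely what threads both constraints, and explains why the good-tower parameters come out matched as $c=s=4(\rho+1)$.
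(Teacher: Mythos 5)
Your proof is correct and follows essentially the same route as the paper's: bracket $R=\diam(O)$ between consecutive powers of $\rho+1$, take the net point of $\Net(i^*)$ near an arbitrary $o\in O$ via the $\nu$-sample, and apply the triangle inequality together with $1+\rho^{i^*}\le(\rho+1)^{i^*}$ to fit $O$ inside $B_{\eps_{i^*}/2}(u)$ with $\eps_{i^*}\le cR$. The only (harmless) differences are your choice of the bracketing with the strict inequality on the other side and your explicit use of the extended tower and Claim~\ref{claim:inclusion} to land the covering element exactly at scale $cR$ — a step the paper leaves implicit.
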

\begin{proof}
First, note that by Claim \ref{claim:coverZ}, each $\mathcal{U}_\eps \in \mathfrak{U}$ is indeed a cover for $(Z,d_Z)$. By Claim \ref{claim:inclusion}, the associated maps as constructed in Definition \ref{def:HC-nets} are valid maps of covers. Hence $\mathfrak{U}$ is indeed a tower of covers for $(Z, d_Z)$. 
We now show that $\mathfrak{U}$ is $(c,s)$-good. 

First, by Eqn. \ref{eqn:Ui}, each set $U$ in $\mathcal{U}_{\eps_i}$ obviously has diameter at most $\eps_i$. Also, $\mathfrak{U}$ is $s$-truncated for $s = 4(\basedelta+1)$ since it starts with $\eps_1 = s$. Hence properties 1 and 2 of Definition \ref{def:goodcover-II} hold. 
What remains is to show that property 3 of Definition \ref{def:goodcover-II} also holds. 

Specifically, consider any $O \subset Z$ such that $\diam(O) \ge s$. Set $R = \diam(O)$ and let $a$ be the unique integer such that $$(\basedelta+1)^{a-1} \le R < (\basedelta+1)^{a}. $$
Let $o\in O$ be any point in $O$, and $p\in P$ the nearest neighbor of $o$ in $P$: By the sampling condition of $P$, we have $d_Z(o,p) \le \nu < 1$. Let $u\in \Net(a)$ be the nearest neighbor of $p$ in the net $\Net(a)$.  By Definition \ref{def:nets} (i), $d_Z(p, u) \le \basedelta^a$. 
We thus have that, for any point $x \in O$, 
\begin{align*}
d_Z(x,u) &\le d_Z(x,o) + d_Z(o, p) + d_Z(p,u)\\
 &\le R + \eps + \basedelta^a\\ &< (\basedelta+1)^a + 1 + \basedelta^a\\ &\le 2(\basedelta+1)^a.
\end{align*}
In other words, $O \subseteq \aball_{2(\basedelta+1)^a}(u) = \aball_{\frac{\eps_a}{2}}(u) \in \mathcal{U}_{\eps_a}$. 

On the other  hand, since $R \ge (\basedelta+1)^{a-1}$, we have that for $c = 4(\basedelta+1)$, 
$$\eps_a = 4 (\basedelta+1)^a = c \cdot (\basedelta+1)^{a-1} \le c R = c \cdot \diam(O). $$ 

\end{proof}

\subsection*{Space-efficiency of $\mathfrak{U}$.}
In comparison with the simple $(3, 2\basedelta$)-good tower of covers $\mathfrak{W}$ that we introduced at the beginning of this section, the main advantage of $\mathfrak{U}$ is its much more compact size. Intuitively, this comes from property (ii) of Proposition \ref{def:HC-nets}, which states that the points in the net $\Net(i)$ are sparse and contain little redundancy. 
In fact, consider the covering $\mathcal{U}_{\eps_i}$ at scale $\eps_i$. It size (i.e, the cardinality of $\mathcal{U}_{\eps_i}$) is close to optimal in the following sense: 

For any $\eps$, denote by $\mathcal{V}^*(\eps)$ the smallest possible (in terms of cardinality) covering of $(Z, d_Z)$ such that each element $V \in \mathcal{V}^*(\eps)$ has diameter at most $\eps$. 
Now, let $s^*(\eps) := |\mathcal{V}^*(\eps)|$ denote this optimal size for any covering of $(Z, d_Z)$ by elements with diameter at most $\eps$.  

\begin{proposition} \label{prop:optsize}
For any $i \in \mathbb{Z}^+$, $s^*(\eps_i) \le |\mathcal{U}_{\eps_i}| = | \Net(i) | \le s^*(\frac{\eps_i}{16\basedelta})$. 
\end{proposition}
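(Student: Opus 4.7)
The plan is to establish the chain of inequalities $s^*(\eps_i) \le |\mathcal{U}_{\eps_i}| = |\Net(i)| \le s^*(\eps_i/(16\rho))$ by handling its three parts separately, each of which falls naturally out of one of the two defining properties of nets in Proposition \ref{def:nets} together with the explicit construction in Definition \ref{def:HC-nets}. The first inequality $s^*(\eps_i) \le |\mathcal{U}_{\eps_i}|$ is immediate from the construction: by Claim \ref{claim:coverZ}, $\mathcal{U}_{\eps_i}$ is a cover of $Z$, and every element $\aball_{\eps_i/2}(u) \in \mathcal{U}_{\eps_i}$ has diameter at most $\eps_i$, so $\mathcal{U}_{\eps_i}$ is an admissible candidate in the minimization defining $s^*(\eps_i)$.

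The middle equality $|\mathcal{U}_{\eps_i}| = |\Net(i)|$ is proved by showing that the map $u \mapsto \aball_{\eps_i/2}(u)$ is a bijection from $\Net(i)$ onto $\mathcal{U}_{\eps_i}$. Surjectivity is immediate from Eqn.~(\ref{eqn:Ui}); for injectivity, property (ii) of Proposition \ref{def:nets} gives that any two distinct points $u, u' \in \Net(i)$ are at positive distance $\rho^{i-1}/16 > 0$, so the closed balls around them have distinct centers and are therefore distinct sets. For the final inequality $|\Net(i)| \le s^*(\eps_i/(16\rho))$, I would fix an optimal cover $\mathcal{V}^*$ of $Z$ by sets of diameter at most $\eps_i/(16\rho)$ with $|\mathcal{V}^*| = s^*(\eps_i/(16\rho))$, and construct an injection $\phi : \Net(i) \hookrightarrow \mathcal{V}^*$ by picking, for each $u \in \Net(i) \subseteq Z$, some $V_u \in \mathcal{V}^*$ containing $u$ and setting $\phi(u) := V_u$.

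The heart of the matter is to prove $\phi$ is injective. If two distinct net points $u \ne u'$ shared a common $V = \phi(u) = \phi(u') \in \mathcal{V}^*$, then $d_Z(u, u') \le \diam(V) \le \eps_i/(16\rho)$; this must be contradicted against the packing lower bound $d_Z(u,u') \ge \rho^{i-1}/16$ provided by property (ii). So the crux reduces to the numerical comparison of the admissible cover-diameter $\eps_i/(16\rho)$ against the net separation $\rho^{i-1}/16$, using $\eps_i = 4(\rho+1)^i$ and $\rho \ge 11$; this pinning-down of the constant $16\rho$ in the statement is the main (and essentially only) nontrivial obstacle, the rest being a clean pigeonhole-style argument that then yields $|\Net(i)| \le |\mathcal{V}^*| = s^*(\eps_i/(16\rho))$.
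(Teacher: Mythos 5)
Your overall route is the same as the paper's: the first inequality holds because $\mathcal{U}_{\eps_i}$ is itself an admissible cover by sets of diameter at most $\eps_i$, and the last inequality is a pigeonhole argument pitting the separation of the net $\mathcal{N}(i)$ against the diameter bound on the elements of an optimal cover $\mathcal{V}^*(\eps_i/(16\rho))$. The paper phrases that pigeonhole step as ``each $V\in\mathcal{V}^*$ contains at most one point of $\mathcal{N}(i)$,'' which is exactly the injectivity of your map $\phi$. So there is no difference in strategy.

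The problem is that you have deferred precisely the step that carries all of the content --- the numerical comparison --- and when one actually performs it, it does not go through with the stated constants. For $\phi$ to be injective you need $\diam(V)\leq \eps_i/(16\rho)$ to be \emph{smaller} than the separation $\rho^{i-1}/16$ guaranteed by property (ii). Substituting $\eps_i=4(\rho+1)^i$ gives $\eps_i/(16\rho)=(\rho+1)^i/(4\rho)$, and the required inequality $(\rho+1)^i/(4\rho)<\rho^{i-1}/16$ is equivalent to $4(\rho+1)^i<\rho^{i}$, which is false for every $i\geq 1$ and every $\rho\geq 11$: the admissible diameter exceeds the separation by a factor of $4(1+1/\rho)^i\geq 4$. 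Hence a single $V\in\mathcal{V}^*$ may contain several points of $\mathcal{N}(i)$ and the injection need not exist. Announcing this verification as ``the main (and essentially only) nontrivial obstacle'' and then not carrying it out is a genuine gap, and in this instance the obstacle is real: the argument only works if the scale on the right-hand side of the proposition is decreased to something below the separation, e.g.\ of order $\rho^{i-1}/16$ rather than $\eps_i/(16\rho)$. (The paper's own proof asserts the ``at most one point'' claim without checking the constants and suffers from the same defect, so your instinct about where the difficulty lives was correct --- you just needed to push on it.) A secondary, much more minor point: in the middle equality, ``distinct centers, therefore distinct sets'' is not a valid inference in a general metric space, since two balls with different centers can coincide as subsets of $Z$; the equality $|\mathcal{U}_{\eps_i}|=|\mathcal{N}(i)|$ is really the statement that $\mathcal{U}_{\eps_i}$ is counted as a family indexed by $\mathcal{N}(i)$.
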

\begin{proof}
The left inequality follows from the definition of $s^*(\eps_i)$. 
We now prove the right inequality. 
Specifically, consider the smallest covering $\mathcal{V}^* = \mathcal{V}^*(\frac{\eps_i}{16\basedelta})$ with $s^*(\frac{\eps_i}{16\basedelta}) = |\mathcal{V}^*|$. 
By property (ii) of Proposition \ref{def:HC-nets}, each set $V \in \mathcal{V}^*$ can contain at most one point from $\Net(i)$ since the diameter of $V$ is at most $\frac{\eps_i}{16\basedelta}$. At the same time, we know that the union of all sets in $\mathcal{V}^*$ will cover all points in $\Net(i)$. The right inequality then follows. 
\end{proof}


\end{document}